\journal{ }
\newtheorem{theorem}{Theorem}[section]
\newtheorem{definition}[theorem]{Definition}
\newtheorem{lemma}[theorem]{Lemma}
\newtheorem{proposition}[theorem]{Proposition}
\newtheorem{remark}[theorem]{Remark}
\numberwithin{equation}{section}
\theoremstyle{plain}
\begin{document}

\begin{frontmatter}

\title{Optimal Trade Execution Under Endogenous Pressure to Liquidate: Theory and Numerical Solutions}

\author[uniba]{Pavol Brunovsk\'{y}}
\ead{brunovsky@fmph.uniba.sk}

\author[cass]{Ale\v{s} \v{C}ern\'{y}\corref{ca}}
\cortext[ca]{Corresponding author}
\ead{ales.cerny.1@city.ac.uk}

\author[uniba]{J\'{a}n Komadel}
\ead{komadel4@uniba.sk}

\address[uniba]{Department of Applied Mathematics and Statistics, Comenius University Bratislava, 84248 Bratislava, Slovakia}
\address[cass]{Cass Business School, City, University of London, 106 Bunhill Row, London EC1Y 8TZ, UK}

\begin{abstract}
We study optimal liquidation of a trading position (so-called block order or meta-order) in a market with a linear temporary price impact \citep{kyle.85}. 
We endogenize the pressure to liquidate by introducing a downward drift in the unaffected asset price while simultaneously 
ruling out short sales. In this setting the liquidation time horizon becomes a stopping time determined endogenously, 
as part of the optimal strategy. We find that the \emph{optimal} liquidation strategy is consistent with the 
\emph{square-root law} which states that the average price impact per share is proportional to the 
square root of the size of the meta-order \citep{bershova.rakhlin.13,farmer.al.13,donier.al.15,toth.al.16}.

Mathematically, the Hamilton-Jacobi-Bellman equation of our optimization leads to a severely 
singular and numerically unstable ordinary differential equation initial value problem. 
We provide careful analysis of related singular mixed boundary value problems and devise a numerically 
stable computation strategy by re-introducing time dimension into an otherwise time-homogeneous task.
\end{abstract}

\begin{keyword}
optimal liquidation\sep price impact \sep square-root law \sep singular boundary value problem\sep stochastic optimal control
\MSC[2010] 34A12\sep 49J15\sep 91G80
\end{keyword}

\end{frontmatter}

\section{Introduction}

We study optimal liquidation of an infinitely divisible asset when the
execution price is subject to adverse price impact in proportion to the
amount of the asset sold per unit of time, in line with \cite{kyle.85}. The
optimal liquidation strategy trades off expediency against the adverse price
impact caused by a precipitous sale. However, our focus on liquidation is
not fundamental; \emph{mutatis mutandis }one can replace optimal liquidation
with optimal acquisition in what follows. The novelty in our approach is
that we rule out short sales in a falling market. This seemingly small
change has a profound impact on the economics and mathematics of the
problem. How and why this happens is the subject of the ensuing analysis.

Modelling of optimal execution with market impact is relatively new in the
literature, going back to \cite{almgren.chriss.00}, \cite{bertsimas.lo.98}
and \cite{subramanian.jarrow.01}. Classical models \cite{almgren.chriss.00}, 
\cite{bertsimas.lo.98}, envisage a world where the unaffected price of the
asset is a martingale and hence there is no pressure to trade quickly for an
agent with linear utility. In these circumstances the incentive to trade is
given by fiat -- it is assumed that there is a fixed time limit by which the
entire position must be liquidated.

The literature finds that optimal liquidation gives rise to `implementation
shortfall' 
\citep{perold.88}
defined as the gap between the initial market value of the inventory and the
expected revenue of the liquidation strategy; the latter always being lower
due to the price impact. The shortfall itself is formed of two components,
one due to `permanent price impact' and another caused by `temporary
impact'. The former cannot be influenced by the trading strategy, while the
latter determines the optimal strategy and can be made arbitrarily small by
making the liquidation time horizon longer. In this sense having more time
is unambiguously beneficial to the trader.

A second strand of literature, \cite{brown.al.10}, \cite{chen.al.14}, \cite%
{chen.al.15}, identifies the motive to liquidate with a change in market
conditions whereby tighter margin requirements lead to lower permitted
amount of leverage. The change in market conditions occurs at discrete time
points, while the optimal liquidation (deleveraging) is implemented
continuously in time. Here for reasons of tractability the unaffected price
is assumed constant during liquidation, although one could in principle use
the results from the first strand of literature to make the modelling of the
deleveraging phase more realistic.

In this paper we focus on the liquidation phase. Specifically, we study a
situation where the unaffected price may be falling on average, which is
highly plausible in a market with contracting liquidity. One expects that
with the asset price decreasing the implementation shortfall should be more
severe than in the martingale case. Surprisingly, the current literature
finds that far from exhibiting a shortfall the optimal liquidation strategy
may in this case record an expected surplus, see \cite{schied.13}. On closer
inspection one observes that the surplus arises due to short sale of the
asset with subsequent acquisition at deflated price near the end of the
allotted time horizon.

While strategic short sales in a bear market are not entirely implausible we
feel it is important to examine a situation where such short sales are ruled
out. The simplest way to achieve this is to stop the trading once the entire
position has been liquidated. In doing so we recover the classical outcome
from the martingale case whereby the price impact invariably leads to
implementation shortfall. However, in a falling market without short sales
it is no longer true that the shortfall can be made arbitrarily small by
extending the liquidation time horizon.

Introduction of a stopping time is a novel feature in the optimal
liquidation literature with a perfectly divisible asset. Previously, optimal
stopping has appeared only in the context of optimal liquidation of an
indivisible asset, see \cite{mamer.86} and \cite{henderson.hobson.13}.
Although stopping on liquidation automatically precludes short sales, it
does leave open the possibility of further intermediate acquisition. Ex-post
it turns out that intermediate acquisition is not optimal, see Proposition~%
\ref{prop: admissibility} and Theorem~\ref{thm: optimality}. We show that
the presence of the stopping time dramatically changes mathematical
properties of the Hamilton-Jacobi-Bellman equation and leads to a severely
singular and numerically unstable initial value problem. Part of our
research contribution is in providing a comprehensive theoretical and
numerical analysis of this HJB equation and related singular boundary value
problems.

The paper is organized as follows. In Section 2 we survey related literature
and present our model. In Section 3 we discuss reduction of our HJB partial
differential equation (PDE) to an ordinary differential equation (ODE).
Section 4 offers a probabilistic and control-theoretic interpretation of
this reduction. Section 5 describes the singularity of the initial value
problem (IVP) for the ODE of Section 3, while Section 6 shows how to obtain
uniqueness from a related boundary value problem (BVP). In Section 7 we
characterize the optimal strategy and its value function by means of the BVP
of Section 6. In Section 8 we introduce and theoretically analyze a related
PDE BVP which leads to a stable numerical scheme and present numerical
results. Section 9 concludes.

\section{Our model and related literature}

We take the point of view of a trader with inventory $Z$ whose initial value 
$Z(0)>0$ is given. The modelling is based on the premise that there is some
price process $S$ -- often called the `unaffected price'\ -- with
exogenously given dynamics that governs the evolution of the asset price in
the absence of our trading. In our case the unaffected price $S$ is a
geometric Brownian motion%
\begin{equation}
dS(t)=\lambda S(t)dt+\sigma S(t)dB(t),  \label{0.4}
\end{equation}%
where $B$ is a Brownian motion in its natural filtration.

The inventory attracts interest rate $r,$ which becomes a storage cost when $%
r<0$. We assume that the inventory is sold off continuously at a
(stochastic) rate $v:=-dZ/dt$ so that $v$ represents the amount of inventory
sold per unit of time. Consequently, the inventory dynamics read 
\begin{equation}
dZ(t)=\left( rZ(t)-v(t)\right) dt.  \label{0.2c}
\end{equation}

Let $T(Z=0)$ be the first time when the entire inventory is disposed of. For
a given pair of initial values 
\begin{equation}
s=S(0),z=Z(0),  \label{eq: s,z}
\end{equation}%
the expected discounted revenue from the disposal of the asset is given by%
\begin{equation}
J(s,z,v)=E_{s,z}\left[ \int_{0}^{T(Z=0)}e^{-\rho t}(S(t)-\eta v(t))v(t)dt%
\right] ,  \label{0.2b}
\end{equation}%
where $S-\eta v$ is the `affected price'\ of the asset. In our setting $\eta 
$ measures the strength of `temporary impact'\ the selling speed $v$ has on
the price. The discount factor $\rho $ captures the opportunity cost of not
holding alternative assets. The entire model is based on \cite{cerny.99}.

The task is to find optimal liquidation strategy $v$ that maximizes 
\begin{equation}
V(s,z):=\sup_{v\in \mathcal{A}}J(s,z,v).  \label{eq: (L)}
\end{equation}%
We say that $v$ is an admissible control, and write $v\in \mathcal{A}$, if
process $v$ is predictable,%
\begin{equation}
E\left[ \int_{0}^{t}\left\vert v(s)\right\vert ^{m}ds\right] <\infty \text{
for all }t>0\text{ and }m=1,2,\ldots ,  \label{eq: adm1}
\end{equation}%
and 
\begin{equation}
E\left( \int_{0}^{T(Z=0)}e^{-\rho s}\left\vert v(t)(S(t)-\eta
v(t))\right\vert dt\right) <\infty .  \label{eq: adm2}
\end{equation}

The optimization in our model can be seen, for specific parameter choices,
as a special case of \cite{ankirchner.kruse.13}, \cite{forsyth.al.12} and 
\cite{schied.13}, with the crucial difference that in our case the
liquidation time horizon is endogenous. We make a standing assumption that
the time discounting is stronger than the expected appreciation and the
interest on the asset combined, 
\begin{equation}
\rho >\lambda +r.  \label{eq: rho ineq}
\end{equation}

To conclude this section we wish to make several observations that justify
the choice of our modelling framework. The extant literature contains a
number of variations on the model presented above. The trading may be
discrete, rather than continuous, the unaffected price $S$ may be specified
differently and the optimization criterion may involve a utility function.
In common, existing models assume $T$ is fixed and exogenously given.

The most commonly considered specification for the affected price reads 
\begin{equation}
\tilde{S}:=S-\gamma (Z(0)-Z_{-}-\frac{1}{2}\Delta Z)-\eta _{1}v+\eta
_{2}\Delta Z,\text{ }  \label{eq: impact}
\end{equation}%
where $\gamma (Z(0)-Z_{-}-\frac{1}{2}\Delta Z)$ is the `permanent'\ price
impact\footnote{%
Note that this classification of permanent impact differs subtly from the
one used in \cite{almgren.chriss.00} and subsequent literature. In our
classification permanent impact has no strategic effect on optimal execution
when $S$ is a martingale.} while $\eta _{1}v$ and $\eta _{2}\Delta Z,$
respectively, are known as `temporary'\ price impacts in the continuous-time
and discrete-time literature, respectively. It is assumed either that there
is a finite number of fixed dates $\{t_{i}\}_{i=1}^{N}$ where $Z$ is allowed
to jump (discrete-time models) or that $Z$ changes continuously at a
stochastic time rate $-v$ (continuous-time models). In each case $Z$ is
taken to be a predictable semimartingale with left limit process $Z_{-}$ and
jumps $\Delta Z=Z-Z_{-}$. Models in this category include \cite%
{ankirchner.al.16}, \cite{brown.al.10}, \cite{chen.al.14}, \cite%
{gatheral.schied.11}, \cite{schied.13}, \cite{schied.schoneborn.09} \cite%
{ting.al.07} in continuous time and{\footnotesize \ }\cite{almgren.chriss.00}%
, \cite{bertsimas.lo.98} in discrete time. Other impact specifications can
be found, for example, in \cite{chen.al.15}, \cite{cheridito.sepin.14}, \cite%
{forsyth.11}, \cite{lorenz.almgren.11}, \cite{subramanian.jarrow.01} and 
\cite{ting.al.07}.

The revenue $R(T)$ from liquidation over a fixed time horizon $T$ is given
by $R(T):=\int_{0}^{T}-\tilde{S}(t)dZ(t)$. When the unaffected asset price
process $S$ is a martingale, integration by parts together with suitable
boundedness of $Z$ and boundary condition $Z(T)=0$ yields%
\begin{equation*}
E[R(T)]=Z(0)S(0)-\frac{\gamma }{2}Z(0)^{2}-\eta
_{1}\int_{0}^{T}v^{2}(t)dt-\eta _{2}\sum_{i=1}^{N}\left( \Delta
Z(t_{i}\right) ^{2}.
\end{equation*}%
This equality offers several important insights:

\begin{enumerate}
\item Permanent impact (as defined here) has no strategic influence and in
the absence of temporary impact ($\eta _{1}=\eta _{2}=0$) any strategy $Z$
is optimal. The expected implementation shortfall $Z(0)S(0)-E[R(T)]=\frac{%
\gamma }{2}Z(0)^{2}$ is strictly positive.

\item With temporary impact it is optimal to liquidate at a constant rate,
regardless of the strength of the permanent impact. The additional
implementation shortfall equals $\eta _{1}Z(0)^{2}/T$ in continuous time and 
$\eta _{2}Z(0)^{2}/N$ in discrete time, respectively.
\end{enumerate}

These observations suggest that temporary impact is responsible for the
majority of strategic interaction also in the drifting market and we
conjecture that the optimal \emph{strategy} will therefore not change
dramatically when the permanent impact is included. This is not to say that
the implementation \emph{shortfall} would be unaffected by the presence of
permanent impact. Given the complexity of analysis to follow and the likely
marginal gains to our understanding from the presence of permanent impact on
the optimal trading strategy we feel justified in leaving out the permanent
impact from our analysis.

More recent studies, excellently summarized in \cite{gatheral.10}, consider
an intermediate form of impact where the execution price is given by the
formula%
\begin{equation*}
S_{t}-\int_{0}^{t}f(v_{u})G(t-u)du.
\end{equation*}%
Kernel $G$ is called the \emph{resiliency} of the market and the two extreme
cases, permanent impact and temporary impact, correspond to $G$ being
constant or $G$ being the Dirac delta function, respectively. In \cite%
{gatheral.10} a case is made for a combination of power impact function, $%
f(v)=v^{\delta }$, with power law resiliency $G(x)=x^{-\gamma }$, $\delta
+\gamma \geq 1$, the latter tending to a Dirac delta function as $\gamma
\searrow 0$. We note that our setup corresponds to the limiting case $\delta
=1,\gamma =0$ and we leave the analysis of the general impact function $f$
with general resiliency $G$ in the setup of this paper to future research.

\section{HJB equation and dimension reduction}

The value function $V$ defined in (\ref{eq: (L)})\ formally solves the
Hamilton-Jacobi-Bellman partial differential equation%
\begin{equation*}
\sup_{v}\{\frac{1}{2}s^{2}\sigma ^{2}V_{ss}+\lambda sV_{s}+(rz-v)V_{z}-\rho
V+v(s-\eta v)\}=0,\ s>0,z>0,
\end{equation*}%
with formal optimal control%
\begin{equation*}
v^{\ast }=\frac{s-V_{z}}{2\eta },
\end{equation*}%
giving rise to a quasilinear second order PDE 
\begin{equation}
\frac{1}{2}s^{2}\sigma ^{2}V_{ss}+\lambda sV_{s}+rzV_{z}-\rho V+\frac{%
(s-V_{z})^{2}}{4\eta }=0,  \label{eq: HJB}
\end{equation}%
with an initial condition%
\begin{equation}
V(s,0)=0.  \label{eq: V(s,0)}
\end{equation}

The self-similarity 
\begin{equation}
V(s,z)=s^{2}u(x)/(\eta \sigma ^{2}),\ x=\eta \sigma ^{2}z/s,  \label{scaling}
\end{equation}%
reduces (\ref{eq: HJB}, \ref{eq: V(s,0)}) to an initial value problem (IVP)
for an ordinary differential equation (ODE)%
\begin{align}
x^{2}u^{\prime \prime }& =axu^{\prime }+bu-(u^{\prime }-1)^{2}/2,\ x>0,
\label{0} \\
u(0)& =0,  \label{0.1}
\end{align}%
where 
\begin{equation}
a:=2(\lambda -r+\sigma ^{2})/\sigma ^{2},\ b:=-2(2\lambda -\rho +\sigma
^{2})/\sigma ^{2}.  \label{eq: abc def}
\end{equation}%
The self-similarity reduces a problem with 7 independent parameters $\rho
,\lambda ,r,\sigma ,\eta ,s\equiv S(0)$ and $z\equiv Z(0)$ to a problem with
just three parameters: $a,$ $b$ and $x:=\eta \sigma ^{2}z/s$.

\section{Probabilistic interpretation of self-similarity}

We begin by restating the HJB equation (\ref{eq: HJB}) in its variational
form,%
\begin{equation*}
\sup_{v(0)}\left\{ \text{drift}_{0}\left( e^{-\rho t}V(S,Z)\right)
+v(0)\left( S(0)-\eta v(0)\right) \right\} =0.
\end{equation*}%
Plug in the self-similarity form of the value function $V(S,Z)=S^{2}u\left(
\eta \sigma ^{2}Z/S\right) /(\eta \sigma ^{2})$ and rearrange to obtain%
\begin{equation*}
\sup_{v(0)}\left\{ \text{drift}_{0}\left( e^{-\rho t}\frac{S^{2}}{S(0)^{2}}%
u\left( \frac{\eta \sigma ^{2}Z}{S}\right) \right) +\eta \sigma ^{2}\frac{%
v(0)}{S(0)}\left( 1-\eta \frac{v(0)}{S(0)}\right) \right\} =0.
\end{equation*}%
The next steps involve i) changing measure to $\hat{P}$ given by $\frac{d%
\hat{P}_{t}}{dP_{t}}=\frac{S(t)^{2}}{S(0)^{2}}e^{-\left( 2\lambda +\sigma
^{2}\right) t}$ where $\hat{P}_{t}$ and $P_{t}$ are restrictions of $\hat{P}$
and $P$ to $\mathcal{F}_{t}$; ii) defining a new state variable $X:=\eta
\sigma ^{2}Z/S$; and iii) reparametrizing the control to $g:=\eta v/S,$
which yields%
\begin{equation}
\sup_{g(0)}\left\{ \widehat{\text{drift}}_{0}\left( e^{(2\lambda +\sigma
^{2}-\rho )t}u\left( X\right) \right) +\sigma ^{2}g(0)\left( 1-g(0)\right)
\right\} =0.  \label{eq: HJB2}
\end{equation}

The It\={o} formula for $X$ reads%
\begin{equation*}
dX=\left( rX-\sigma ^{2}g\right) dt+X\left( -\frac{dS}{S}+\frac{d[S,S]}{S^{2}%
}\right) ,
\end{equation*}%
while from the Girsanov theorem we obtain $\widehat{\text{drift}}\left( 
\mathcal{L}(S)\right) =\lambda +2\sigma ^{2}$, which implies 
\begin{equation}
dX=\left( (r-\lambda -\sigma ^{2})X-\sigma ^{2}g\right) dt+\sigma ^{2}Xd\hat{%
B},  \label{eq: SDE X}
\end{equation}%
where $\hat{B}:=-\mathcal{L}(S)+(\lambda +2\sigma ^{2})t$ is a Brownian
motion under $\hat{P}$ and $\mathcal{L}(S)$ denotes the stochastic logarithm
of $S,$ $d\mathcal{L}(S)=dS/S$. In the final step we perform a time change
from $t$ to $\sigma ^{2}t,$ defining $\hat{X}(t):=X(t/\sigma ^{2})$ and $%
\hat{W}(t):=\sigma ^{2}\hat{B}(t/\sigma ^{2})$. This yields the dynamics%
\begin{equation}
d\hat{X}=\left( \frac{r-\lambda -\sigma ^{2}}{\sigma ^{2}}\hat{X}-g\right)
dt+\hat{X}d\hat{W},  \label{eq: SDE Xhat}
\end{equation}%
while (\ref{eq: HJB2}) changes to 
\begin{equation}
\sup_{g(0)}\left\{ \widehat{\text{drift}}_{0}\left( \exp \left( \frac{%
2\lambda +\sigma ^{2}-\rho }{\sigma ^{2}}t\right) u(\hat{X})\right)
+g(0)\left( 1-g(0)\right) \right\} =0.  \label{eq: HJB3}
\end{equation}

With (\ref{eq: SDE Xhat}) in hand the optimality condition (\ref{eq: HJB3})
explicitly reads%
\begin{eqnarray*}
0 &=&\frac{1}{2}x^{2}u^{\prime \prime }(x)+\frac{r-\lambda -\sigma ^{2}}{%
\sigma ^{2}}xu^{\prime }(x) \\
&&+\frac{2\lambda +\sigma ^{2}-\rho }{\sigma ^{2}}u(x)+\frac{1}{4}\left(
1-u^{\prime }(x)\right) ^{2},
\end{eqnarray*}%
and the (formal) optimal control equals $g=(1-u^{\prime }(\hat{X}))/2$. It
is furthermore clear that (\ref{eq: HJB3}) itself is a HJB\ equation of an
optimal control problem 
\begin{equation}
u(x)=\sup_{g}\widehat{E}_{x=\hat{X}(0)}\left[ \int_{0}^{T(\hat{X}=0)}\exp
\left( -\frac{\rho -2\lambda -\sigma ^{2}}{\sigma ^{2}}t\right)
g(t)(1-g(t))dt\right] ,  \label{eq: u(x) optimization}
\end{equation}%
with $\hat{P}$-dynamics of $\hat{X}$ given by (\ref{eq: SDE Xhat}).

Note that the time in the transformed problem (\ref{eq: u(x) optimization})
is measured in terms of cumulative variance of the log return of the
unaffected price, that is in `variance years'. One variance year\
corresponds to the physical time $t$ it takes to make $\sigma ^{2}t=1$. With 
$\sigma =0.2$ one variance year is therefore equal to 25 calendar years. The
new state variable $\hat{X}=\eta \sigma ^{2}Z/S$ corresponds to the size of
temporary price impact as a percentage of current price, assuming inventory $%
Z$ is completely liquidated at a constant rate over one variance year.

\section{Singular initial value problem IVP$_{0}$}

Hereafter we refer to the IVP (\ref{0}, \ref{0.1}) as $\mathrm{IVP}_{0}$.
Note $a+b>0$ if and only if our standing assumption (\ref{eq: rho ineq}) $%
\rho >r+\lambda $ holds. It has been shown in \cite{brunovsky.al.13} that $%
\mathrm{IVP}_{0}$ is highly degenerate at $0.$ For $a+b>0$ $\mathrm{IVP}_{0}$
has infinitely many solutions with identical asymptotics near $0$ given by
the formal power series%
\begin{equation}
h_{n}(x)=x-\frac{2}{3}\sqrt{2(a+b)}x^{3/2}+\sum_{i=2}^{n}k_{i}x^{1+i/2},%
\text{ }n\in \mathbb{N},  \label{eq: u(x) series}
\end{equation}%
where $k_{i}$ are obtained recursively from 
\begin{eqnarray*}
k_{n+1} &=&\frac{1}{3\left( n+3\right) k_{1}}\bigg[k_{n}\left( \left(
n+2\right) (2a-n)+4b\right) \\
&&-\frac{1}{2}\sum_{j=1}^{n-1}(3+j)(n-j+3)k_{j+1}k_{n-j+1}\bigg].
\end{eqnarray*}%
The series itself has zero\textbf{\ }radius of convergence for 
\begin{equation*}
6a+4b-3=:K_{1}>0>K_{2}:=6a+2b-9,
\end{equation*}%
see {\cite[Remark 2]{quittner.15}}. Asymptotic expansion of derivatives of $%
u(x)$ is obtained by formal differentiation of the series in (\ref{eq: u(x)
series}), \emph{ibid} Theorem 1. Whenever $K_{1}=0$ or $K_{2}=0$ the power
series ends at the 3rd element and constitutes a genuine solution of $%
\mathrm{IVP}_{0}$. This solution, however, is just one from a continuum and
does not represent the optimal value function.

The highly degenerate nature of $\mathrm{IVP}_{0}$ does not stem from the
singularity of the linear terms in the ODE, which is well known and rather
innocuous in the context of the Black-Scholes model, but from the
singularity of the non-linear term. \cite{liang.09} studies singular IVPs of
the form $u^{\prime \prime }=x^{-1}f(x,u,u^{\prime })$ where $f$ is
continuous. Note that the linear part of our ODE, $ax^{-1}u^{\prime },$
belongs to Liang's category, but the non-linear term $x^{-2}\left( u^{\prime
}-1\right) ^{2}/2$ does not.

Liang, too, observes multiplicity of solutions, but this multiplicity is
less pronounced than in our case. In Liang's work $u(0)$ and $u^{\prime }(0)$
uniquely determine the first $\left\lceil \gamma \right\rceil $ derivatives
of the solution, where $\gamma :=\frac{\partial }{\partial u^{\prime }}%
f(0,u(0),u^{\prime }(0))>0$, and, for non-integer $\gamma $, the solution
becomes unique once the coefficient by $x^{\gamma }$ has been specified.
Therefore, in Liang's case all solutions differ asymptotically by a multiple
of $x^{\gamma }$ near $0$.

In contrast, $\mathrm{IVP}_{0}$ has a continuum of solutions that differ
asymptotically by 
\begin{equation*}
x^{\alpha }\exp \left( -\beta /\sqrt{x}\right) ,
\end{equation*}%
where 
\begin{equation*}
\alpha :=2-\frac{2}{3}b,\qquad \beta :=\sqrt{8(a+b)},
\end{equation*}%
see {\cite[Theorems 2-5]{quittner.15}. }These solutions invariably share
their power series asymptotics to an arbitrary order as $x\searrow 0$. A
uniqueness result relevant for the current paper can be summarized as
follows:

\begin{proposition}
\label{bcw} Under the assumption $a+b>0$ there is a unique solution of $%
\mathrm{IVP}_{0}$ denoted by $u_{\infty }$ satisfying $u_{\infty }\in 
\mathcal{C}^{0}[0,\infty )\times \mathcal{C}^{2}(0,\infty )$,%
\begin{equation}
0\leq u_{\infty }(x)\leq x\text{ for }x>0.  \label{uhoredolu}
\end{equation}%
The solution $u_{\infty }$ further satisfies $u_{\infty }^{\prime
}(0)=1,u_{\infty }^{\prime }(x)>0,u_{\infty }^{\prime \prime
}(x)<0,u_{\infty }^{\prime \prime \prime }(x)>0$ for all $x>0$ as well as $%
u_{\infty }^{\prime }(x)\searrow 0$ for $x\rightarrow \infty $.
\end{proposition}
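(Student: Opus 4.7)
My plan is to construct $u_\infty$ as the value function of the stochastic control problem (\ref{eq: u(x) optimization}), derive the bounds and the monotonicity/concavity properties by coupling arguments, establish the fine behaviour by direct ODE analysis, and obtain uniqueness by invoking the classification of solutions of $\mathrm{IVP}_0$ from \cite{quittner.15}.

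Define $u(x)$ as the value of the optimization on the right-hand side of (\ref{eq: u(x) optimization}), restricting without loss of generality to controls $g \in [0,1]$ since the running reward $g(1-g)$ is negative outside this interval. The lower bound $u(x) \geq 0$ is then immediate by taking $g \equiv 0$, which makes $\hat{X}$ a strictly positive geometric Brownian motion and produces zero reward. For the upper bound $u(x) \leq x$, I would verify that $\bar u(x) := x$ is a classical supersolution of $\mathrm{IVP}_0$: substituting $\bar u' = 1$ and $\bar u'' = 0$ into the right-hand side of (\ref{0}) gives $(a+b)x$, which exceeds the vanishing left-hand side under the standing assumption $a+b > 0$; an It\^o-based verification argument then delivers $u \leq \bar u$. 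Weak monotonicity $u' \geq 0$ follows by coupling: the state SDE (\ref{eq: SDE Xhat}) is linear in $\hat X$, so larger initial data produces a pathwise larger trajectory under identical control and Brownian motion, hence a later first passage to zero and a weakly larger expected reward. Weak concavity $u'' \leq 0$ is the standard consequence of linear state dynamics combined with state-independent running reward.

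That $u$ actually solves $\mathrm{IVP}_0$ and belongs to $\mathcal{C}^2(0,\infty)$ follows from the dynamic programming principle together with interior Schauder estimates for the uniformly elliptic operator arising on compacta of $(0,\infty)$; the boundary data $u(0)=0$ and $u'(0)=1$ are then forced by the universal asymptotic expansion (\ref{eq: u(x) series}). The strict inequalities $u'(x) > 0$ and $u''(x) < 0$ on $(0,\infty)$ come from the strong maximum principle applied to the linear ODEs for $u'$ and $u''$ obtained by differentiating (\ref{0}), together with the observation that $u$ cannot be locally affine without contradicting (\ref{eq: u(x) series}). The limit $u'(x) \searrow 0$ as $x \to \infty$ is obtained as follows: monotonicity of $u'$ gives a limit $\ell \in [0,1]$, and $\ell > 0$ would make the right-hand side of (\ref{0}) grow like $(a+b)\ell\,x \to +\infty$, in direct contradiction with $x^2 u''(x) \leq 0$. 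For $u'''(x) > 0$ I would differentiate (\ref{0}) once more to obtain the first-order linear equation
\begin{equation*}
x^2 (u'')' + \bigl((a-2)x + 1 - u'\bigr) u'' = (a+b)\, u',
\end{equation*}
and extract the sign of $u'''$ from the variation-of-parameters representation using the boundary data $u''(0+) = -\infty$ (from (\ref{eq: u(x) series})) and $u''(x) \to 0$ as $x \to \infty$.

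The main obstacle is uniqueness. Here I would appeal to \cite[Theorems 2--5]{quittner.15}, which present solutions of $\mathrm{IVP}_0$ as a one-parameter family indexed by the coefficient of the exponentially small correction $x^{\alpha}\exp(-\beta/\sqrt x)$ near zero. Any candidate $\tilde u \neq u_\infty$ satisfying (\ref{uhoredolu}) must share the full polynomial asymptotics of $u_\infty$ but differ by $c\,x^{\alpha} e^{-\beta/\sqrt x}$ with $c \neq 0$, so either $\tilde u > u_\infty$ or $\tilde u < u_\infty$ on a right-neighbourhood of zero. In the first case, at any point $x_0$ where $\tilde u$ first touches the barrier $y = x$ the tangency conditions $\tilde u(x_0) = x_0$, $\tilde u'(x_0) = 1$ plug into (\ref{0}) to give $\tilde u''(x_0) = (a+b)/x_0 > 0$, contradicting the requirement that $x_0$ be a local maximum of $\tilde u - x$; hence $\tilde u$ must violate (\ref{uhoredolu}) strictly before any such point. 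In the second case, the verification characterisation of $u_\infty$ as the value function of (\ref{eq: u(x) optimization}) rules out any competing solution with $\tilde u < u_\infty$ that would still satisfy the admissibility conditions. Making the dependence on the hidden parameter $c$ fully rigorous — so that the split into $c > 0$ and $c < 0$ maps cleanly onto the two failure modes — is the genuinely delicate step and would proceed by importing the shooting and comparison arguments developed in \cite{brunovsky.al.13}.
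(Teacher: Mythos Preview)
The paper's own proof of this proposition is a single-line citation to Proposition~5.1 of \cite{brunovsky.al.13}; no argument is reproduced. Your proposal is therefore not a reconstruction of the paper's proof but an independent attack, and it reverses the paper's logical order: you build $u_\infty$ as the value function of (\ref{eq: u(x) optimization}) and read off its ODE properties, whereas the paper first secures $u_\infty$ as a pure ODE object and only later (Theorem~\ref{thm: optimality}) identifies it with the value function via a verification argument. Your route is legitimate in principle and makes the bounds $0\le u\le x$ and the weak monotonicity/concavity very transparent; the price is that the $\mathcal{C}^2$ regularity of the value function, which you dispatch with a line about ``interior Schauder estimates'', is precisely the kind of thing that in this paper's setting requires independent work.

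Two concrete problems. First, your differentiated ODE has a sign error: differentiating (\ref{0}) gives
\[
x^{2}u''' \;=\; \bigl((a-2)x + 1 - u'\bigr)\,u'' + (a+b)\,u',
\]
not the version with the $u''$ term moved to the left-hand side; compare (\ref{eq: dODE}). This matters for the variation-of-parameters argument you sketch for $u'''>0$.

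Second, and more seriously, your uniqueness argument does not close. In the ``first case'' you take a competitor $\tilde u$ with $\tilde u>u_\infty$ near $0$ and argue about a first touching point with the barrier $y=x$. But the tangency computation you give only shows that $\tilde u$ \emph{cannot} touch $y=x$ from below --- at such a point $\tilde u''(x_0)=(a+b)/x_0>0$, which is incompatible with $x_0$ being a maximum of $\tilde u - x$. That is a barrier argument \emph{in favour} of $\tilde u<x$, not against it: it in no way forces $\tilde u$ to violate (\ref{uhoredolu}). So nothing you have written excludes a second solution sitting strictly between $u_\infty$ and $x$. Your ``second case'' is equally incomplete: $\tilde u<u_\infty$ together with $\tilde u$ satisfying the bounds does not by itself contradict the value-function characterisation of $u_\infty$, since you have not argued that $\tilde u$ would also have to coincide with the value function. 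You correctly flag that the real work lies in the shooting/comparison machinery of \cite{brunovsky.al.13}; but as written, the uniqueness portion of your proposal is a gap rather than a sketch.
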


\begin{proof}
See Proposition 5.1 in \cite{brunovsky.al.13}.
\end{proof}

Proposition \ref{prop: alternatives} reveals certain qualitative
characteristics of solutions of $\mathrm{IVP}_{0}$ which can be observed
empirically whenever an unstable numerical scheme is employed.

\begin{proposition}
\label{prop: alternatives}Any solution of (\ref{0}) on $(\alpha ,\beta )$
with $0\leq \alpha <\beta \leq \infty $ falls into one and only one of the
following categories:

i) $u$ is constant;

ii) $u$ is strictly concave on $(\alpha ,\beta )$;

iii) $u$ is strictly convex on $(\alpha ,\beta )$;

iv)\ there is $x_{0}\in (\alpha ,\beta )$ such that $u$ is strictly concave
on $(\alpha ,x_{0})$, strictly convex on $(x_{0},\beta )$ and $u^{\prime
}(x)\geq u^{\prime }(x_{0})>0$ for all $x\in \left( \alpha ,\beta \right) $;

v)\ there is $x_{0}\in (\alpha ,\beta )$ such that $u$ is strictly convex on 
$(\alpha ,x_{0})$, strictly concave on $(x_{0},\beta )$ and $u^{\prime
}(x)\leq u^{\prime }(x_{0})<0$ for all $x\in \left( \alpha ,\beta \right) $.
\end{proposition}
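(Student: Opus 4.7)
The plan is to classify by tracking the sign behaviour of $u''$ on $(\alpha,\beta)$. Define $F(x):=x^{2}u''(x)$, so that the ODE (\ref{0}) reads $F(x)=axu'(x)+bu(x)-(u'(x)-1)^{2}/2$. Differentiating and regrouping,
\begin{equation*}
F'(x)=(a+b)u'(x)+\bigl(ax+1-u'(x)\bigr)u''(x).
\end{equation*}
Hence at any interior point $x_{0}\in(\alpha,\beta)$ with $u''(x_{0})=0$ we have the clean identity $F'(x_{0})=(a+b)u'(x_{0})$, and since $a+b>0$ under the standing assumption (\ref{eq: rho ineq}), the sign of $F'(x_{0})$ equals the sign of $u'(x_{0})$.

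Next I would dispose of the degenerate configuration $u'(x_{0})=u''(x_{0})=0$. Plugging both into (\ref{0}) at such an $x_{0}$ forces $bu(x_{0})=1/2$; if $b=0$ this is impossible, and if $b\neq 0$ then $u(x_{0})=1/(2b)$ with $u'(x_{0})=0$, whence by standard ODE uniqueness away from the singular point $x=0$ the solution coincides globally with the constant $u\equiv 1/(2b)$, i.e.\ we are in case (i). So assuming $u$ is non-constant, $u'(x_{0})\neq 0$ at every zero of $u''$, which makes every such zero simple and forces $u''$ to change sign strictly there: from negative to positive if $u'(x_{0})>0$ (concave to convex) and from positive to negative if $u'(x_{0})<0$ (convex to concave).

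The main obstacle, and the heart of the argument, is ruling out two sign changes. Suppose for contradiction there exist two consecutive zeros $x_{1}<x_{2}$ of $u''$ in $(\alpha,\beta)$, so that $u''$ has constant sign on $(x_{1},x_{2})$. If $u''>0$ on this interval, the sign-change analysis at $x_{1}$ and $x_{2}$ forces $u'(x_{1})>0$ and $u'(x_{2})<0$; but $u''>0$ makes $u'$ strictly increasing on $(x_{1},x_{2})$, contradiction. The case $u''<0$ on $(x_{1},x_{2})$ is symmetric: $u'(x_{1})<0$, $u'(x_{2})>0$, while $u'$ is strictly decreasing, again a contradiction. Hence $u''$ has at most one zero in $(\alpha,\beta)$.

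It remains to read off the five alternatives. If $u$ is constant we are in (i). Otherwise $u''$ is continuous and non-vanishing, except possibly at one point $x_{0}$. If $u''$ has no zero, its sign is constant, giving (ii) or (iii). If $u''$ has a single zero $x_{0}$ with $u'(x_{0})>0$, then $u''<0$ on $(\alpha,x_{0})$ and $u''>0$ on $(x_{0},\beta)$; monotonicity of $u'$ on each side yields $u'(x)>u'(x_{0})$ for $x\neq x_{0}$, giving case (iv). The sign-reversed argument yields case (v). The cases are mutually exclusive by construction, so this exhausts the classification.
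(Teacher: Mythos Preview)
Your proof is correct and essentially follows the same route as the paper: the paper differentiates (\ref{0}) to obtain $x^{2}y''=(1+(a-2)x-y)y'+(a+b)y$ for $y=u'$ and then invokes Lemma~4.1 of \cite{brunovsky.al.13}, whose content is precisely the sign analysis you carry out---at any zero of $u''$ one has $x_{0}^{2}u'''(x_{0})=(a+b)u'(x_{0})$, which is your identity $F'(x_{0})=(a+b)u'(x_{0})$ in different packaging. Your version has the merit of being self-contained rather than deferring to the external reference, and your treatment of the degenerate case $u'(x_{0})=u''(x_{0})=0$ via ODE uniqueness (forcing $u\equiv 1/(2b)$) makes explicit a point the paper leaves implicit in the cited lemma.
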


\begin{proof}
The conclusions follow readily from \cite{brunovsky.al.13}, Lemma 4.1,
applied to the equation 
\begin{equation}
x^{2}y^{\prime \prime }=(1+(a-2)x-y))y^{\prime }+\left( a+b\right) y,
\label{eq: dODE}
\end{equation}%
with $y=u^{\prime }$, obtained by differentiation and re-arrangement of (\ref%
{0}).
\end{proof}

\section{Boundary value problem BVP$_{[0,\infty )}$}

In the context of the present paper it turns out to be advantageous to view
Proposition \ref{bcw} as a solution to a certain \emph{boundary value problem%
} (BVP). We write $u^{\prime }(\infty ):=\lim_{x\rightarrow \infty
}u^{\prime }(x)$ whenever the limit on the right-hand side exists and
complement the Dirichlet-type boundary condition $u(0)=0$ with a
Neumann-type boundary condition%
\begin{equation}
u^{\prime }\left( \infty \right) =0.  \label{eq: u'(infty)=0}
\end{equation}%
Hereafter we refer to the mixed boundary value problem (\ref{0}, \ref{0.1}, %
\ref{eq: u'(infty)=0}) as $\mathrm{BVP}_{[0,\infty )}$. It is seen below
that the right-hand boundary condition (\ref{eq: u'(infty)=0}) uniquely
determines the solution found in Proposition \ref{bcw}.

\begin{proposition}
\label{prop: bvp_infty}Under the assumption $a+b>0$ $\mathrm{BVP}_{[0,\infty
)}$ has a unique solution which additionally satisfies $u^{\prime }(0)=1$, $%
u^{\prime }>0$, $u^{\prime \prime }<0$, $u^{\prime \prime \prime }>0$, as
well as $0\leq u(x)\leq x$.
\end{proposition}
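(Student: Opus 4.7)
The plan is to derive existence directly from Proposition~\ref{bcw} and to obtain uniqueness by combining the classification in Proposition~\ref{prop: alternatives} with the shared asymptotic expansion (\ref{eq: u(x) series}) that applies to every solution of $\mathrm{IVP}_0$.

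For existence I would simply observe that the function $u_{\infty}$ supplied by Proposition~\ref{bcw} is a solution of (\ref{0}) with $u(0)=0$, and that the property $u_{\infty}'(x)\searrow 0$ as $x\to\infty$ recorded there is exactly the Neumann condition (\ref{eq: u'(infty)=0}). All qualitative features demanded by the present statement, namely $u'(0)=1$, $u'>0$, $u''<0$, $u'''>0$, and $0\le u(x)\le x$, are already established in Proposition~\ref{bcw}, so no further verification is required.

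For uniqueness I would take an arbitrary solution $u$ of $\mathrm{BVP}_{[0,\infty)}$ and apply the classification of Proposition~\ref{prop: alternatives} on $(0,\infty)$. The common asymptotic expansion (\ref{eq: u(x) series}) forces $u'(x)\to 1$ as $x\searrow 0$ for every solution of $\mathrm{IVP}_0$, and this observation drives the case analysis. Case (i) is impossible because no constant function solves (\ref{0}). Case (iii) is impossible because a strictly convex solution whose derivative has a finite limit at infinity must satisfy $u'(x)<u'(\infty)=0$ on $(0,\infty)$, contradicting $u'(0^+)=1$. Case (iv) is incompatible with the Neumann condition, since $u'\ge u'(x_0)>0$ precludes $u'(\infty)=0$. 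Case (v) is incompatible with the leading asymptotics, since $u'\le u'(x_0)<0$ contradicts $u'(0^+)=1$. Hence $u$ must fall into case (ii).

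In case (ii) strict concavity combined with $u'(0^+)=1$ and $u'(\infty)=0$ gives $u'(x)\in(0,1)$ for every $x>0$; integrating from zero with $u(0)=0$ yields $0<u(x)<x$. Consequently $u$ satisfies the hypotheses of Proposition~\ref{bcw} and must therefore coincide with $u_{\infty}$, which completes the uniqueness argument and transfers the remaining qualitative inequalities. The only genuinely substantive input is the statement that every solution of $\mathrm{IVP}_0$ shares the asymptotic $u'(0^+)=1$; since this is provided by the power-series asymptotics already cited from \cite{brunovsky.al.13} and \cite{quittner.15}, the proof is essentially a checklist and the main obstacle reduces to careful bookkeeping across the five alternatives.
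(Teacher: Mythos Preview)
Your approach is essentially identical to the paper's: existence via Proposition~\ref{bcw}, uniqueness by running through the five alternatives of Proposition~\ref{prop: alternatives} using $u'(0^+)=1$ and $u'(\infty)=0$, then invoking the characterization in Proposition~\ref{bcw} once $0\le u\le x$ has been established. The paper cites Lemma~3.1 of \cite{brunovsky.al.13} directly for $u'(0^+)=1$ rather than the power-series asymptotics, but that is the same fact.

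One small slip: your dismissal of case~(i) is not quite right. The constant $u\equiv 1/(2b)$ \emph{does} solve the ODE~(\ref{0}) when $b\neq 0$, so ``no constant function solves~(\ref{0})'' is false. The correct reason case~(i) is excluded is that any solution of $\mathrm{BVP}_{[0,\infty)}$ satisfies $u(0)=0$ (equivalently $u'(0^+)=1$), which the nonzero constant fails. This is a one-line fix and does not affect the rest of your argument.
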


\begin{proof}
$\mathrm{BVP}_{[0,\infty )}$ possesses at least one solution, namely the
solution identified in Proposition \ref{bcw}. Below we will prove uniqueness
by showing that any solution of $\mathrm{BVP}_{[0,\infty )}$ must also
satisfy $0\leq u(x)\leq x$. By Lemma 3.1 in \cite{brunovsky.al.13} any local
solution of the $\mathrm{IVP}_{0}$ satisfies $\lim_{x\rightarrow
0_{+}}u^{\prime }(x)=u^{\prime }(0)=1$. Consider now the alternatives in
Proposition \ref{prop: alternatives} with $\alpha =0$ and $\beta =\infty $.
Since any solution of $\mathrm{BVP}_{[0,\infty )}$ also solves $\mathrm{IVP}%
_{0}$ it cannot fall into the constant alternative i). Similarly, it cannot
fall into category iii) with $u^{\prime \prime }>0$ since $u^{\prime }(0)=1$
then implies $u^{\prime }(\infty )\geq 1$. Alternatives iv) and v) also
imply $u^{\prime }(\infty )\neq 0$. Therefore only category ii) remains as a
possible alternative. One thus obtains $u^{\prime \prime }<0$ globally,
therefore $u^{\prime }$ is decreasing and $u^{\prime }(\infty )=0$ implies $%
u^{\prime }\geq 0.$ We have thus proved $0\leq u^{\prime }\leq 1$ and on
integrating one obtains $0\leq u\leq x$. This shows uniqueness by
Proposition \ref{bcw}.
\end{proof}

The paper \cite{brunovsky.al.13} left two questions open. The first is
whether the value function $V$ generated by the solution $u_{\infty }$ of $%
\mathrm{BVP}_{[0,\infty )}$ from Proposition \ref{bcw} via equation (\ref%
{scaling}) is indeed the value function of the optimization problem (\ref%
{eq: (L)}). The second question concerns numerical computation of the
solution to $\mathrm{BVP}_{[0,\infty )}$. We address both questions in turn,
the former in Section 5 and the latter in Section 6.

\section{Optimality}

\label{sect: opt}In this section we establish the precise connection between
the boundary value problem $\mathrm{BVP}_{[0,\infty )}$ and the optimal
control and value function for the liquidation problem (\ref{eq: (L)}). We
begin by formulating a natural sufficient condition for admissibility and
investigate under what circumstances it is admissible to pursue further
acquisition of the asset to be liquidated, $v<0$.

\begin{proposition}
\label{prop: admissibility}Under the assumption (\ref{eq: rho ineq}) any
predictable control $v$ satisfying $S(t)/\eta \geq v(t)\geq 0$ is
admissible. If additionally 
\begin{equation}
\rho >\lambda ^{+}+r^{+},  \label{eq: rho ineq +}
\end{equation}%
where $x^{+}:=\max (x,0),$ then any predictable control $v$ satisfying $%
S(t)/\eta \geq v(t)\geq -K$ for some $K>0$ is also admissible.
\end{proposition}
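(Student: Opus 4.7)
The condition (\ref{eq: adm1}) is immediate from the uniform bound $|v(t)| \leq S(t)/\eta + K$ (with $K=0$ in the first part) together with the finiteness of all polynomial moments of geometric Brownian motion. For (\ref{eq: adm2}), since $v \leq S/\eta$ forces $S - \eta v \geq 0$ pathwise, one has the pointwise decomposition
\begin{equation*}
|v(S - \eta v)| = v^+(S - \eta v^+) + v^-(S + \eta v^-) \leq S v^+ + K S + \eta K^2.
\end{equation*}
The $KS$ and $\eta K^2$ contributions are integrable in discounted expectation under $\rho > \lambda$ and $\rho > 0$, both implied by (\ref{eq: rho ineq +}). Writing $v^+ = v + v^-$ and using $E[\int e^{-\rho t} S v^-\, dt] \leq K\, E[\int e^{-\rho t} S\, dt] = Ks/(\rho - \lambda) < \infty$, the entire problem reduces to bounding $E\!\left[\int_0^{T(Z=0)} e^{-\rho t} S v\, dt\right]$.

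For this last quantity I pass to the martingale change of measure $d\hat P_t / dP_t := S(t)/(s\, e^{\lambda t}) = \exp(\sigma B(t) - \sigma^2 t / 2)$ on $\mathcal{F}_t$, so that
\begin{equation*}
E\!\left[\int_0^{\tau_n} e^{-\rho t} S v \,dt\right] = s\, \hat E\!\left[\int_0^{\tau_n} e^{(\lambda - \rho) t} v\, dt\right],\qquad \tau_n := T(Z=0) \wedge n.
\end{equation*}
Because $Z$ is absolutely continuous, deterministic integration by parts with $v\,dt = rZ\,dt - dZ$ yields
\begin{equation*}
\int_0^{\tau_n} e^{(\lambda - \rho) t} v\, dt = z - e^{(\lambda - \rho)\tau_n} Z(\tau_n) + (\lambda - \rho + r) \int_0^{\tau_n} e^{(\lambda - \rho) t} Z\, dt.
\end{equation*}
Under the standing assumption $\rho > \lambda + r$ the last term is nonpositive ($Z \geq 0$). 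The boundary term vanishes on $\{T(Z=0) \leq n\}$ since $Z(T) = 0$; on $\{T(Z=0) > n\}$ the pathwise ODE comparison $\dot Z \leq r Z + K$ delivers $Z(n) \leq (z + K/r)\,e^{r n}$ when $r > 0$ (and the analogous linear or uniformly bounded estimate when $r \leq 0$), so that $e^{(\lambda - \rho) n} Z(n) \to 0$ precisely because of (\ref{eq: rho ineq +}). Passing to the limit gives $\hat E[\int_0^{T(Z=0)} e^{(\lambda - \rho) t} v\, dt] \leq z$, hence $E[\int_0^{T(Z=0)} e^{-\rho t} S v\, dt] \leq sz$.

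For the first part of the proposition ($v \geq 0$, so $v^- \equiv 0$) the bound $|v(S - \eta v)| \leq Sv$ combined with the just-established $E[\int e^{-\rho t} Sv\, dt] \leq sz$ immediately yields (\ref{eq: adm2}); no more than the standing assumption $\rho > \lambda + r$ is used. For the second part, the $v^-$--correction and the $KS + \eta K^2$ terms contribute additional finite quantities under (\ref{eq: rho ineq +}). The main technical obstacle I anticipate is the passage $n \to \infty$ above: one must justify the monotone/dominated convergence, most cleanly by splitting $Sv = Sv^+ - Sv^-$ into its nonnegative components and applying MCT separately, and one must control the boundary term uniformly. The precise split $\lambda^+ + r^+$ in (\ref{eq: rho ineq +}) is tailored to simultaneously cover the three regimes $r > 0$, $r = 0$, $r < 0$, in which $Z(n)$ grows exponentially, linearly, or remains bounded on $\{T(Z=0) > n\}$.
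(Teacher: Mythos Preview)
Your proof is correct and takes a genuinely different route from the paper's. The paper estimates $|v(S-\eta v)|\leq (K+v^{+})(S+\eta K)$, then handles the $v^{+}$ part by a \emph{stochastic} integration by parts under $P$: it introduces $W(t)=\int_{0}^{t}v^{+}\,ds$, integrates $e^{-\rho t}(S+\eta K)\,dW$ by parts, and must verify that the resulting stochastic integral $\int e^{-\rho t}W\,dS$ is a square-integrable martingale via a quadratic-variation estimate. You instead absorb the factor $S$ by passing to the measure $\hat{P}$ with density $S(t)/(se^{\lambda t})$, after which the problem becomes a \emph{pathwise} integration by parts using the inventory ODE $v\,dt=rZ\,dt-dZ$. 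This is cleaner: no stochastic integrals or martingale verification are needed, and the identity $E[\int_{0}^{\tau_{n}}e^{-\rho t}Sv\,dt]=s\,\hat{E}[\int_{0}^{\tau_{n}}e^{(\lambda-\rho)t}v\,dt]$ (justified by Fubini, since $|v|\leq S/\eta+K$ gives integrability on $[0,n]$) reduces the estimate to bounding a deterministic integral. One simplification you overlook: since both the boundary term $-e^{(\lambda-\rho)\tau_{n}}Z(\tau_{n})$ and the $(\lambda-\rho+r)\int\ldots Z\,dt$ term are nonpositive pathwise, your integration-by-parts identity already gives $\int_{0}^{\tau_{n}}e^{(\lambda-\rho)t}v\,dt\leq z$ pathwise for every $n$, so the discussion of the boundary term vanishing is unnecessary for the upper bound. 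With that observation, your own proposed fix---split $Sv=Sv^{+}-Sv^{-}$ and apply monotone convergence to $E[\int_{0}^{\tau_{n}}e^{-\rho t}Sv^{+}\,dt]\leq sz+Ks/(\rho-\lambda)$---closes the limit passage immediately. The conditions $\rho>\lambda$ and $\rho>0$ you identify for the $KS$, $Sv^{-}$ and $\eta K^{2}$ terms are exactly those the paper also invokes; the paper additionally uses $\rho>r$ because its $W$-estimate introduces $e^{rt}$ growth into terms multiplied by $\eta K$, a complication your route avoids.
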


\begin{proof}
i) We have $\left\vert v(t)\right\vert ^{m}\leq \left( S(t)/\eta \right)
^{m}+K^{m}$ and since $S$ is a GBM\ this implies $E\left[ \int_{0}^{t}\left\vert v(s)\right\vert ^{m}ds\right] <\infty $ for any finite $t$ and any $%
m\in \mathbb{N}$ which proves (\ref{eq: adm1}).

ii) To prove (\ref{eq: adm2}) first note that $v(t)\geq -K$ implies 
\begin{equation}
Z(t)\leq ze^{rt}+K\frac{e^{rt}-1}{r}.  \label{eq: Z estimate}
\end{equation}%
To show integrability of the value function we first obtain an estimate of
the integrand 
\begin{equation*}
\left\vert v(t)(S(t)-\eta v(t))\right\vert \leq \left(
v(t)^{+}+v(t)^{-}\right) \left( S(t)+\eta v(t)^{-}\right) \leq \left(
K+v(t)^{+}\right) \left( S(t)+\eta K\right) ,
\end{equation*}%
which for any bounded stopping time $\tau $ yields%
\begin{multline*}
E\left[ \int_{0}^{\tau }e^{-\rho t}\left\vert v(t)(S(t)-\eta
v(t))\right\vert dt\right] \\
\leq K\int_{0}^{\tau }e^{-\rho t}(E[S(t)]+\eta K)dt+E\left[ \int_{0}^{\tau
}e^{-\rho t}v(t)^{+}(S(t)+\eta K)dt\right] \\
\leq \underbrace{K\int_{0}^{\infty }e^{-\rho t}(se^{\lambda t}+\eta K)dt}%
_{C<\infty }+E\left[ \int_{0}^{\tau }e^{-\rho t}v(t)^{+}(S(t)+\eta K)dt%
\right] .
\end{multline*}%
Continue with the integral inside the expectation in the second term,
letting $W(t)=\int_{0}^{t}v(s)^{+}ds,$ and integrating by parts. In
preparation note $dZ(t)=\left( rZ(t)-v(t)\right) dt$ which together with (%
\ref{eq: Z estimate}) implies for any bounded stopping time $\tau \leq T$ 
\begin{eqnarray}
0 &\leq &W(\tau )=\int_{0}^{\tau }v(t)^{+}dt=\int_{0}^{\tau
}v(t)^{-}dt+\int_{0}^{\tau }rZ(t)dt+z-Z\left( \tau \right)  \notag \\
&\leq &K\tau +\int_{0}^{\tau }r\left( ze^{rt}+K\frac{e^{rt}-1}{r}\right)
dt+z=:g(\tau ).  \label{eq: W estimate}
\end{eqnarray}%
Integration by parts yields%
\begin{eqnarray*}
\int_{0}^{\tau }e^{-\rho t}v(t)^{+}(S(t)+\eta K)dt &=&e^{-\rho \tau }W(\tau
)\left( S(\tau )+\eta K\right) \\
&&+\rho \int_{0}^{\tau }e^{-\rho t}W(t)\left( S(t)+\eta K\right)
dt-\int_{0}^{\tau }e^{-\rho t}W(t)dS(t).
\end{eqnarray*}%
We continue with the second term on the right-hand side. Let $dM(t)=e^{-\rho
t}W(t)S(t)dB(t)$ then $E[[M,M]_{t}]=E\left[ \int_{0}^{t}e^{-2\rho
l}W^{2}(l)S^{2}(l)dl\right] \leq s^{2}\int_{0}^{t}g^{2}(l)e^{2(\lambda
+\sigma ^{2}-\rho )l}dt<\infty $ with $g$ from (\ref{eq: W estimate}) which
implies that $M$ is a (square-integrable) martingale. Hence for any bounded
stopping time $\tau $%
\begin{equation*}
E\left[ \int_{0}^{\tau }e^{-\rho t}W(t)dS(t)\right] =E\left[ \int_{0}^{\tau
}e^{-\rho t}\lambda W(t)S(t)dt\right] .
\end{equation*}%
Pulling everything together%
\begin{equation*}
E\left[ \int_{0}^{\tau }e^{-\rho t}\left\vert v(t)(S(t)-\eta
v(t))\right\vert dt\right] \leq C+E\left[ \int_{0}^{\tau }e^{-\rho
t}v(t)^{+}(S(t)+\eta K)dt\right] .
\end{equation*}%
The right-hand side is bounded for $K=0$ under the standing assumption (\ref%
{eq: rho ineq}). This is also true for $K>0$ if additionally $\rho >0,\rho
>\lambda $ and $\rho >r.$ The last three inequalities together with the
standing assumption (\ref{eq: rho ineq}) are equivalent to (\ref{eq: rho
ineq +}). Letting $\tau $ increase to $T$ we have by monotone convergence 
\begin{equation*}
E\left[ \int_{0}^{T}e^{-\rho t}\left\vert v(t)(S(t)-\eta v(t))\right\vert dt%
\right] <\infty .
\end{equation*}
\end{proof}

The next theorem characterizes the optimal liquidation strategy and the
corresponding value function. The inequality $V(s,z)\leq sz$ confirms the
initial intuition that without short sales the implementation shortfall $%
sz-V(s,z)$ must be positive. We note that due to $0\leq u_{\infty }^{\prime
}(x)\leq 1$ we have $v^{\ast }(t)\geq 0,$ i.e. it is \emph{not} optimal to
buy more of the liquidated asset, even when (for $\rho >\lambda ^{+}+r^{+}$)
strategies that involve further purchases are admissible.

\begin{theorem}
\label{thm: optimality}Assume (\ref{eq: rho ineq}). Let $u_{\infty }$ be the
unique solution of $\mathrm{BVP}_{[0,\infty )}$, with $a,b$ given by (\ref%
{eq: abc def}). Then the function $V(s,z):=\frac{s^{2}}{\eta \sigma ^{2}}%
u_{\infty }\left( \eta \sigma ^{2}\frac{z}{s}\right) \leq sz$ is the value
function of the optimization (\ref{eq: (L)}) and 
\begin{equation}
v^{\ast }(t):=\frac{1}{2\eta }\left( S(t)-V_{z}(S(t),Z^{\ast }(t))\right) =%
\frac{S(t)}{2\eta }\left( 1-u_{\infty }^{\prime }\left( \eta \sigma ^{2}%
\frac{Z^{\ast }(t)}{S(t)}\right) \right) \geq 0  \label{eq: vstar}
\end{equation}%
is the optimal control among all admissible controls $\mathcal{A}$ defined
in equations (\ref{eq: adm1},\ref{eq: adm2}).
\end{theorem}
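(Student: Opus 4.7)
The plan is a classical verification theorem. Let $\tilde V(s,z):=\frac{s^{2}}{\eta\sigma^{2}}u_{\infty}\!\bigl(\frac{\eta\sigma^{2}z}{s}\bigr)$, extended continuously by $\tilde V(s,0):=0$. By Proposition~\ref{bcw} and the self-similarity (\ref{scaling}), $\tilde V\in\mathcal C^{2}((0,\infty)^{2})$, and direct substitution reduces (\ref{eq: HJB}) for $\tilde V$ to the ODE (\ref{0}) for $u_{\infty}$, so $\tilde V$ solves (\ref{eq: HJB}) pointwise on $(0,\infty)^{2}$ together with the boundary condition (\ref{eq: V(s,0)}). Unpacking the supremum structure, this is equivalent to the HJB inequality
\begin{equation*}
\mathcal L^{v}\tilde V + v(s-\eta v)\leq 0\quad\text{for every }v\in\mathbb R,\ (s,z)\in(0,\infty)^{2},
\end{equation*}
with equality precisely at $v=(s-\tilde V_{z})/(2\eta)$, where $\mathcal L^{v}w:=\tfrac12\sigma^{2}s^{2}w_{ss}+\lambda s w_{s}+(rz-v)w_{z}-\rho w$. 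Propositions~\ref{bcw}--\ref{prop: bvp_infty} further supply the global bounds $0\leq\tilde V(s,z)\leq sz$ and, via (\ref{eq: vstar}), $0\leq v^{*}(t)\leq S(t)/(2\eta)$.

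For the upper bound $J(s,z,v)\leq\tilde V(s,z)$, I would fix $v\in\mathcal A$, introduce the localizer $\tau_{n}:=n\wedge T(Z{=}0)\wedge\inf\{t:(S(t),Z(t))\notin[1/n,n]^{2}\}$, and apply It\^o's formula to $e^{-\rho t}\tilde V(S(t),Z(t))$ on $[0,\tau_{n}]$. The stochastic integral is a genuine martingale by the localization, so taking expectations and invoking the HJB inequality gives
\begin{equation*}
E\!\left[\int_{0}^{\tau_{n}}e^{-\rho t}v(t)\bigl(S(t)-\eta v(t)\bigr)dt\right]\leq \tilde V(s,z)-E\!\left[e^{-\rho\tau_{n}}\tilde V(S(\tau_{n}),Z(\tau_{n}))\right]\leq\tilde V(s,z),
\end{equation*}
the last bound using $\tilde V\geq 0$. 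Since $\tau_{n}\nearrow T(Z{=}0)$, the integrability assumption (\ref{eq: adm2}) and dominated convergence pass to the limit on the left-hand side.

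For attainment, Proposition~\ref{prop: admissibility} together with $v^{*}\in[0,S/(2\eta)]$ shows $v^{*}\in\mathcal A$. Along the controlled inventory $Z^{*}$ the HJB inequality is in fact an equality, so the same It\^o computation produces
\begin{equation*}
E\!\left[\int_{0}^{\tau_{n}^{*}}\!e^{-\rho t}v^{*}(t)\bigl(S(t)-\eta v^{*}(t)\bigr)dt\right]=\tilde V(s,z)-E\!\left[e^{-\rho\tau_{n}^{*}}\tilde V(S(\tau_{n}^{*}),Z^{*}(\tau_{n}^{*}))\right],
\end{equation*}
with $\tau_{n}^{*}$ the analogous localizer for $(S,Z^{*})$. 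Identifying $J(s,z,v^{*})=\tilde V(s,z)$ therefore reduces to showing that the terminal expectation on the right vanishes as $n\to\infty$.

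This last limit passage is the \textbf{main obstacle}. I would exploit the estimate $\tilde V(s,z)\leq sz$ from Proposition~\ref{prop: bvp_infty}: since $v^{*}\geq 0$ forces $Z^{*}(t)\leq ze^{rt}$ pathwise, the standing assumption (\ref{eq: rho ineq}) gives
\begin{equation*}
0\leq E\!\left[e^{-\rho t}\tilde V(S(t),Z^{*}(t))\right]\leq E\!\left[e^{-\rho t}S(t)Z^{*}(t)\right]\leq sz\,e^{(\lambda+r-\rho)t}\xrightarrow[t\to\infty]{}0.
\end{equation*}
On $\{T(Z^{*}{=}0)<\infty\}$ the limit is zero directly from $\tilde V(s,0)=0$; on the complementary event $\tau_{n}^{*}=n$ eventually, and the above display evaluated at $t=n$, combined with a uniform integrability argument based on the supermartingale property of $e^{-\rho t}S(t)Z^{*}(t)$, provides the required convergence. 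The resulting equality $J(s,z,v^{*})=\tilde V(s,z)$ together with the upper bound $V\leq\tilde V$ identifies $V=\tilde V$ and exhibits $v^{*}$ as an optimal control in $\mathcal A$.
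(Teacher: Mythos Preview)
Your argument is essentially the paper's verification proof, with the same three ingredients: $0\leq\tilde V(s,z)\leq sz$ from Proposition~\ref{prop: bvp_infty}, the pathwise bound $Z^{*}(t)\leq ze^{rt}$ coming from $v^{*}\geq 0$, and the standing assumption (\ref{eq: rho ineq}) to force $e^{-\rho t}E[S(t)Z^{*}(t)]\leq sz\,e^{(\lambda+r-\rho)t}\to 0$. The paper packages these as the hypotheses of Fleming--Soner's Theorem~IV.5.1 (polynomial growth, nonnegativity for arbitrary controls, and the transversality condition along $v^{*}$ at deterministic times $t$), whereas you write the It\^o/localization computation out by hand.

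One small slip: on $\{T(Z^{*}{=}0)=\infty\}$ you claim $\tau_{n}^{*}=n$ eventually, but the spatial part of your localizer need not become inactive---for a geometric Brownian motion $S$, the event $\{\max_{[0,n]}S>n\}$ has probability bounded away from zero. The supermartingale observation you invoke does repair this, though you should spell it out: $Y_{t}:=e^{-\rho t}S(t)Z^{*}(t)$ is a nonnegative supermartingale, so for any deterministic $m$ one has $E[Y_{\tau_{n}^{*}}\mathbf 1_{\tau_{n}^{*}>m}]\leq E[Y_{m}]\leq sz\,e^{(\lambda+r-\rho)m}$, while $Y_{\tau_{n}^{*}}\mathbf 1_{\tau_{n}^{*}\leq m}\to 0$ a.s.\ with a dominating variable available on $[0,m]$. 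Alternatively, since $|s\tilde V_{s}|\leq 3sz$ one checks directly that the stochastic integral is a true martingale on $[0,n\wedge T^{*}]$, so the spatial localization is unnecessary and the paper's deterministic-time estimate applies verbatim.
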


\begin{proof}
To prove the theorem we apply the `Verification' Theorem IV.5.1 of \cite%
{fleming.soner.06}. To this end, we have to check the following:

\begin{enumerate}
\item[(i)] $V(s,z)$ is $\mathcal{C}^{2}\left( (0,\infty )\times (0,\infty
)\right) \cap \mathcal{C}^{0}\left( [0,\infty )\times \lbrack 0,\infty
)\right) $ and satisfies 
\begin{equation*}
|V(s,z)|\leq K(1+|(s,z)|^{m})
\end{equation*}%
for some $m>0,\ K>0$;

\item[(ii)] $\lim \sup_{t\rightarrow \infty }E_{(s,z)}\left[ I_{t\leq
T\left( Z=0\right) }e^{-\rho t}V(S(t),Z(t))\right] \geq 0$ for all
admissible controls, where $s:=S(0)$ and $z:=Z(0)$;

\item[(iii)] For any finite time $t$%
\begin{equation*}
\lim_{t\rightarrow \infty }e^{-\rho t}E_{(s,z)}\left[ I_{t\leq T\left(
Z^{\ast }=0\right) }V(S(t),Z^{\ast }(t))\right] =0,
\end{equation*}%
$(S(t),Z^{\ast }(t))$ being the solution of 
\begin{align*}
dS(t)& =\lambda S(t)dt+\sigma S(t)dB(t), \\
dZ^{\ast }(t)& =\left( rZ^{\ast }(t)-\frac{S(t)}{2\eta }\left( 1-u^{\prime
}\left( \eta \sigma ^{2}\frac{Z^{\ast }(t)}{S(t)}\right) \right) \right) dt.
\end{align*}
\end{enumerate}

The regularity properties as well as the estimates of (i) and (ii) are
immediate consequences of the properties of $u_{\infty }$, which in
particular imply%
\begin{equation}
0\leq V(s,z)\leq sz.  \label{eq: V(s,z) le sz}
\end{equation}

The estimate (\ref{eq: Z estimate}) gives $Z^{\ast }(t)\leq ze^{rt}$ which
in combination with inequality (\ref{eq: V(s,z) le sz}) and standing
assumption (\ref{eq: rho ineq}) yields%
\begin{eqnarray*}
0 &\leq &e^{-\rho t}E_{(s,z)}\left[ I_{t\leq T\left( Z^{\ast }=0\right)
}V(S(t),Z^{\ast }(t))\right]  \\
&\leq &e^{-\rho t}ze^{rt}E_{(s,z)}\left[ S(t)\right] =sze^{\left( r+\lambda
-\rho \right) t}\searrow 0.
\end{eqnarray*}%
This proves item (iii).
\end{proof}

{Observe that the optimal control deviates from the myopic strategy of
maximizing the integrand of the objective function }$v_{\mathrm{myopic}%
}(t):=S(t)/(2\eta )${. In addition to the instantaneous impact on the
execution price the current liquidation rate also affects future levels of
the inventory }$Z${. In (\ref{eq: vstar}) the optimal strategy at time $t$
differs from }$v_{\mathrm{myopic}}(t)${\ by the amount $-V_{z}(S(t),Z^{\ast
}(t))$, which is the marginal value of the optimal revenue with respect to
the size of the remaining inventory. }It follows that taking proper account
of the role of future inventory level reduces the selling rate. By
Proposition \ref{bcw}, $u_{\infty }^{\prime }$ is positive and decreasing to
zero and so is $V_{z}(s,z)$ in $z$ and therefore for large values of $%
Z^{\ast }(t)$ the selling rate is very close to the myopic strategy. For
small values of $Z^{\ast }(t)$ the optimal rate of trading is non-linear,
roughly proportional to $\sqrt{Z}$ as can be seen from the asymptotic
expansion (\ref{eq: u(x) series}) and the formula for the optimal trading
rate (\ref{eq: vstar}).

We remark that the classical martingale case with $\rho =\lambda =r=0$ and
fixed time horizon $T$ yields constant optimal liquidation speed $v^{\ast
}=Z(0)/T$. The resulting price impact per share, for fixed $T$, is
proportional to $Z(0)$ which is not consistent with broad empirical evidence
that indicates power dependence roughly proportional to $\sqrt{Z(0)}$.

When estimating price impact empirically, an assumption has to be made about
the rate of trading. In \cite{almgren.al.05} this rate is assumed to be
constant and the temporary impact of individual trades is estimated
proportional to $v^{0.6}$ which yields per-share temporary price impact
proportional to $Z(0)^{0.6}$. Here, in contrast, the temporary impact is
linear, proportional to $v$, but the \emph{optimal} rate of trading is
non-linear, roughly proportional to $\sqrt{Z}$ for small\ values. `Small'\
must be understood in context; we find that $\sqrt{Z}$ asymptotics is
perfectly compatible with meta-orders whose optimal execution lasts several
days, see Section 8.4.

We can also make qualitative conclusions about the optimized implementation
shortfall by studying the asymptoptic expansion (\ref{eq: u(x) series})
whereby we find that for small $Z(0)$ the per-share price impact equals 
\begin{equation*}
I(S(0),Z(0))=\frac{S(0)Z(0)-V(S(0),Z(0))}{S(0)Z(0)}=\frac{4}{3}\sqrt{\eta
(\rho -\lambda -r)Z(0)/S(0)}+O(Z(0)^{3/2}),
\end{equation*}%
which means that the price impact is proportional to the square root of the
total trade size. There is a strong empirical evidence to support the square
root law for meta-orders, see \cite{bershova.rakhlin.13}, \cite{farmer.al.13}%
, \cite{donier.al.15} and \cite{toth.al.16} and references therein.

\section{Computation of the solution}

\label{sect: comp}To make BVP$_{[0,\infty )}$ amenable to numerical
treatment we first truncate the spatial interval to $x\in \lbrack
\varepsilon ,L]$ with $\varepsilon \geq 0,L<\infty $ and solve the ODE (\ref%
{0}) with mixed boundary conditions $u(\varepsilon )=0$ and $u^{\prime
}(L)=0 $. We refer to the truncated boundary value problem as BVP$%
_{[\varepsilon ,L]}$. In section \ref{sect: BVP_L} we prove that the
solution $u_{L}$ of BVP$_{[0,L]}$ is unique and that it converges pointwise
upwards to the desired solution $u_{\infty }$ as $L\nearrow \infty $.

Numerical solutions of BVPs for ordinary differential equations with
singular coefficients have a well established literature, see for example 
\cite{jamet.70}, \cite{weinmuller.84}, \cite{weinmuller.86}, and \cite%
{auzinger.al.99} who consider BVPs with ODE of the form%
\begin{equation}
u^{\prime \prime }=x^{-1}A(x)u^{\prime }+x^{-2}B(x)u+F(x,u,u^{\prime }),
\label{eq: Weinmuller ODE}
\end{equation}%
where $A,B$ and $F$ are continuous at $x=0$ and one of the boundaries is $%
x=0 $. Numerical solution of (\ref{eq: Weinmuller ODE}) can be computed by
means of the Matlab function \textsf{bvp5c} after transformation $y(x)=[u(x)$%
\quad $xu^{\prime }(x)]$, see \cite{weinmuller.86}, equation (2.1a).

However, as we have mentioned already in the connection with IVP$_{0}$, our
problem BVP$_{[0,L]}$ is substantially more singular. This is not due to the
singularity in the linear terms of ODE (\ref{0}), which in fact can be
accommodated in the ansatz (\ref{eq: Weinmuller ODE}), but because the
non-linear part $F(x,u,u^{\prime })=\frac{1}{2}x^{-2}(u^{\prime }-1)^{2}$ is
not continuous in $x$ at zero. Attempts to compute the solution of BVP$%
_{[0,L]}$ by some kind of shooting fail -- both at $x\rightarrow 0$ and $%
x\rightarrow \infty $ the trajectories blow up. Algorithm \textsf{bvp5c }is
able to produce, with careful tuning of input parameters, a stable solution
of BVP$_{[\varepsilon ,L]}$ for $\varepsilon $ not too close to zero.
However, the quality of this solution near zero is poor, as can be seen in
panel (b) of Figure \ref{fig: 1}.

To bypass the troublesome singularity at zero we introduce a time dimension
into BVP$_{[0,L]}$ in a strategy akin to the \emph{value function iteration }%
method known from financial economics. This approach is also common in
linear-quadratic optimal control problems where, however, it is not
motivated by the presence of singularities, see 
\citet[Section 3.1]{anderson.moore.89}%
.

We consider a parabolic PDE that corresponds to a finite horizon version of
the time-homogeneous optimization (\ref{eq: (L)}). We formulate suitable
boundary conditions on a finite spatial interval $x\in \lbrack 0,L]$ to
obtain a parabolic problem BVP$_{[0,L]}^{t}$ and show that its solution
converges monotonically to the solution of BVP$_{[0,L]}$ as $t\rightarrow
\infty $. This is done in section \ref{sect: BVP_Lt}. Unfortunately, BVP$%
_{[0,L]}^{t}$ does\emph{\ }not\emph{\ }correspond to an optimal control
problem due to the choice of boundary conditions.

In section \ref{sect: finite difference} we formulate a finite difference
scheme to solve BVP$_{[0,L]}^{t}$ numerically. This scheme is well behaved
with respect to the singularity at $x=0$ and produces a reliable
approximation to $u_{L}$, which for large enough $L$ is arbitrarily close to
the desired solution $u_{\infty }$.

\subsection{Problem BVP$_{[0,L]}$\label{sect: BVP_L}}

\begin{theorem}
\label{truncation} Let $a+b>0$. For given $L>0$ $\mathrm{BVP}_{[0,L]}$ has a
unique solution $u_{L}\in C^{2}((0,L])\cap C^{0}([0,L])$ such that $0\leq
u_{L}(x)\leq x$ for all $x\in \lbrack 0,L]$. The solution $u_{L}$ is
strictly increasing, concave and satisfies $u_{L_{1}}(x)\leq u_{L_{2}}(x)$
for $L_{1}\leq L_{2},\ 0\leq x\leq L_{1}$, and $\lim_{L\rightarrow \infty
}u_{L}(x)=u_{\infty }(x)$ for $0\leq x<\infty $, where $u_{\infty }$ is the
unique solution of $\mathrm{BVP}_{[0,\infty )}$.
\end{theorem}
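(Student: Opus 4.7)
The plan is to prove existence, uniqueness, the qualitative properties, monotonicity in $L$, and convergence to $u_\infty$ in turn, leaning heavily on Proposition \ref{prop: alternatives} to constrain the global shape of any solution.

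First I would show that any solution $u$ of $\mathrm{BVP}_{[0,L]}$ is automatically strictly concave and satisfies $0 \le u(x) \le x$. Indeed $u$ solves $\mathrm{IVP}_0$, so by Lemma 3.1 of \cite{brunovsky.al.13} one has $u'(0+) = 1$. Applying Proposition \ref{prop: alternatives} on $(0, L)$, the boundary data $u'(0+) = 1$ and $u'(L) = 0$ eliminate alternatives (i), (iii), (iv), (v): strict convexity (iii) would force $u'(L) > 1$; alternative (iv) forces $u'(L) \ge u'(x_0) > 0$; alternative (v) forces $u'(0+) \le u'(x_0) < 0$; and (i) is trivially excluded by $u'(0+) = 1$. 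Only the strictly concave alternative (ii) remains, so $u' \in [0, 1]$ on $(0, L]$, whence $u \in [0, x]$ and $u$ is strictly increasing.

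For existence, I would work on the regularised $\mathrm{BVP}_{[\varepsilon, L]}$ with $u(\varepsilon) = 0$, $u'(L) = 0$: because $x \ge \varepsilon > 0$, the ODE has smooth coefficients and can be tackled by backward shooting. Parametrising by $u_c(L) = c$, $u_c'(L) = 0$ and integrating from $L$ towards $\varepsilon$, the concave branch (again identified via Proposition \ref{prop: alternatives}) gives a continuous and monotone map $c \mapsto u_c(\varepsilon)$, which crosses zero at a unique $c(\varepsilon)$. The resulting family $\{u^{(\varepsilon)}\}$ is bounded by $x$ and enjoys uniform $C^2$ bounds on compact subsets of $(0, L]$, so passing $\varepsilon \searrow 0$ along this monotone family delivers a limit $u_L$ that solves the ODE on $(0, L]$ with $u_L'(L) = 0$; the envelope bound $u_L(x) \le x$ extends $u_L$ continuously to $0$ with $u_L(0) = 0$.

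For uniqueness of $u_L$, if $u^{(1)}$ and $u^{(2)}$ both solved the problem, $w := u^{(1)} - u^{(2)}$ would satisfy the linear equation
\[
x^{2} w'' = p(x)\, w' + b\, w, \qquad p(x) := ax + 1 - \tfrac{1}{2}\bigl(u^{(1)\prime}(x) + u^{(2)\prime}(x)\bigr),
\]
with $w(0) = 0$, $w'(L) = 0$, and $p(x) \ge ax > 0$ on $(0, L]$ thanks to $u^{(i)\prime} \in [0, 1]$; a maximum-principle argument forces $w \equiv 0$. The same linearisation powers the monotonicity in $L$: for $L_1 \le L_2$, concavity of $u_{L_2}$ gives $u_{L_2}'(L_1) \ge u_{L_2}'(L_2) = 0 = u_{L_1}'(L_1)$, and comparison of $u_{L_2}$ with $u_{L_1}$ on $[0, L_1]$ via the same linear equation yields $u_{L_1} \le u_{L_2}$ there. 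Hence $\tilde u(x) := \lim_{L \to \infty} u_L(x)$ exists and is bounded by $x$; interior $C^2$ estimates make $\tilde u$ a $C^2$ solution of the ODE on $(0, \infty)$ with $\tilde u(0) = 0$, and concavity together with the uniform bound force $\tilde u'(\infty) = 0$, so Proposition \ref{prop: bvp_infty} identifies $\tilde u$ with $u_\infty$. The principal obstacle I anticipate is the maximum-principle step when $b < 0$ (which is admissible under $a+b > 0$): the zero-order coefficient has the wrong sign for a direct scalar maximum principle, so I expect to need either a weighted test function exploiting the positivity $p(x) \ge ax$ combined with $a + b > 0$, or a Sturm-type transformation that is robust at the singular point $x = 0$.
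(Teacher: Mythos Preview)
Your reduction of the qualitative properties (strict concavity, $0\le u\le x$, $u$ increasing) via $u'(0+)=1$ and Proposition~\ref{prop: alternatives} is exactly what the paper does. The existence argument differs: the paper obtains $u_\varepsilon$ on $[\varepsilon,L]$ from the lower/upper solution pair $(0,x)$ and then passes $\varepsilon\searrow 0$ via Bernstein/Nagumo a-priori derivative bounds and Arzel\`a--Ascoli, rather than by backward shooting. Your shooting approach is plausible but the claimed monotonicity of $c\mapsto u_c(\varepsilon)$ is not justified, and you would still need the same interior $C^2$ compactness to send $\varepsilon\to 0$.

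The substantive gap is in uniqueness and monotonicity in $L$. Your linearisation $x^2 w'' = p(x)w' + bw$ is correct, but the claim ``$p(x)\ge ax>0$'' already fails when $a\le 0$ (admissible under $a+b>0$; cf.\ Parametrization~2 in Table~\ref{tab:parameters} with $a=-3$), and even for $a>0$ the scalar maximum principle breaks when $b<0$, as you note. Your proposed remedies (weighted test function, Sturm transformation) are only sketched and it is not clear how to make them robust at the singular endpoint $x=0$ for the full range of signs. The paper bypasses the maximum principle entirely: it keeps the \emph{nonlinear} equation for $p=v-u$, differentiates it to obtain
\[
x^{2}p''' = \bigl((a-2)x+1-u'-p'\bigr)p'' + (a+b-u'')\,p',
\]
and applies Lemma~4.1 of \cite{brunovsky.al.13} with $y=p'$ and $g(x,y)=(a+b-u'')y$. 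Since $u''<0$ the coefficient $a+b-u''>0$ regardless of the individual signs of $a$ and $b$, so $p$ obeys the same five alternatives as in Proposition~\ref{prop: alternatives}. The boundary data $p(0)=p'(0)=p'(L)=0$ then force the constant alternative, hence $p\equiv 0$; the monotonicity $u_{L_1}\le u_{L_2}$ follows by the same mechanism with $p'(0)=0$, $p'(L_1)>0$ forcing the strictly convex alternative. This is the missing idea: reuse the alternatives machinery on the \emph{difference}, which is insensitive to the signs of $a$ and $b$ and to the singularity at $0$.
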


\begin{proof}
\emph{Step 1)} For any $\varepsilon >0$ such that $\varepsilon <L$ the
function $\alpha (x):=0$, resp. $\beta (x):=x$ is a lower (resp. upper)
solution of BVP$_{[\varepsilon ,L]}$ in the sense of Definition II.1.1 in 
\cite{decoster.habets.06}, which crucially allows for the Neumann boundary
condition at $L$. Therefore by Theorem II.1.3 \emph{ibid} the solution $%
u_{\varepsilon }$ of the mixed boundary value problem BVP$_{[\varepsilon
,L]} $ satisfies 
\begin{equation}
0\leq u_{\varepsilon }(x)\leq x\text{ for every }\varepsilon >0.
\label{eq: apriori u bound}
\end{equation}%
From here the proof proceeds as in Proposition 2.2 of \cite{brunovsky.al.13}%
. From Bernstein's condition \cite{bernstein.04} (see also Section I.4.3 of 
\cite{decoster.habets.06} for related Nagumo condition) fixing $\tilde{%
\varepsilon}>0$ we obtain a uniform (in $\varepsilon $) a-priori bound on
the derivative $u_{\varepsilon }^{\prime }$ on $[\tilde{\varepsilon},L]$.
Together with (\ref{eq: apriori u bound}) this yields via (\ref{0}) an
a-priori bound on $u_{\varepsilon }^{\prime \prime }$ on $[\tilde{\varepsilon%
},L]$ which means $\{u_{\varepsilon }^{\prime }\}_{\varepsilon >0}$ (as well
as $\{u_{\varepsilon }\}_{\varepsilon >0}$) are equicontinuous on $[\tilde{%
\varepsilon},L]$ which in turn implies equicontinuity of $\{u_{\varepsilon
}^{\prime \prime }\}_{\varepsilon >0}$ via (\ref{0}). One can thus extract a
convergent subsequence of $u_{1/k}$ which convergences with its first two
derivatives to some function $u$ on $(0,L]$ with $u(0)=0$ and such that $u$
solves (\ref{0}).

\emph{Step 2)} By \cite{brunovsky.al.13}, Lemma 3.1, $u_{L}^{\prime }(0)=1$.
This, together with the conditions $0\leq u_{L}(x)\leq x$ and $u_{L}^{\prime
}(L)=0$ excludes all alternatives of Proposition \ref{prop: alternatives}
except for ii). Therefore any solution of BVP$_{[0,L]}$ must be concave and
increasing on $[0,L]$.

\emph{Step 3)} To prove uniqueness of the solution assume that $u$ and $v$
are two solutions of BVP$_{[0.L]}$. Then $p:=v-u$ solves 
\begin{equation}
x^{2}p^{\prime \prime }=axp^{\prime }+bp-p^{\prime }(u^{\prime }-1)-\frac{1}{%
2}\left( p^{\prime }\right) ^{2},  \label{HJB_w}
\end{equation}%
on $(0,L)$ which on differentiation yields 
\begin{equation}
x^{2}p^{\prime \prime \prime }=\left( (a-2)x+1-u^{\prime }-p^{\prime
}\right) p^{\prime \prime }+(a+b-u^{\prime \prime })p^{\prime }.
\label{diff_w}
\end{equation}%
Applying Lemma 4.1 of \cite{brunovsky.al.13} to (\ref{diff_w}) with $%
y=p^{\prime }$, $g(x,y)=(a+b-u^{\prime \prime }(x))y$ and $y^{\ast }=0$, one
obtains that $p$ obeys the same alternatives as $u$ in Proposition \ref%
{prop: alternatives}.

By construction we have $p(0)=p^{\prime }(0)=p^{\prime }(L)=0$, therefore
alternatives (ii)-(v) of Proposition \ref{prop: alternatives} are excluded
and $p$ must be constant and thus necessarily equal to zero. Thus BVP$%
_{[0,L]}$ has a unique solution which we denote by $u_{L}$.

\emph{Step 4)}\ Now we prove that the solutions $u_{L}$ grow with $L$. Take $%
0<L<K$ and let $u:=u_{L}$, $v:=u_{K}$. Consider $p:=v-u$ on $(0,L)$ which
satisfies (\ref{HJB_w}), (\ref{diff_w}) and therefore obeys the alternatives
of Proposition \ref{prop: alternatives}.. As before we have $p^{\prime
}(0)=0 $. Since $v^{\prime }(L)>0$ while $u^{\prime }(L)=0$ we also have $%
p^{\prime }(L)>0.$ Hence in Proposition \ref{prop: alternatives} (iii) is
the only possible alternative, $p$ is strictly convex on $(0,L)$ and
therefore $p^{\prime }>0$ on $(0,L]$ which implies $u_{K}^{\prime
}>u_{L}^{\prime }$ and $u_{K}>u_{L}$ on $(0,L]$.

\emph{Step 5)}\ It remains to be proved that for $L\rightarrow \infty $, $%
u_{L}$ converges pointwise to the solution of BVP$_{[0,\infty )}$. Step 2)\
implies $0\leq u_{L}(x)\leq x$ and by step 4)\ $u_{L}(x)$ is increasing in $%
L $ therefore for fixed $x$ the limit $\lim_{L\rightarrow \infty }u_{L}(x)=:%
\tilde{u}(x)$ is well defined. Likewise $0\leq u_{L}^{\prime }(x)\leq 1$ and 
$u_{L}^{\prime }$ is increasing in $L$ hence we have a well-defined limit $%
\lim_{L\rightarrow \infty }u_{L}^{\prime }(x)=:\tilde{v}(x)$. Picking
arbitrary $x$ and $x_{0}$ in $(0,\infty )$ we rewrite \eqref{0} in integral
form 
\begin{align}
u_{L}(x)& =u_{L}(x_{0})+\int_{x_{0}}^{x}u_{L}^{\prime }(\xi )\mathrm{d}\xi ,
\label{u1_int} \\
u_{L}^{\prime }(x)& =u_{L}^{\prime }(x_{0})+\int_{x_{0}}^{x}f\Big(\xi
,u_{L}(\xi ),u_{L}^{\prime }(\xi )\Big)\mathrm{d}\xi  \label{u2_int}
\end{align}%
with 
\begin{equation}
f(x,u,v)=a\,\frac{v}{x}+b\,\frac{u}{x^{2}}-\frac{1}{2}\,\frac{(v-1)^{2}}{%
x^{2}}.  \label{def_f}
\end{equation}%
Passing to the limit $L\rightarrow \infty $ in (\ref{u1_int}, \ref{u2_int})
and using dominated convergence yields%
\begin{align*}
\tilde{u}(x)& =\tilde{u}(x_{0})+\int_{x_{0}}^{x}\tilde{v}(\xi )\mathrm{d}\xi
, \\
\tilde{v}(x)& =\tilde{v}(x_{0})+\int_{x_{0}}^{x}f\Big(\xi ,\tilde{u}(\xi ),%
\tilde{v}(\xi )\Big)\mathrm{d}\xi ,
\end{align*}%
which on differentiation shows that $\tilde{u}$ solves ODE (\ref{0}) on $%
(0,\infty )$. Since $0\leq \tilde{u}(x)\leq x,$ by Propositions \ref{bcw}
and \ref{prop: bvp_infty} $\tilde{u}$ solves BVP$_{[0,\infty )}$.
\end{proof}

\subsection{BVP$_{[0,L]}$ as a limit of finite horizon problems BVP$%
_{[0,L]}^{t}$\label{sect: BVP_Lt}}

At this point the singularity of BVP$_{[0,L]}$ at zero is still a major
obstacle in obtaining a reliable numerical solution. To bypass the
singularity we will consider a parabolic PDE generated by the ODE (\ref{0}), 
\begin{equation}
w_{t}=x^{2}w_{xx}-axw_{x}-bw+\frac{1}{2}(w_{x}-1)^{2},  \label{PDE}
\end{equation}%
with the boundary conditions 
\begin{eqnarray}
w(t,\varepsilon ) &=&0,  \label{lokraj} \\
w_{x}(t,L) &=&0,  \label{pokraj}
\end{eqnarray}%
and initial condition 
\begin{equation}
w(0,x)=0.  \label{zdola}
\end{equation}%
We refer to the boundary value problem (\ref{PDE}-\ref{zdola}) on $[0,\infty
)\times \lbrack \varepsilon ,L]$ as $\underline{\mathrm{BVP}}_{[\varepsilon
,L]}^{t}.$ When the initial condition (\ref{zdola}) is replaced with 
\begin{equation}
w(0,x)=x,  \label{zhora}
\end{equation}%
we speak of $\overline{\mathrm{BVP}}_{[\varepsilon ,L]}^{t}$.

Three related difficulties have to be mastered. First, the parabolicity of
PDE (\ref{PDE}) degenerates at $x=0$, so basic theory of semilinear
parabolic equations is not applicable directly. Second, the truncation to
finite spatial interval breaks the link between the BVP and the optimal
control problem (\ref{eq: (L)}), so we cannot appeal to results from optimal
control literature. Third, standard existence theorems do not cover mixed
boundary conditions (Dirichlet on the left, Neumann on the right) since most
of this\ theory is developed in higher dimensions where boundary is a
connected set. We prove,

\begin{theorem}
\label{theo: BVP Lt} For given $L$ the problems $\underline{\mathrm{BVP}}%
_{[0,L]}^{t}$ and $\overline{\mathrm{BVP}}_{[0,L]}^{t}$ have a unique
solution in $\mathcal{C}^{1,2}((0,\infty )\times (0,L])\cap \mathcal{C}%
([0,\infty )\times \lbrack 0,L])$. These solutions, denoted by $\underline{w}
$ and $\overline{w}$ respectively, satisfy 
\begin{eqnarray}
0 &\leq &\underline{w}(t,x)\leq u_{L}(x)\leq \overline{w}(t,x)\leq x,
\label{horedolu0} \\
\frac{\partial \overline{w}(t,x)}{\partial t} &\leq &0\leq \frac{\partial 
\underline{w}(t,x)}{\partial t},  \label{eq: w_L upwards}
\end{eqnarray}%
and $\lim_{t\rightarrow \infty }\overline{w}(t,x)=\lim_{t\rightarrow \infty }%
\underline{w}(t,x)=u_{L}(x).$
\end{theorem}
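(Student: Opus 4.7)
The plan is to first establish well-posedness on a non-degenerate subdomain $[\varepsilon,L]$ with $\varepsilon>0$ where the PDE (\ref{PDE}) is uniformly parabolic, then harvest all orderings and time-monotonicities from parabolic comparison principles, and finally pass to the limits $\varepsilon\searrow 0$ and $t\to\infty$ by compactness and bounded monotone convergence. The natural candidates for sub/super-solutions are exactly the time-independent functions $\alpha(x)=0$ and $\beta(x)=x$ that were already used in the proof of Theorem~\ref{truncation}: writing the operator as $\mathcal{L}w:=w_t-x^2w_{xx}+axw_x+bw-\tfrac12(w_x-1)^2$, one checks $\mathcal{L}\alpha=-\tfrac12\le 0$ and $\mathcal{L}\beta=(a+b)x\ge 0$ under the standing assumption $a+b>0$, and both respect the mixed boundary conditions in the correct direction.

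For fixed $\varepsilon>0$, standard theory for semilinear uniformly parabolic equations with mixed Dirichlet--Neumann data (e.g.\ a Schauder/fixed-point argument combined with the sub/super-solution bracket $0\le w\le x$ which controls the quadratic nonlinearity) yields unique classical solutions $\underline w_\varepsilon,\overline w_\varepsilon$. The comparison principle then produces all the inequalities simultaneously: $u_L$, being a time-independent classical solution of the steady problem from Theorem~\ref{truncation} that satisfies the same boundary conditions, can be compared with $\underline w_\varepsilon$ (whose initial datum $0\le u_L$) and with $\overline w_\varepsilon$ (whose initial datum $x\ge u_L$), yielding $\underline w_\varepsilon\le u_L\le\overline w_\varepsilon$. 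The time-monotonicity in (\ref{eq: w_L upwards}) follows by the classical translation trick: for any $h>0$, compare $\underline w_\varepsilon(t,x)$ with $\underline w_\varepsilon(t+h,x)$, the latter being a solution of the same equation with initial datum $\underline w_\varepsilon(h,\cdot)\ge 0=\underline w_\varepsilon(0,\cdot)$; comparison gives $\underline w_\varepsilon(t,x)\le \underline w_\varepsilon(t+h,x)$, hence $\partial_t\underline w_\varepsilon\ge 0$, and symmetrically $\partial_t\overline w_\varepsilon\le 0$.

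Next, I pass to $\varepsilon\searrow 0$. The $L^\infty$ bound $0\le\underline w_\varepsilon,\overline w_\varepsilon\le x$ is uniform in $\varepsilon$, and on any compact subset of $(0,\infty)\times(0,L]$ the equation is uniformly parabolic with smooth coefficients, so interior parabolic Schauder estimates deliver uniform $\mathcal{C}^{1,2}$ bounds. A diagonal extraction produces limits $\underline w,\overline w\in\mathcal{C}^{1,2}((0,\infty)\times(0,L])$ solving (\ref{PDE}) together with the Neumann condition at $L$ and initial data; the uniform envelope $0\le w\le x$ forces continuity up to $x=0$ with value $0$, securing (\ref{lokraj}) and membership in $\mathcal{C}([0,\infty)\times[0,L])$. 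Uniqueness of the limiting problem is again a comparison argument: the difference $p$ of two solutions with the same data satisfies a linear parabolic inequality (after absorbing the quadratic term via the $\mathcal{C}^1$ bound on $w_x$) and the same mixed boundary/initial data, so $p\equiv 0$. Finally, monotonicity and the uniform bound give pointwise limits $\underline w(t,x)\nearrow\underline w_\infty(x)$ and $\overline w(t,x)\searrow\overline w_\infty(x)$; parabolic regularity upgrades these to $\mathcal{C}^2$ limits on compacts of $(0,L]$, and passing $t\to\infty$ in (\ref{PDE}) annihilates $w_t$, so both limits solve $\mathrm{BVP}_{[0,L]}$ with $0\le w\le x$. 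By the uniqueness statement of Theorem~\ref{truncation}, $\underline w_\infty=\overline w_\infty=u_L$, which concludes the proof.

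The main obstacle I anticipate is the $\varepsilon\searrow 0$ passage, because the diffusion coefficient $x^2$ vanishes at the left boundary while the nonlinearity $\tfrac12 x^{-2}(w_x-1)^2$ (after dividing through) is genuinely singular there -- precisely the difficulty already observed for $\mathrm{IVP}_0$. The a priori sandwich $0\le w\le x$ is what rescues the limit: it forces $w_x(t,0^+)=1$ in a weak sense and tames the quadratic term near the degenerate boundary, while Schauder estimates only away from $x=0$ suffice because the Dirichlet trace at $0$ is pinned by the envelope itself rather than by interior regularity.
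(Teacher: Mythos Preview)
Your broad architecture---regularize on $[\varepsilon,L]$, sandwich between $0$ and $x$, extract the $\varepsilon\searrow 0$ limit via interior parabolic estimates, then send $t\to\infty$ by monotone convergence and identify the limit through the uniqueness in Theorem~\ref{truncation}---coincides with the paper's. The implementation differs in one notable respect: the paper does not attack the mixed Dirichlet--Neumann problem on $[\varepsilon,L]$ directly. It symmetrizes about $x=L$ to obtain a pure Dirichlet problem $\underline{\mathrm{SBVP}}_{[\varepsilon,2L-\varepsilon]}^t$ on $[\varepsilon,2L-\varepsilon]$ and invokes the analytic-semigroup machinery of \cite{henry.81} for existence. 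The stated reason is that off-the-shelf existence and estimate results for mixed boundary conditions are scarce; the symmetrization also places $x=L$ in the interior, so the gradient bound needed for the Neumann trace comes from interior rather than boundary Schauder theory. Your direct route is defensible in one space dimension, but the phrase ``standard theory for semilinear uniformly parabolic equations with mixed Dirichlet--Neumann data'' is doing real work and needs a precise reference or a short self-contained argument. Your remaining choices---the time-translation trick for $\partial_t w$ and uniform-in-$t$ interior estimates for the $t\to\infty$ limit---are correct alternatives to the paper's device of differentiating the equation in $t$ and to the auxiliary problem of Proposition~\ref{convergence}.

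There is one concrete slip. You write that $u_L$ ``satisfies the same boundary conditions'' as the $\varepsilon$-problems and deduce $u_L\le\overline w_\varepsilon$ by comparison. But $u_L(\varepsilon)>0$ while $\overline w_\varepsilon(t,\varepsilon)=0$, so $u_L$ is \emph{not} dominated on the left piece of the parabolic boundary of $\overline{\mathrm{BVP}}^t_{[\varepsilon,L]}$, and the inequality $u_L\le\overline w_\varepsilon$ fails near $x=\varepsilon$. (The companion inequality $\underline w_\varepsilon\le u_L$ is fine, since there one only needs $u_L(\varepsilon)\ge 0$.) The easy repair is to postpone the comparison with $u_L$ until after $\varepsilon\searrow 0$, where $u_L(0)=0=\overline w(t,0)$; equivalently, obtain $u_L\le\overline w$ a posteriori from the fact that $\overline w(t,\cdot)$ decreases to a stationary solution that must equal $u_L$ by Theorem~\ref{truncation}.
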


{We only spell out the proof for $\underline{\mathrm{BVP}}_{[0,L]}^{t}$, the
other case being analogous. }We tackle the proof by studying a spatially
symmetric version of $\underline{\mathrm{BVP}}_{[\varepsilon ,L]}^{t}$ on
the interval $[\varepsilon ,2L-\varepsilon ],$ denoted by $\underline{%
\mathrm{SBVP}}_{[\varepsilon ,2L-\varepsilon ]}^{t}.$ The symmetric problem
has boundary conditions of Dirichlet type at both ends which allows us to
refer to the literature more comfortably. Moreover, $L$ is in the interior
of the spatial domain of the symmetric problem, and this gives us access to
uniform a-priori estimates of the spatial derivative near $L$, making the
limiting procedure for $\varepsilon \rightarrow 0$ less involved. The
conclusions of Theorem \ref{theo: BVP Lt} become a simple corollary of the
results for $\underline{\mathrm{SBVP}}_{[0,2L]}^{t}$. The price we have to
pay for taking the symmetrization route is discontinuity of coefficients at $%
x=L$.

\begin{definition}
A function $w^{\varepsilon }\in C^{1,2}((0,\infty )\times (\varepsilon
,2L-\varepsilon ))\cap C([0,\infty )\times \lbrack \varepsilon
,2L-\varepsilon ])$ is said to be a solution of $\underline{\mathrm{SBVP}}%
_{[\varepsilon ,2L-\varepsilon ]}^{t}$, if i) it is symmetric with respect
to $L,$ i.e. $w^{\varepsilon }(t,x)=w^{\varepsilon }(t,2L-x)$; ii) it
satisfies 
\begin{equation}
w_{t}^{\varepsilon }=M(x)w_{xx}^{\varepsilon }-A(x)w_{x}^{\varepsilon
}-bw^{\varepsilon }+C(x,w_{x}^{\varepsilon })  \label{S}
\end{equation}%
on $(0,\infty )\times (\varepsilon ,2L-\varepsilon )$, (\ref{lokraj}), and (%
\ref{zdola}) for $x\in \lbrack \varepsilon ,2L-\varepsilon ],$ where 
\begin{eqnarray*}
M(x) &=&%
\begin{cases}
{x^{2}} & \mbox{ for }0\leq x\leq L \\ 
{(2L-x)^{2}} & \mbox{ for }L\leq x\leq 2L%
\end{cases}
\\
A(x) &=&%
\begin{cases}
ax & \mbox{ for }0\leq x\leq L \\ 
-a(2L-x) & \mbox{ for }L<x\leq 2L.%
\end{cases}
\\
C(x,p) &=&\frac{1}{2}(\emph{sign}(L-x)p-1)^{2}
\end{eqnarray*}
\end{definition}

\begin{remark}
\label{discontinuity} Function $A$ is discontinuous at $x=L$. The same is
true of $C(x,p)$ unless $p=0$. In what follows we will employ a-priori
estimates from \cite{lieberman.96}, \cite{ladyzhenskaya.al.68} that
ostensibly assume continuity of the data of the equation. Nevertheless, a
close inspection of the arguments reveals that one only needs continuity of
the terms obtained by composition of the data with the solutions, that is
continuity of $M(x)w_{xx}^{\varepsilon }$, $A(x)w_{x}^{\varepsilon }$, and $%
C(x,w_{x}^{\varepsilon })$. This holds true in our case because any smooth
spatially symmetric function $w^{\varepsilon }(t,x)$ has $w_{x}^{\varepsilon
}(t,L)=0$.
\end{remark}

To establish existence and uniqueness of solutions to $\underline{\mathrm{%
SBVP}}_{[\varepsilon ,2L-\varepsilon ]}^{t}$ for $\varepsilon >0$ we apply
the theory of analytic semigroups \cite{henry.81}.

\begin{lemma}
\label{regexist} For given $0<\varepsilon <L$, $\underline{\mathrm{SBVP}}%
_{[\varepsilon ,2L-\varepsilon ]}^{t}$ has a unique solution $w^{\varepsilon
}$ satisfying 
\begin{equation}
0\leq w^{\varepsilon }(t,x)\leq \min \{x,2L-x\}\text{ on }[0,\infty )\times
\lbrack \varepsilon ,2L-\varepsilon ],  \label{horedolu}
\end{equation}%
and for $0<\varepsilon _{1}<\varepsilon _{2}<L$ 
\begin{equation}
w^{\varepsilon _{1}}\geq w^{\varepsilon _{2}}\text{ on }[0,\infty )\times
\lbrack \varepsilon _{2},2L-\varepsilon _{2}].  \label{dohora}
\end{equation}
\end{lemma}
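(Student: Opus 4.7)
The plan is to combine analytic semigroup theory for local existence and uniqueness with a barrier argument (sub/super-solutions plus the parabolic comparison principle) to obtain the a-priori bound (\ref{horedolu}). That bound then upgrades local existence to global existence and, applied a second time, yields the monotonicity (\ref{dohora}).

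First I would set up the semigroup framework on the interval $[\varepsilon,2L-\varepsilon]$, working in the closed subspace of spatially symmetric functions in (say) $C([\varepsilon,2L-\varepsilon])$ or a fractional power space derived from the symmetric part of the linearized operator $\mathcal{L}u:=M(x)u_{xx}-A(x)u_x-bu$ with homogeneous Dirichlet conditions. Because $\varepsilon>0$, the coefficients $M,A$ are bounded and $M\ge\varepsilon^2>0$, so $\mathcal{L}$ is uniformly elliptic on $(\varepsilon,2L-\varepsilon)$; it therefore generates an analytic semigroup in the sense of \cite{henry.81}. On smooth symmetric functions $w_x(L)=0$, so the composed terms $A(x)w_x$ and $C(x,w_x)$ are continuous across $x=L$ despite the jump in $A$ and the sign in $C$ (cf.\ Remark \ref{discontinuity}); symmetry is preserved under the flow because $x\mapsto 2L-x$ is a symmetry of the coefficients in the symmetric class. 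The nonlinearity $w\mapsto C(x,w_x)$ is locally Lipschitz on each fractional power space controlling $w_x$, yielding a unique mild solution on a maximal interval $[0,T_{\max})$.

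Next I would construct a subsolution $\alpha(t,x)\equiv 0$ and a supersolution $\beta(x):=\min\{x,2L-x\}$. A direct computation shows $-\mathcal{L}\alpha+C(x,\alpha_x)=\tfrac12\ge 0$, so $\alpha$ is a classical subsolution, while on each half-interval $[\varepsilon,L]$ and $[L,2L-\varepsilon]$ one checks $-\mathcal{L}\beta+C(x,\beta_x)=(a+b)\beta\ge 0$ using $a+b>0$; at the kink $x=L$, $\beta$ is concave (the left derivative $+1$ exceeds the right derivative $-1$), so it is an admissible generalized supersolution in the parabolic sense. Both satisfy the Dirichlet condition at the endpoints ($\alpha=0\le w^\varepsilon=0$ there, $\beta\ge 0$) and the initial condition ($\alpha=0=w^\varepsilon(0,\cdot)\le \beta$). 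Applying the parabolic comparison principle for semilinear equations (e.g.\ \cite{lieberman.96}, once the continuity concerns of Remark \ref{discontinuity} are cleared by staying in the symmetric class) gives (\ref{horedolu}). The a-priori bound rules out blow-up in $L^\infty$, and standard parabolic regularity combined with the quadratic growth of $C$ in $w_x$ also controls $w^\varepsilon_x$ interior-uniformly on compact time intervals; this extends the solution to $T_{\max}=\infty$.

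For the monotonicity (\ref{dohora}), take $0<\varepsilon_1<\varepsilon_2<L$ and restrict $w^{\varepsilon_1}$ to $[0,\infty)\times[\varepsilon_2,2L-\varepsilon_2]$. By (\ref{horedolu}) applied to $w^{\varepsilon_1}$ one has $w^{\varepsilon_1}(t,\varepsilon_2)\ge 0=w^{\varepsilon_2}(t,\varepsilon_2)$ (and the symmetric statement at $2L-\varepsilon_2$), both functions solve the same PDE, and they agree at $t=0$. The same parabolic comparison principle therefore yields $w^{\varepsilon_1}\ge w^{\varepsilon_2}$ on the common domain. Uniqueness on the fixed domain follows from the local-Lipschitz property used in the semigroup step, together with the $L^\infty$ control just proved.

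The main obstacle is the mildly degenerate, non-standard setup at $x=L$: one must justify invoking standard semilinear parabolic theory (existence, comparison, a-priori derivative bounds) even though $A(x)$ and $C(x,\cdot)$ are discontinuous in $x$ there. The chosen workaround — restricting throughout to the invariant subspace of spatially symmetric functions, so that $w_x(t,L)=0$ kills the offending terms — is natural but needs to be threaded consistently through the semigroup construction, the barrier argument, and the comparison principle, as flagged in Remark \ref{discontinuity}.
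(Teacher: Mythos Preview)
Your proposal is correct and follows essentially the same route as the paper: analytic semigroup theory in the symmetric subspace \`a la \cite{henry.81} for local existence and uniqueness, the barriers $0$ and $\min\{x,2L-x\}$ for (\ref{horedolu}), a gradient bound to rule out blow-up and obtain global existence, and a comparison argument for (\ref{dohora}). The only cosmetic differences are that the paper works concretely in $L_2$ with $X^{1/2}=H_0^1$ and, for (\ref{dohora}), extends $w^{\varepsilon_2}$ by zero to the larger domain (a subsolution there) rather than restricting $w^{\varepsilon_1}$ to the smaller one as you do; also, watch the sign convention in your sub/super-solution check, since as written both displayed inequalities have the same form $-\mathcal{L}\cdot+C\ge 0$ yet lead to opposite conclusions.
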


\begin{proof}
Denote $X=L_{2}(\varepsilon ,2L-\varepsilon )\cap \{y:y(x)=y(2L-x)\}$.
Further, define $\mathcal{M}:D(\mathcal{M})=X\cap H_{0}^{1}(\varepsilon
,2L-\varepsilon )\cap H^{2}(\varepsilon ,2L-\varepsilon )\rightarrow X$ by 
\begin{equation*}
(\mathcal{M}y)(x)=-M(x)y^{\prime \prime }(x)
\end{equation*}%
$\mathcal{M}$ is a linear unbounded densely defined operator $D(\mathcal{M}%
)\rightarrow X$. From the Sturm-Liouville theory of linear boundary value
problems for second order linear ordinary differential equations it follows
that the spectrum of $\mathcal{M}$ consists of a sequence of real
eigenvalues with the only accumulation point $\infty $. Consequently, $%
\mathcal{M}$ is sectorial (\cite{henry.81}, Definition 1.3.1) and, thus, the
infinitesimal generator of an analytic semigroup (\cite{henry.81},
Definition 1.3.3). As such, it admits the fractional power $\mathcal{M}^{1/2}
$ (\cite{henry.81}, Definition 1.4.1) which is a densely defined linear
operator $D(\mathcal{M}^{1/2})\rightarrow X$, $X^{1/2}=D(\mathcal{M}%
^{1/2})\in X$ (\cite{henry.81}, Definition 1.4.7). For our $\mathcal{M}$ one
has $X^{1/2}=H_{0}^{1}(0,2L)$,{\ which is by definition the space of
functions vanishing on the set $\{0,2L\}$ with derivatives in $L_{2}(0,2L)$}
(\cite{henry.81}, Example{\ 6 of Section 1.4}).

Following \cite{henry.81} we write our problem as an abstract differential
equation 
\begin{equation}
dy/dt+\mathcal{M}y=f(y)  \label{henry}
\end{equation}%
for $y\in X$ and $f:X^{1/2}\mapsto X$ given by 
\begin{equation*}
f(y)(x)=-A(x)y^{\prime }(x)-by(x)+C(x,y{\color{red}^{\prime }}(x)).
\end{equation*}%
Since $f$ is locally Lipschitz continuous, local existence and uniqueness of
the solution of the problem (\ref{henry}), $y(0)=0$, is provided by \cite%
{henry.81}, Theorem 3.3.3.

Inequality (\ref{horedolu}) follows from the fact that $0$ is a subsolution
and $\min \{x,2L-x\}$ is a supersolution of the problem $\underline{\mathrm{%
SBVP}}_{[\varepsilon ,2L-\varepsilon ]}^{t}$. From \cite{lieberman.96},
Theorem 10.17 it follows that $w_{x}^{\varepsilon }$ is bounded as well, the
bound depending only on the bound of $w^{\varepsilon }$. That is, the local
solution $y(t)$ is bounded in $X^{1/2}=H_{0}^{1}$. From \cite{henry.81},
Theorem 3.3.4 is thus follows that the solution extends to $t\in \lbrack
0,\infty )$. The inequality (\ref{dohora}) follows similarly, since the
function $w^{\varepsilon _{2}}$ extended by 0 to $[0,\infty )\times \lbrack
\varepsilon _{1},\varepsilon _{2}]\cup \lbrack 2L-\varepsilon
_{2},2L-\varepsilon _{1}]$ is a subsolution for $\underline{\mathrm{SBVP}}%
_{[\varepsilon _{1},2L-\varepsilon _{1}]}^{t}$.
\end{proof}

We now describe the limiting procedure for $\varepsilon \rightarrow 0.$

\begin{proposition}
\label{prop: BVP Lt} For given $L$ the problem $\underline{\mathrm{SBVP}}%
_{[0,2L]}^{t}$ has a unique solution $w\in \mathcal{C}^{1,2}((0,\infty
)\times (0,2L)\cap \mathcal{C}([0,\infty )\times \lbrack 0,2L])$. This
solution satisfies 
\begin{eqnarray}
0 &\leq &w(t,x)\leq \min \{x,2L-x\},  \label{horedolu0_s} \\
\frac{\partial w(t,x)}{\partial t} &\geq &0.  \label{eq: w_L upwards_s}
\end{eqnarray}
\end{proposition}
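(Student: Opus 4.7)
The plan is to construct $w$ as the pointwise monotone limit of the symmetric solutions $w^{\varepsilon}$ from Lemma~\ref{regexist}, then to promote this convergence to $\mathcal{C}^{1,2}$ on compact subsets of $(0,\infty)\times(0,2L)$ via uniform interior parabolic estimates, and finally to pass to the limit in the PDE and data. The monotonicity~(\ref{dohora}) makes the family $\{w^{\varepsilon}\}$ pointwise non-increasing in $\varepsilon$, so
\[
w(t,x):=\lim_{\varepsilon\searrow 0}w^{\varepsilon}(t,x)
\]
exists on $[0,\infty)\times(0,2L)$ and inherits $0\leq w(t,x)\leq\min\{x,2L-x\}$ from~(\ref{horedolu}), which is precisely~(\ref{horedolu0_s}). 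The same envelope forces $w(t,0)=w(t,2L)=0$ and $w(0,x)=0$, and continuity of $w$ at every point of the parabolic boundary follows from the squeeze.

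For interior regularity, fix a compact $K\Subset(0,\infty)\times(0,2L)$ disjoint from the line $\{x=L\}$. For all $\varepsilon$ small enough $K$ lies strictly inside the domain of $w^{\varepsilon}$, the coefficients of~(\ref{S}) are smooth on $K$, and the equation is uniformly parabolic there since $M(x)\geq c_K>0$. Standard interior Schauder estimates from \cite{lieberman.96}, fed by the uniform gradient bound used in the proof of Lemma~\ref{regexist} to tame the quadratic $C$, give $\mathcal{C}^{1+\alpha/2,2+\alpha}$ bounds that are uniform in $\varepsilon$, so Arzelà--Ascoli combined with uniqueness of the pointwise limit yields $w^{\varepsilon}\to w$ in $\mathcal{C}^{1,2}(K)$ and $w$ solves~(\ref{S}) on $K$. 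For a compact set crossing $\{x=L\}$ I would use the spatial symmetry $w^{\varepsilon}(t,x)=w^{\varepsilon}(t,2L-x)$, which forces $w_x^{\varepsilon}(t,L)=0$ and, as in Remark~\ref{discontinuity}, makes the composed terms $A(x)w_x^{\varepsilon}$ and $C(x,w_x^{\varepsilon})$ continuous across $L$ so that the same estimates extend.

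Monotonicity~(\ref{eq: w_L upwards_s}) I would establish by a time-shift comparison. For $s>0$ the function $(t,x)\mapsto w(t+s,x)$ solves the same PDE with initial datum $w(s,x)\geq 0=w(0,x)$ and identical Dirichlet data. Linearizing the quadratic $C$ along a convex combination of the two solutions yields a linear parabolic equation for their difference with bounded coefficients on compacta and with zero initial and boundary data, and the parabolic maximum principle gives $w(t+s,x)\geq w(t,x)$, hence $w_t\geq 0$. Exactly the same argument applied to the difference of two hypothetical solutions of $\underline{\mathrm{SBVP}}_{[0,2L]}^{t}$ delivers uniqueness.

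The main obstacle I anticipate is the passage to the limit across the discontinuity line $\{x=L\}$. Remark~\ref{discontinuity} gives the formal idea --- the compositions entering the equation remain continuous because every smooth spatially symmetric function has zero spatial derivative at $L$ --- but turning this observation into a genuine uniform $\mathcal{C}^{1,2}$ estimate across $L$ will likely require a symmetric cut-off argument and a bootstrap from a weak to a classical solution, rather than an off-the-shelf citation.
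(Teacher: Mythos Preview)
Your overall architecture matches the paper's: take the monotone limit of $w^{\varepsilon}$, upgrade regularity on interior compacts, pass to the limit in the equation, and establish uniqueness and $w_t\ge 0$ via comparison. Two of your choices differ from the paper but are sound. For interior regularity the paper only extracts $\mathcal{C}^{0}/\mathcal{C}^{1}$ convergence from H\"older estimates on $w_x^{\varepsilon}$ (Lieberman, Theorem~12.2), passes to the limit in the \emph{weak} formulation, and then invokes Ladyzhenskaya--Solonnikov--Uraltseva to bootstrap to a classical solution; your direct Schauder route is cleaner when it works, and the weak-solution detour is precisely the paper's device for handling the discontinuity at $x=L$ that you flag as the main obstacle. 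For monotonicity the paper differentiates in $t$ and observes that $v=w_t$ solves a linear problem with initial datum $v(0,x)=\tfrac12$ (obtained by substituting $w(0,\cdot)=0$ into~(\ref{S})), whence $v\geq0$ by comparison with the zero subsolution; your time-shift comparison is an equally valid alternative and arguably more robust since it avoids the extra regularity needed to differentiate the equation.

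There is, however, a genuine gap. Your claim that ``continuity of $w$ at every point of the parabolic boundary follows from the squeeze'' is false on the initial slice $\{0\}\times(0,2L)$. The envelope $0\le w(t,x)\le\min\{x,2L-x\}$ pins $w$ down at the lateral boundaries $x=0,2L$, but near an interior initial point $(0,x_0)$ it only gives an upper bound close to $\min\{x_0,2L-x_0\}>0$, which says nothing about $w(t,x)\to 0$ as $t\searrow 0$. Knowing $w^{\varepsilon}(0,x)=0$ pointwise and $w^{\varepsilon}\nearrow w$ does not, by itself, transmit continuity at $t=0$ to the limit. The paper spends a separate step on this: it invokes a uniform-in-$\varepsilon$ bound on $\|w_t^{\varepsilon}\|_{L_2}$ from \cite{ladyzhenskaya.al.68}, Theorem~V.3.1, to obtain equicontinuity in $t$ down to $t=0$ on interior $x$-intervals, and only then concludes that $w$ attains the initial condition continuously. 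You will need either this estimate or a barrier/supersolution vanishing at $t=0$; without it, the limit $w$ is not known to lie in $\mathcal{C}([0,\infty)\times[0,2L])$ and your subsequent time-shift comparison (which uses $w(s,x)\ge 0 = w(0,x)$ as the initial inequality for a problem posed from $t=0$) is not yet on firm ground.
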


\begin{proof}
\emph{Step 1)} Denote by $w^{\varepsilon }$ the unique solution of $%
\underline{\mathrm{SBVP}}_{[\varepsilon ,2L-\varepsilon ]}^{t}$. By Lemma %
\ref{regexist} the family of functions $w^{\varepsilon }$ is bounded from
above and increasing as $\varepsilon \searrow 0$ . Hence it has a pointwise
limit $w$ which satisfies (\ref{horedolu0_s}) thanks to (\ref{horedolu}).
Trivially, $w(t,x)=w(t,2L-x)$ and $w(t,0)=0$. We will show that $w$ is in
fact a solution of $\underline{\mathrm{SBVP}}_{[0,2L]}^{t}$.

\emph{Step 2)} Choose $\varepsilon <x_{1}<x_{2}<2L-\varepsilon ,\ 0<\tau <T$
and denote $G=(\tau ,T)\times (x_{1},x_{2})$. Because the nonlinear term $C$
satisfies the Bernstein condition of quadratic growth, by Theorem 12.2 of 
\cite{lieberman.96}, the functions $w_{x}^{\varepsilon }$ are uniformly H%
\"{o}lder continuous in $G$. Therefore, we can find a sequence $\varepsilon
_{n}\rightarrow 0$ such that both $w^{\varepsilon _{n}}$ and $%
w_{x}^{\varepsilon _{n}}$ converge uniformly in $G$ to $w,w_{x}$,
respectively.

\emph{Step 3)} We will now show that $w$ is a weak solution of PDE %
\eqref{PDE} on $G$. Take any function $\phi \in \mathcal{C}^{\infty }(%
\overline{G})$ which vanishes with all its derivatives at the boundary of $G$
and $n$ so large that $[0,\infty )\times \lbrack \varepsilon _{n},L]\supset
G $. Since $w^{\varepsilon _{n}}$ solves \eqref{PDE} in $G$, one has 
\begin{equation*}
\int_{G}[w_{t}^{\varepsilon _{n}}-M(x)w_{xx}^{\varepsilon
_{n}}+A(x)w_{x}^{\varepsilon _{n}}+bw^{\varepsilon
_{n}}-C(x,w_{x}^{\varepsilon _{n}})]\phi dtdx=0,
\end{equation*}%
or equivalently, 
\begin{equation*}
\int_{G}[(w_{t}^{\varepsilon _{n}}-(M(x)w_{x}^{\varepsilon
_{n}})_{x}+N(x,w^{\varepsilon },w_{x}^{\varepsilon })]\phi dtdx=0
\end{equation*}%
where 
\begin{equation*}
N(x,w,p)=%
\begin{cases}
(-2+a)xp+bw-\frac{1}{2}(p-1)^{2} & \mbox{ for }0\leq x\leq L \\ 
(2-a)(2L-x)p+bw-\frac{1}{2}(-p-1)^{2} & \mbox{ for }L<x\leq 2L.%
\end{cases}%
\end{equation*}%
Integrating the first two terms by parts we obtain 
\begin{equation*}
-\int_{G}w^{\varepsilon _{n}}\phi _{t}dxdt+\int_{G}M(x)w_{x}^{\varepsilon
_{n}}\phi _{x}dxdt+\int_{G}N(x,w^{\varepsilon },w_{x}^{\varepsilon })\phi
dtdx=0.
\end{equation*}%
Because of uniform convergence of the sequences $\{w^{\varepsilon
_{n}}\}_{n} $ and $\{w_{x}^{\varepsilon _{n}}\}_{n}$ we can pass to the
limit to obtain 
\begin{equation*}
-\int_{G}w\phi _{t}dxdt+\int_{G}w_{x}M(x)\phi
_{x}dxdt+\int_{G}N(x,w,w_{x})\phi dtdx=0.
\end{equation*}

\emph{Step 4)} Since both $0<x_{1}<x_{2}<2L$, $0<\tau <T$ and $\phi $ are
arbitrary this means that $w$ is a weak solution and consequently, a
classical solution as well on any interior subdomain (\cite%
{ladyzhenskaya.al.68},VI.1). As such, it is $C^{1,2}((0,\infty )\times
(0,2L) $.

\emph{Step 5)} Since the functions $w^{\varepsilon }$ satisfy \eqref{zdola},
to prove that $w$ satisfies \eqref{zdola} as well, it suffices to prove that
for fixed $x_{0}\in (0,L)$, $w$ is equicontinuous on $t$, uniformly with
respect to $\varepsilon $ and $x\in \lbrack x_{1},x_{2}],t\in \lbrack 0,T],\
0<x_{1}<x<x_{2}<L,\ T>0$. This, however, follows from \cite%
{ladyzhenskaya.al.68}, Theorem V.3.1, according to which $\Vert
w_{t}^{\varepsilon }\Vert _{L_{2}[0,T]}$ is bounded uniformly with respect
to $(t,x)\in \lbrack 0,T]\times \lbrack x_{1},x_{2}]$ and $\varepsilon >0$.

\emph{Step 6)} {Uniqueness of the solution follows from the parabolic
maximum principle \cite{lieberman.96}, Theorem 2.10, applied to the
difference of solutions. }

\emph{Step 7)} In a straightforward way one can verify that function $v=w_{t}
$ is a weak solution of the problem 
\begin{eqnarray}
v_{t} &=&M(x)v_{xx}-bv-(A(x)-\hat{C}(t,x))v_x  \label{rovder}
\\
v(t,0) &=&0,\ v(t,2L)=0,\ v(0,x)=\frac{1}{2};  \label{ostatne}
\end{eqnarray}%
where 
\begin{equation*}
\hat{C}(t,x)=%
\begin{cases}
w_{x}(t,x)-1 & \mbox{ for }0\leq x\leq L \\ 
{w_{x}(t,x)+1} & \mbox{ for }L<x\leq 2L;%
\end{cases}%
\end{equation*}%
the initial condition for $v$ following from \eqref{S} following by
substitution of $w(t,0)=0$ into \eqref{S}. By \cite{ladyzhenskaya.al.68},
VI.2 and Remark \ref{discontinuity} $v$ is a classical solution. Since 0 is
a subsolution of the problem (\ref{rovder}), (\ref{ostatne}), its solution $%
v=w_{t}$ is nonnegative.
\end{proof}

Finally, we prove convergence for $t\rightarrow \infty $.

\begin{proposition}
\label{convergence} For $t\rightarrow \infty $ the solution of the problem $%
\underline{\mathrm{SBVP}}_{[0,2L]}^{t}$ converges to a (stationary) solution
of $\mathrm{SBVP}_{[0,2L]}$, defined as time-independent solution of $%
\underline{\mathrm{SBVP}}_{[0,2L]}^{t}$ without the boundary condition (\ref%
{lokraj}).
\end{proposition}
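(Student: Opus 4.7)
The plan is to exploit the monotonicity (\ref{eq: w_L upwards_s}) and the uniform bound (\ref{horedolu0_s}) to extract a limit, and then to upgrade pointwise convergence to classical convergence on interior compact sets via parabolic regularity.

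First, I would observe that by (\ref{horedolu0_s}) and (\ref{eq: w_L upwards_s}), for each fixed $x \in [0, 2L]$ the map $t \mapsto w(t,x)$ is nondecreasing and bounded above by $\min\{x, 2L-x\}$, hence the pointwise limit $w_\infty(x) := \lim_{t \to \infty} w(t,x)$ exists and inherits the bound $0 \leq w_\infty(x) \leq \min\{x, 2L-x\}$ together with symmetry $w_\infty(x) = w_\infty(2L-x)$.

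Second, I would apply interior parabolic a-priori estimates (the same Bernstein/Lieberman type bound \cite{lieberman.96}, Theorem 12.2 used already in Step 2 of the proof of Proposition \ref{prop: BVP Lt}) on strips of the form $[T, T+2] \times [x_1, x_2]$ with $[x_1,x_2]$ compactly contained in $(0,L)$ (the case $[x_1,x_2] \subset (L,2L)$ is symmetric). Since the data of the equation, $M$, $A$, $C$, are independent of $t$, and the sup-norm bound on $w$ is independent of $T$, these estimates deliver uniform-in-$T$ bounds on $w$, $w_x$, $w_{xx}$, and $w_t$ in some $C^{1+\alpha/2, 2+\alpha}$-norm on $[T+1,T+2] \times [x_1,x_2]$. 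Arzel\`a-Ascoli then gives, for any sequence $T_n \to \infty$, a subsequence along which $w(T_n + s, x)$ converges in $C^{1,2}_{\mathrm{loc}}$ on $\mathbb{R} \times ((0,L)\cup (L,2L))$ to some function $w^*(s,x)$. Monotonicity in $t$ forces $w^*(s,x) = w_\infty(x)$ for every $s$, so the limit is time-independent. From
\begin{equation*}
\int_{T_n}^{T_n+1} w_t(t,x)\,dt \;=\; w(T_n+1,x) - w(T_n,x) \;\longrightarrow\; 0,
\end{equation*}
together with uniform H\"older continuity of $w_t$, I would conclude that $w_t(T_n + s, x) \to 0$ uniformly on interior compact sets, and then pass to the limit in (\ref{S}) to obtain that $w_\infty$ solves the stationary equation
\begin{equation*}
0 \;=\; M(x) w_\infty''(x) - A(x) w_\infty'(x) - b w_\infty(x) + C(x, w_\infty'(x))
\end{equation*}
classically on $(0,L) \cup (L,2L)$.

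The main obstacle is handling the discontinuity of the coefficients at $x = L$, which rules out a naive application of standard parabolic regularity across $L$. This is resolved exactly as in Remark \ref{discontinuity}: since every solution $w(t,\cdot)$ of the symmetric problem satisfies $w_x(t,L) = 0$, the compositions $A(x)w_x$ and $C(x,w_x)$ remain continuous at $L$, so the a-priori estimates of \cite{lieberman.96} can be applied separately on each side of $L$ and matched by symmetry. Note that the statement \textbf{does not} demand verification of the boundary condition (\ref{lokraj}) in the limit, so the singularity of the stationary ODE at $x = 0$ does not enter the argument; nevertheless, $w_\infty(0) = \lim_{t \to \infty} w(t,0) = 0$ is automatic from the pointwise limit.
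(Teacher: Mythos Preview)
Your argument is correct and follows the standard $\omega$-limit-set route: monotonicity plus the uniform bound yield a pointwise limit, interior parabolic Schauder estimates upgrade this to $C^{1,2}$-convergence on interior compact sets, and one then passes to the limit directly in (\ref{S}). Two minor points worth tightening: first, Theorem~12.2 in \cite{lieberman.96} only delivers a H\"older gradient bound; to reach the full $C^{1+\alpha/2,2+\alpha}$ estimate you must bootstrap, treating $C(x,w_x)$ as a H\"older lower-order term and invoking linear interior Schauder theory. Second, rather than work on $(0,L)$ and $(L,2L)$ separately, it is cleaner to apply the interior estimates on intervals straddling $L$, which Remark~\ref{discontinuity} explicitly permits; this yields $C^2$-regularity of $w_\infty$ at $L$ directly and avoids any question of whether the one-sided bounds remain uniform up to $L$.

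The paper takes a genuinely different path in its decisive step. After obtaining the pointwise limit $u$ and upgrading to uniform convergence on $[l,2L-l]$ via the gradient bound alone, it does \emph{not} pass to the limit in the equation. Instead it introduces an auxiliary initial--boundary value problem (\ref{WPDE})--(\ref{Wostatne}) on $[l,2L-l]$ with initial data $u$ and constant lateral data $u(l)$, which has a unique classical solution $W$. Writing $Y^T=W-w(T+\cdot,\cdot)$ and applying the linear parabolic maximum principle gives $0\le Y^T\le e^{\beta\tau}\varepsilon(T)\to 0$, hence $W(t,x)=\lim_{T\to\infty}w(T+t,x)=u(x)$ is time-independent, and therefore $u$ solves the stationary equation. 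This buys the authors a proof that uses only first-order a-priori bounds (no $w_{xx}$ or $w_t$ estimates are needed), at the cost of an indirect comparison argument; your approach is more direct but leans on higher-order regularity.
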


\begin{proof}
\emph{Step 1)} Since the solution $w$ of $\underline{\mathrm{SBVP}}%
_{[0,2L]}^{t}$ is increasing in $t$ and bounded by Proposition \ref{prop:
BVP Lt}, for $t\rightarrow \infty $ it converges pointwise to a function $u$
on $[0,2L]$ satisfying 
\begin{equation}
0\leq u(x)\leq \min \{x,2L-x\}.  \label{Uhoredolu}
\end{equation}%
We wish to show that $u$ solves $\mathrm{SBVP}_{[0,2L]}$.

\emph{Step 2) } From \cite{lieberman.96}, Theorem 12.2 it follows that for
any fixed $0<l<L,\ T>0$, $w_{x}$ is bounded on $(T,\infty )\times \lbrack
l,2L-l]$. Therefore, the family of functions $w(t,\cdot )$ is equicontinuous
on $[l,2L-l]$. Because by (\ref{horedolu0_s}) it is uniformly bounded, its
convergence to $u$ on $[l,2L-l]$ is uniform. Consequently, $u$ is continuous
on $(0,2L)$. Because of (\ref{horedolu0_s}) its continuity extends to $%
[0,2L] $.

\emph{Step 3) }By \cite{lieberman.96}, Theorems 12.25 and 12.2, for fixed $l$%
, the problem 
\begin{eqnarray}
W_{t} &=&M(x)W_{xx}-A(x)W_{x}-bW+C(x,W_{x}))\mbox{ for }l\leq x\leq 2L-l
\label{WPDE} \\
&&W(0,x)=u(x),\ W(t,l)=W(t,2L-l)=u(l)  \label{Wostatne}
\end{eqnarray}%
has a unique solution $W\in C^{1,2}((0,\infty )\times (l,2L-l))\cap
C^{0}([0,\infty )\times \lbrack l,2L-l])$ and, for fixed $\tau >0$, $W_{x}$
is bounded on $[\tau ,\infty )$. We wish to show that $W(t,x)\equiv u(x)$
for each $l$ which immediately implies that $u$ solves $\mathrm{SBVP}%
_{[0,2L]}$.

Fix $\tau ,T>0$ and for $0\leq t\leq \tau ,\ l\leq x\leq 2L-l$ denote 
\begin{equation}
Y^{T}(t,x)=W(t,x)-w(T+t,x).  \label{delta}
\end{equation}%
The function $Y^{T}$ solves the linear problem%
\begin{eqnarray}
Y_{t}^{T} &=&M(x)Y_{xx}^{T}-(A(x)-Q(t,x))Y_{x}^{T}-bY^{T}  \label{DPDE} \\
0 &\leq &Y^{T}(0,x)={u}(x)-w(T,x)\leq \varepsilon (T) \\
0 &\leq &Y^{T}(t,l)={u}(l)-w(T+t,l)\leq \varepsilon (T) \\
0 &\leq &Y^{T}(t,2L-l)={u}(l)-w(T+t,2L-l)\leq \varepsilon (T),
\label{Dostatne}
\end{eqnarray}%
where 
\begin{equation*}
Q(t,x)={%
\begin{cases}
\frac{1}{2}(W_{x}(t,x)+w_{x}(t,x)-2) & \mbox{ for }0\leq x\leq L \\ 
\frac{1}{2}(W_{x}(t,x)+w_{x}(t,x)+2) & \mbox{ for }L<x\leq 2L,%
\end{cases}%
}
\end{equation*}%
and $\varepsilon (T)\rightarrow 0$ for $T\rightarrow \infty $. For fixed $%
\tau >0$, $w_{x}(T+t,x),\ W_{x}(t,x)$ are both uniformly bounded for $0\leq
t\leq \tau ,l\leq x\leq L-l$ and so are $M,N$. Let $\beta $ be the uniform
bound of $M$. By the maximum principle for parabolic PDE (\cite{lieberman.96}%
, Theorem 2.4), one obtains $0\leq Y^{T}(t,x)\leq e^{\beta \tau }\varepsilon
(T),$ or equivalently, 
\begin{equation*}
W(t,x)=\lim_{T\rightarrow \infty }w(T+t,x)=u(x)\mbox{ for all }0\leq t\leq
\tau .
\end{equation*}
\end{proof}

\begin{proof}[Proof of Theorem \protect\ref{theo: BVP Lt}]
Let $\underline{w}$ be the unique solution of $\underline{\mathrm{SBVP}}%
_{[0,2L]}^{t}$ established in Proposition \ref{prop: BVP Lt}. Because of
symmetry its restriction $\underline{w}|_{[0,L]}$ solves $\underline{\mathrm{%
BVP}}_{[0,L]}^{t}$. Conversely, since the symmetric extension of any
solution of $\underline{\mathrm{BVP}}_{[0,L]}^{t}$ is a solution of $%
\underline{\mathrm{SBVP}}_{[0,2L]}^{t}$ and the latter is unique, $%
\underline{w}|_{[0,L]}$ is the unique solution of $\underline{\mathrm{BVP}}%
_{[0,L]}^{t}$. By Proposition \ref{convergence} $\underline{w}|_{[0,L]}$
converges to a stationary solution of $\underline{\mathrm{BVP}}_{[0,L]}^{t}$%
, i. e. to a solution of $\underline{\mathrm{BVP}}_{[0,L]}$ known to be
unique by Theorem \ref{truncation}.
\end{proof}

\subsection{Finite difference scheme for BVP$_{[0,L]}^{t}$ \label{sect:
finite difference}}

For the spatial variable $x$ we employ a non-equidistant partition defined
by $x_{j}=e^{\xi _{j}}-1-\xi _{j}+\xi _{j}^{3/2}$, $j=0,1,\dots ,N$, where
the points $\{\xi _{j}\}_{j=0}^{N}$ are equidistant, $x_{0}=0$ and $x_{N}=L$%
. We use a uniform time grid with $M$ points and step $h=T/M.$ In vector
notation the explicit finite difference scheme reads%
\begin{equation}
w_{i,1:N-1}=w_{i-1,1:(N-1)}+h\,\left( Aw_{i-1,\cdot }+F(w_{i-1,\cdot
})\right) \text{ for }i=1,\ldots ,M,  \label{euler_internal}
\end{equation}%
where the non-zero terms of matrix $A\in \mathbb{R}^{\left( N-1\right)
\times \left( N+1\right) }$ are given by 
\begin{align*}
A_{j,j-1}& =\frac{2\,x_{j}^{2}}{(x_{j+1}-x_{j-1})(x_{j}-x_{j-1})}+\frac{%
a\,x_{j}}{x_{j+1}-x_{j-1}}, \\
A_{j,j}& =-\frac{2\,x_{j}^{2}}{x_{j+1}-x_{j-1}}\left( \frac{1}{x_{j+1}-x_{j}}%
+\frac{1}{x_{j}-x_{j-1}}\right) -b, \\
A_{j,j+1}& =\frac{2\,x_{j}^{2}}{(x_{j+1}-x_{j-1})(x_{j+1}-x_{j})}-\frac{%
a\,x_{j}}{x_{j+1}-x_{j-1}},
\end{align*}%
for $j=1,2,\dots ,N-1$.

The non-linear term $F$ is given by 
\begin{equation*}
F(w_{i,\cdot })^{\top }=\frac{1}{2}\left[ 
\begin{array}{ccccc}
\left( \frac{w_{i,2}-w_{i,0}}{x_{2}-x_{0}}-1\right) ^{2} & \cdots & \left( 
\frac{w_{i,j+1}-w_{i,j-1}}{x_{j+1}-x_{j-1}}-1\right) ^{2} & \cdots & \left( 
\frac{w_{i,N}-w_{i,N-2}}{x_{N}-x_{N-2}}-1\right) ^{2}%
\end{array}%
\right] ,
\end{equation*}%
the boundary values are given by%
\begin{equation}
w_{i,0}=0,\qquad w_{i,N}=w_{i,N-1},  \label{euler_boundary}
\end{equation}%
and the initial condition is $w_{0,\cdot }=0$ for $\underline{\mathrm{BVP}}%
_{[0,L]}^{t}$ or $w_{0,\cdot }=x$ in the case of $\overline{\mathrm{BVP}}%
_{[0,L]}^{t}$.

Given $L$, $N$, time step $h$ and an initial condition for $w(0,x)$ we are
able to calculate an approximation of $w(t_{i+1},x)$ from the currently
known time layer $w(t_{i},x)$ using (\ref{euler_internal}) and (\ref%
{euler_boundary}). As proposed earlier the solutions of $\underline{\mathrm{%
BVP}}_{[0,L]}^{t}$ and $\overline{\mathrm{BVP}}_{[0,L]}^{t}$ converge
monotonically from below, resp. from above, to $u_{L}$, the solution of $%
\mathrm{BVP}_{[0,L]}$. Their convergence is demonstrated in panel (a) of
Figure \ref{fig: 1} and occurs numerically for $t=2$. In panel (b) we
contrast our solution with the one produced by Matlab solver \textsf{bvp5c}
designed to solve a less singular problem (\ref{eq: Weinmuller ODE}).

\begin{figure}[t]
\centering%
\begin{subfigure}[b]{0.49\textwidth}
        		\centering
               \includegraphics[height=5.2cm]{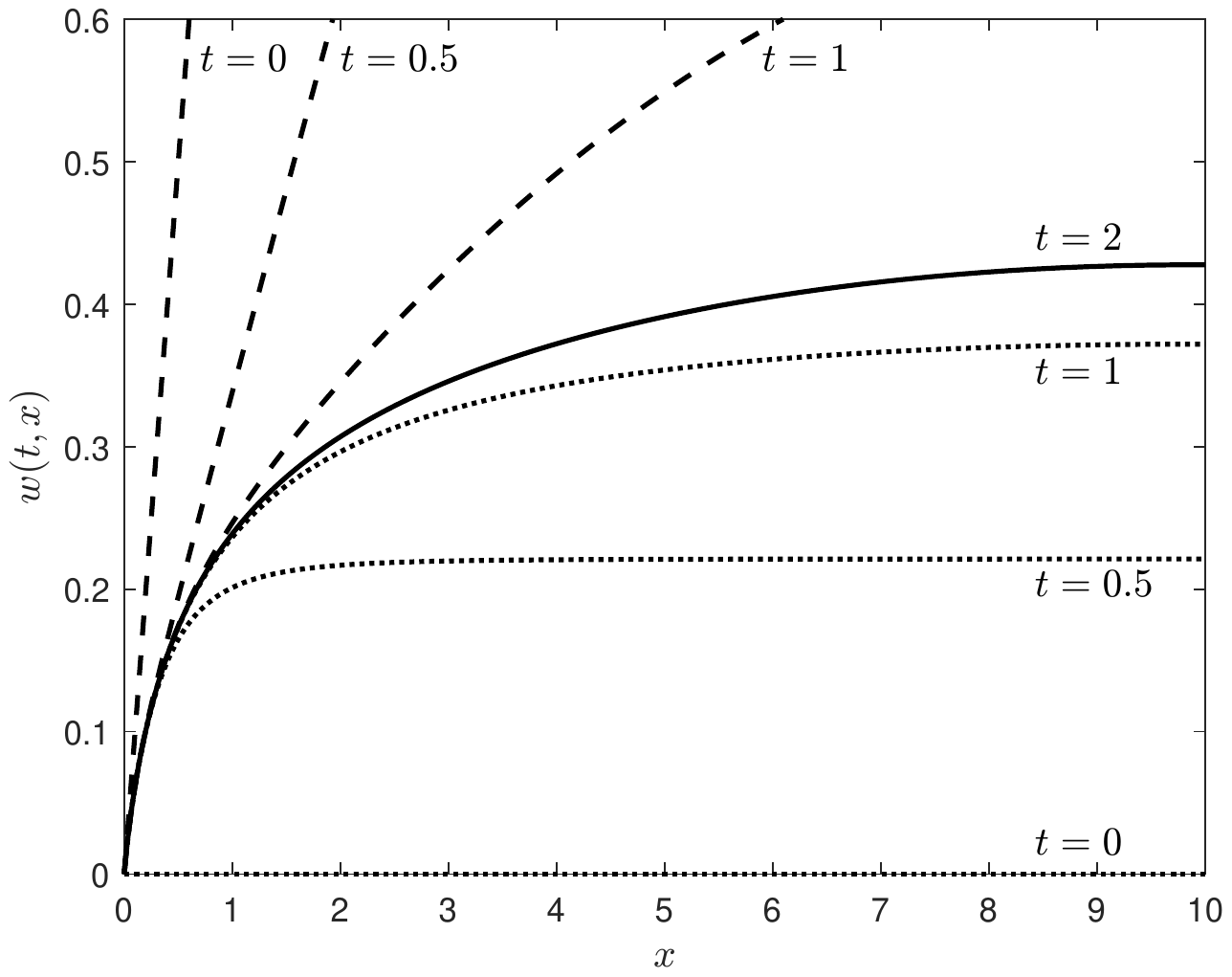}
                \caption{}
        \end{subfigure}%
\begin{subfigure}[b]{0.49\textwidth}
                \centering
                \includegraphics[height=5.2cm]{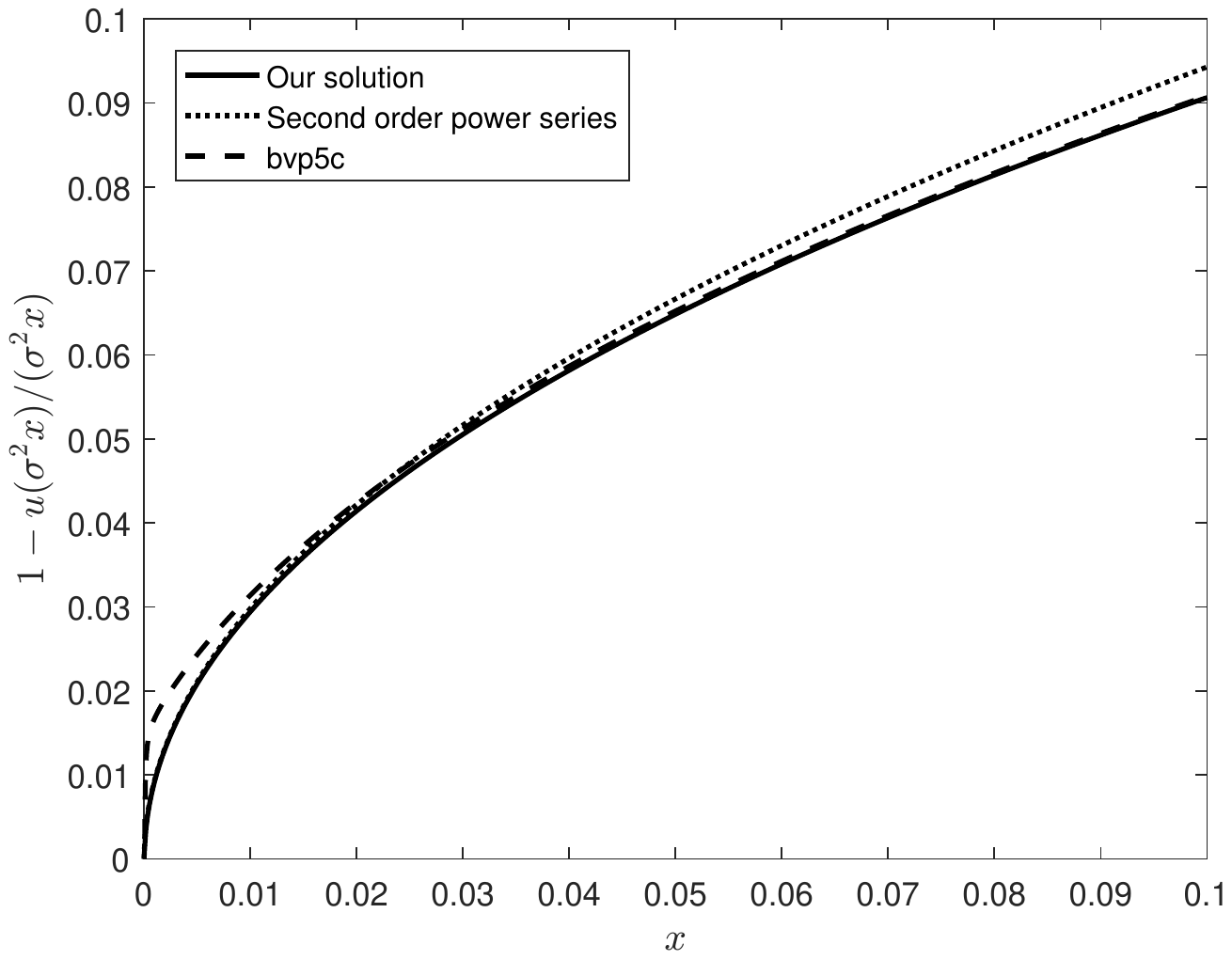}
                \caption{}
        \end{subfigure}
\caption{(a) Solutions of $\protect\underline{\mathrm{BVP}}_{[0,L]}^{t}$
(dotted) and $\overline{\mathrm{BVP}}_{[0,L]}^{t}$ (dashed) for $L=10$ and
different values of $t$. Solid line represents solution of $\mathrm{BVP}%
_{[0,L]}.$ (b) Comparison of $\mathrm{BVP}_{[0,L]}$ solution to solution
from Matlab routine \textsf{bvp5c}. The displayed quantity $1-u_{L}(\protect%
\sigma ^{2}x)/(\protect\sigma ^{2}x)$ represents approximate implementation
shortfall.}
\label{fig: 1}
\end{figure}

We aim to compute $u_{\infty }$ with sufficient precision on the interval $%
[0,1]$. The procedure has four nested loops. In the innermost loop, for a
chosen time step $h$, length of the spatial interval $L\geq 1$, and number
of partition points of the spatial interval $N\geq 10$ we determine the time
horizon $T$ (and thus also the number of time steps $M=T/h$) in the
following way. We consider two time layers, $T_{1}<T_{2}$ and the
corresponding numerical solutions $u_{i}(x):=w(T_{i},x)$ for $i=1,2$, which
we reparametrize in terms of relative implementation shortfall $%
f_{i}(x):=1-u_{i}(x)/x$. We distinguish between two regions for $x$: $%
\mathcal{X}=\left\{ x>0:f_{2}(x)\leq 0.01\right\} $ and its complement in $%
[0,1]$ denoted by $\mathcal{X}^{c}$.

For small $x$, we consider relative difference in $f_{i}$. Specifically, we
aim to attain 
\begin{equation}
\sup_{x\in \mathcal{X}}\left\vert 1-f_{2}(x)/f_{1}(x)\right\vert \leq 0.1.
\label{stop1}
\end{equation}%
For the remaining values of $x$ in the interval $[0,1]$ we target the
absolute difference in $f_{i}$ 
\begin{equation}
\sup_{x\in \mathcal{X}^{c}}\left\vert f_{2}(x)-f_{1}(x)\right\vert \leq
10^{-4}.  \label{stop2}
\end{equation}%
We start with $T_{1}=0.1$, $T_{2}=0.2$ and increase $T_{i}$ by $0.1$ until
conditions (\ref{stop1}) and (\ref{stop2}) are satisfied.

One level up, for given $L,h$ we start with $N_{1}=10,$ $N_{2}=20,$ denoting
the corresponding solutions obtained in the innermost loop by $u_{1}$ and $%
u_{2}$. We increase $N_{i}$ by $10$ until conditions (\ref{stop1}) and (\ref%
{stop2}) are met again.

Two levels up, for fixed $h$ we start with $L_{1}=1$ and $L_{2}=1.1.$ We
improve computational efficiency by using $u_{1}$ extended to the interval $%
[0,L_{2}]$ by a constant value, as the initial condition when computing $%
u_{2}$. We keep increasing $L_{i}$ by $0.1$ until conditions (\ref{stop1})
and (\ref{stop2}) are met.

In the outermost loop we check that the time step $h$ is sufficiently small
so as not to have any effect on the final solution. We start with $%
h_{1}=10^{-5}$ and $h_{2}=0.5\times 10^{-5}$ and denote corresponding
solutions determined by the previous loop by $u_{1}$ and $u_{2}$. We keep
halving the time step until conditions (\ref{stop1}) and (\ref{stop2}) are
met. Whenever possible we use previously computed values of $u$ as an
initial guess for the next step of the procedure. When passing from a
coarser to a finer mesh we perform this by cubic spline interpolation.

\subsection{Numerical results}

\label{sect: numerical results}Recall from (\ref{scaling}) that the value
function satisfies%
\begin{eqnarray}
V(s,z) &=&\frac{s^{2}}{\eta \sigma ^{2}}u_{\infty }(\eta \sigma ^{2}\frac{z}{%
s})=sz\frac{u_{\infty }(\sigma ^{2}x)}{\sigma ^{2}x},  \notag \\
x &=&\eta \frac{z}{s}.  \label{eq: x2}
\end{eqnarray}%
Here $u_{\infty }$ is the solution of $\mathrm{BVP}_{[0,\infty )}$ which in
practice will be approximated by solution $\underline{\mathrm{BVP}}%
_{[0,L]}^{t}$ for sufficiently high $t$ and $L$ as described in Section \ref%
{sect: finite difference}. 
\citet{breen.al.02}
estimate linear impact of the sale of 1000 shares in a 5-minute window at
around $0.18\%$ of unaffected price. If we let $z=1$ represent 1000 shares, $%
T=1$ one year with $n=250\times 8\times 60$ trading minutes and set the
initial stock price to $s=100$ the implied value of $\eta $ turns out to be%
\begin{equation*}
\eta =0.0018\times s\times \frac{5}{n}\approx 7.5\times 10^{-6}.
\end{equation*}%
The slightly higher estimated figure of $0.3\%$ price impact from 
\citet[Figure IV]{hasbrouck.91}
results in $\eta \approx 1.25\times 10^{-5}$. We set $\sigma =0.2$ in all
examples.

Variable $x$ in equation (\ref{eq: x2}) measures percentage drop in
execution price assuming complete liquidation over one calendar year at a
constant speed (and no accruing interest). Since $sz$ is the revenue from
selling the entire inventory $z$ at price $s$ immediately and without any
price impact, $I(s,z):=1-u_{\infty }(\sigma ^{2}x)/(\sigma ^{2}x)\ $measures
the percentage drop of average per-share realized price $V(s,z)/z$ relative
to pre-trade price $s$. The quantity $I(s,z)$ is colloquially known as the
`price impact'.

From (\ref{eq: vstar}) the agent's optimal selling strategy in the original
coordinates is given by%
\begin{equation*}
v(s,z)=\frac{s-V_{z}(s,z)}{2\eta }=s\frac{1-u_{\infty }^{\prime }(\eta
\sigma ^{2}\frac{z}{s})}{2\eta }.
\end{equation*}%
The time\ to liquidation, assuming constant liquidation speed (and no
accruing interest), equals 
\begin{equation*}
\tau (s,z):=\frac{z}{v(s,z)}=\frac{2x}{1-u_{\infty }^{\prime }(\sigma ^{2}x)}%
.
\end{equation*}%
However, the actual liquidation speed is far from constant -- the asymptotic
expansion (\ref{eq: u(x) series}) shows it to be proportional to $\sqrt{z}.$
Therefore, as a rule of thumb, $\tau (s,z)$ is roughly half of the actual
average time to liquidation. This can be seen in Figure \ref{fig: 3}.

\begin{table}[t]
\centering%
\begin{tabular}{ccccccccc}
\hline\hline
& $\sigma $ & $\eta $ & $s,z$ & $\lambda $ & $r$ & $\rho $ & $a$ & $b$ \\ 
\hline
\multicolumn{1}{l}{Parametrization 1} & 0.2 & $7.5\times 10^{-6}$ & 100 & 0
& 0 & 0.05 & 2 & 0.5 \\ 
\multicolumn{1}{l}{Parametrization 2} & 0.2 & $7.5\times 10^{-6}$ & 100 & $%
-0.1$ & 0 & 0 & $-3$ & 8 \\ 
\multicolumn{1}{l}{Parametrization 3} & 0.2 & $7.5\times 10^{-6}$ & 100 & 
0.03 & 0.01 & 0.05 & 3 & $-2.5$ \\ \hline
\end{tabular}%
\caption{Parameter values used in numerical examples.}
\label{tab:parameters}
\end{table}

Table \ref{tab:parameters} shows three combinations of parameter values used
in numerical examples. Parametrization 1 has $\lambda=r=0$, meaning that the
pressure to liquidate only stems from discounting future revenues at the
rate of $\rho=0.05$. Parametrization 2 has $r=\rho=0$ and the pressure to
liquidate in this case stems from the unaffected asset price having a
negative drift of $\lambda=-0.1$. The last parametrization has positive
values of all parameters. Note that the three parametrizations also cover
the three possible combinations of signs of $a$ and $b$ which allow for $%
a+b>0$ to be satisfied.

Part (a) of Figure \ref{fig: 2} shows the per-share price impact $I(s,z) = 1-%
\frac{u(\sigma ^{2}x)}{\sigma ^{2}x}$ for the three examples. Part (b) of
the same figure shows the time to liquidation $\tau (s,z)=\frac{2x}{%
1-u^{\prime }(\sigma ^{2}x)}$.

\begin{figure}[t]
\centering%
\begin{subfigure}[b]{0.49\textwidth}
        		\centering
                \includegraphics[height=5.4cm]{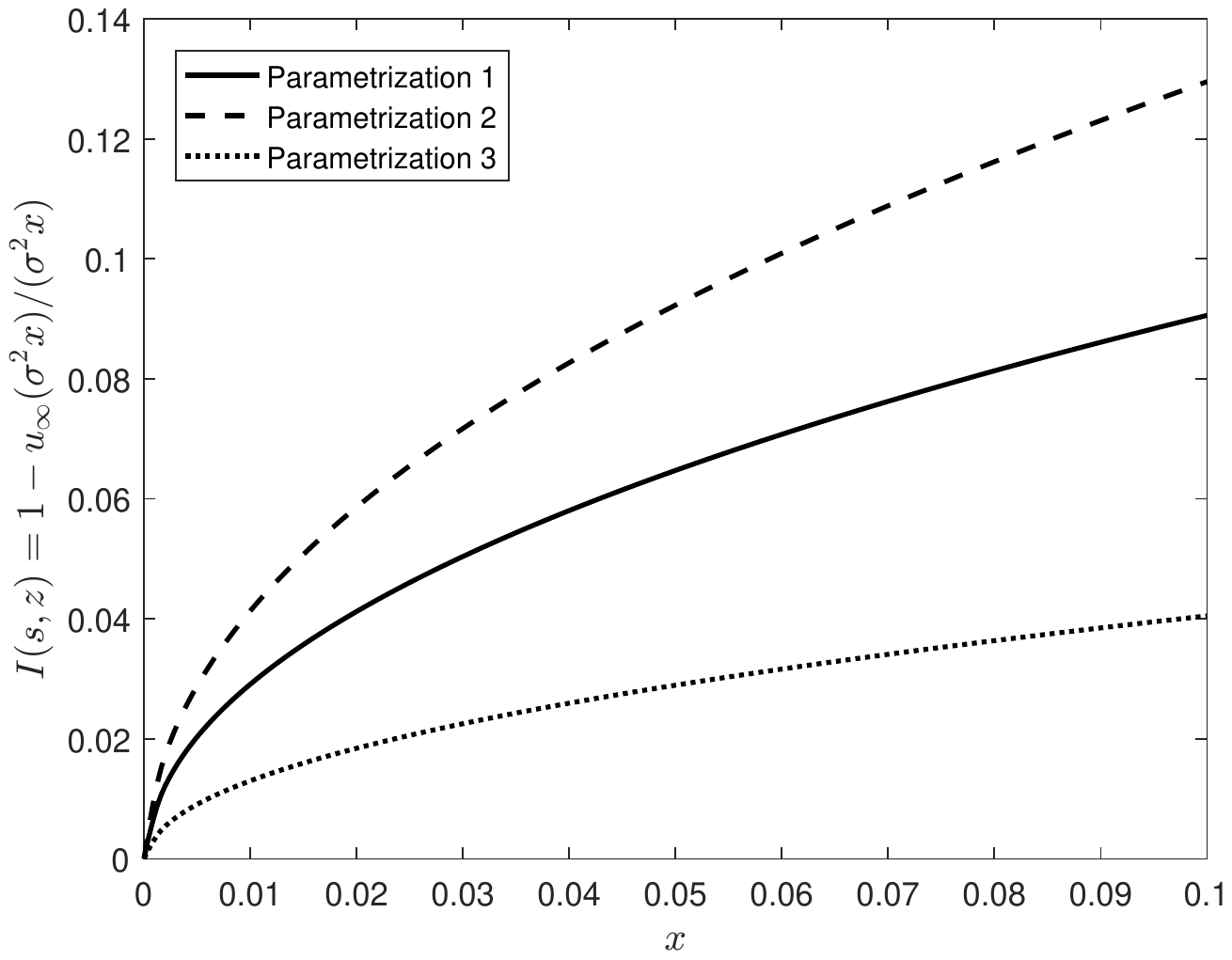}
                \caption{}
        \end{subfigure}%
\begin{subfigure}[b]{0.49\textwidth}
                \centering
                \includegraphics[height=5.4cm]{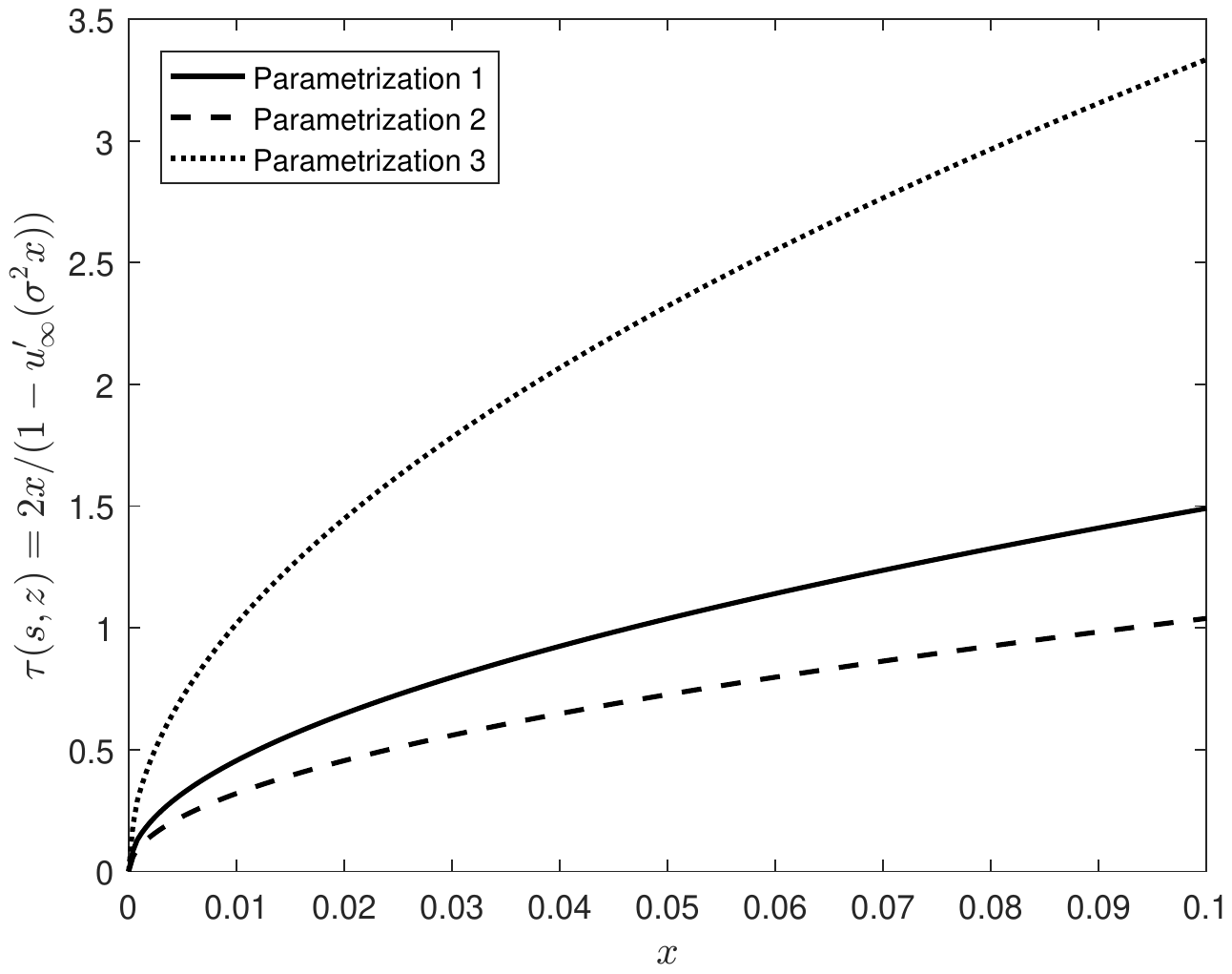}
                \caption{}
        \end{subfigure}
\caption{(a) Relative implementation shortfall; (b) Time to liquidation
assuming constant liquidation speed and no accruing interest, for three
parametrizations in Table \protect\ref{tab:parameters}.}
\label{fig: 2}
\end{figure}

Figure \ref{fig: 3} compares the time to liquidation assuming constant
liquidation speed and no accruing interest, $\tau (s,z)$, with the actual
average time to liquidation, $T(Z=0)$, which was computed based on 10,000
simulations. The initial block order size is fixed at $z=100$ corresponding
to 100,000 shares. The time to liquidation increases with stronger temporary
price impact $\eta $ and the actual actual time to liquidation is longer
than $\tau (s,z)$.

\begin{figure}[t]
\centering\includegraphics[height=5.5cm]{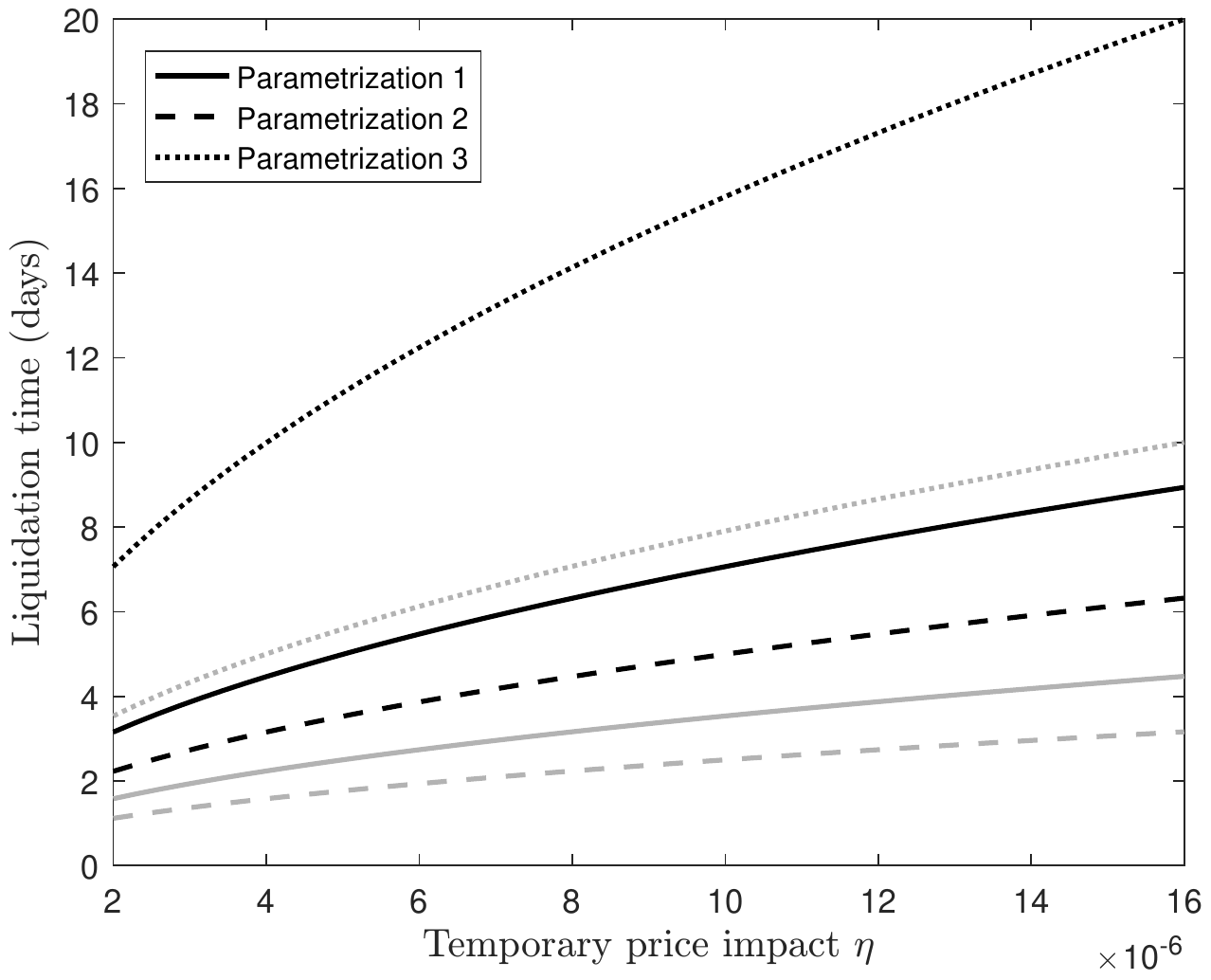}
\caption{Actual average time to liquidation, $T(Z=0)$, based on 10,000
simulations (black lines) and approximate time to liquidation, assuming
constant liquidation speed, $\protect\tau (z,s)$, (grey lines), for three
parametrizations in Table \protect\ref{tab:parameters} and changing values
of the temporary price impact parameter $\protect\eta $.}
\label{fig: 3}
\end{figure}

Figure \ref{fig: 4} shows 10,000 simulations of the liquidation with $%
s=z=100 $ and $\eta =7.5\times 10^{-6}$ calibrated from 
\citet{breen.al.02}%
. All lines are shown until the (stochastic) time of liquidation, $T(Z=0)$,
is reached. In the first column, we observe that, with each of the parameter
sets, the execution time increases when the asset price is falling. On
average, the execution takes 6.17, 4.36 and 13.80 days for the three
parametrizations in Table \ref{tab:parameters}, respectively.

\begin{figure}[t]
\centering%
\begin{subfigure}[b]{0.32\textwidth}
        		\centering
                \includegraphics[width=\textwidth]{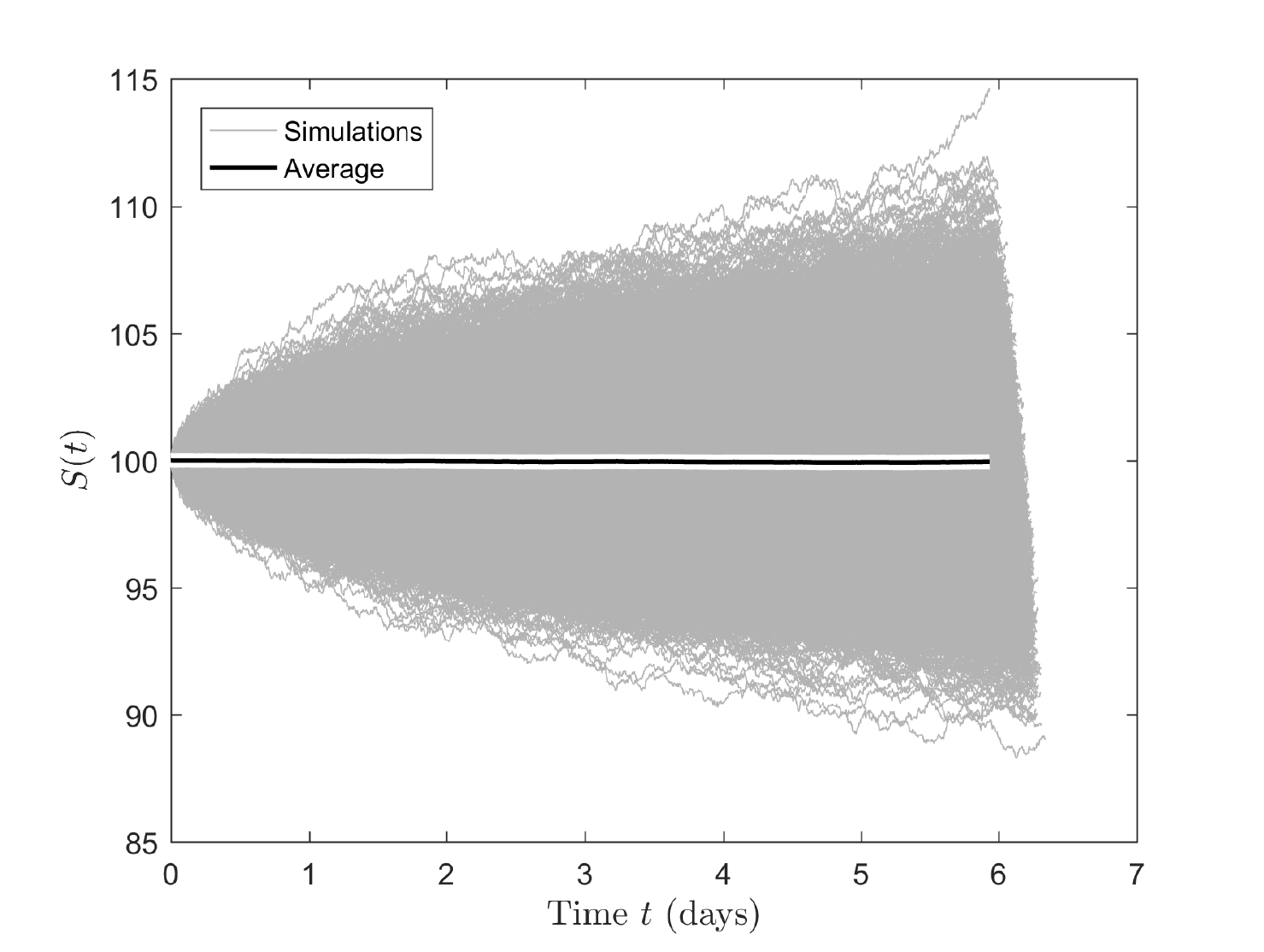}
                \caption{}
        \end{subfigure}%
\begin{subfigure}[b]{0.32\textwidth}
                \centering
                \includegraphics[width=\textwidth]{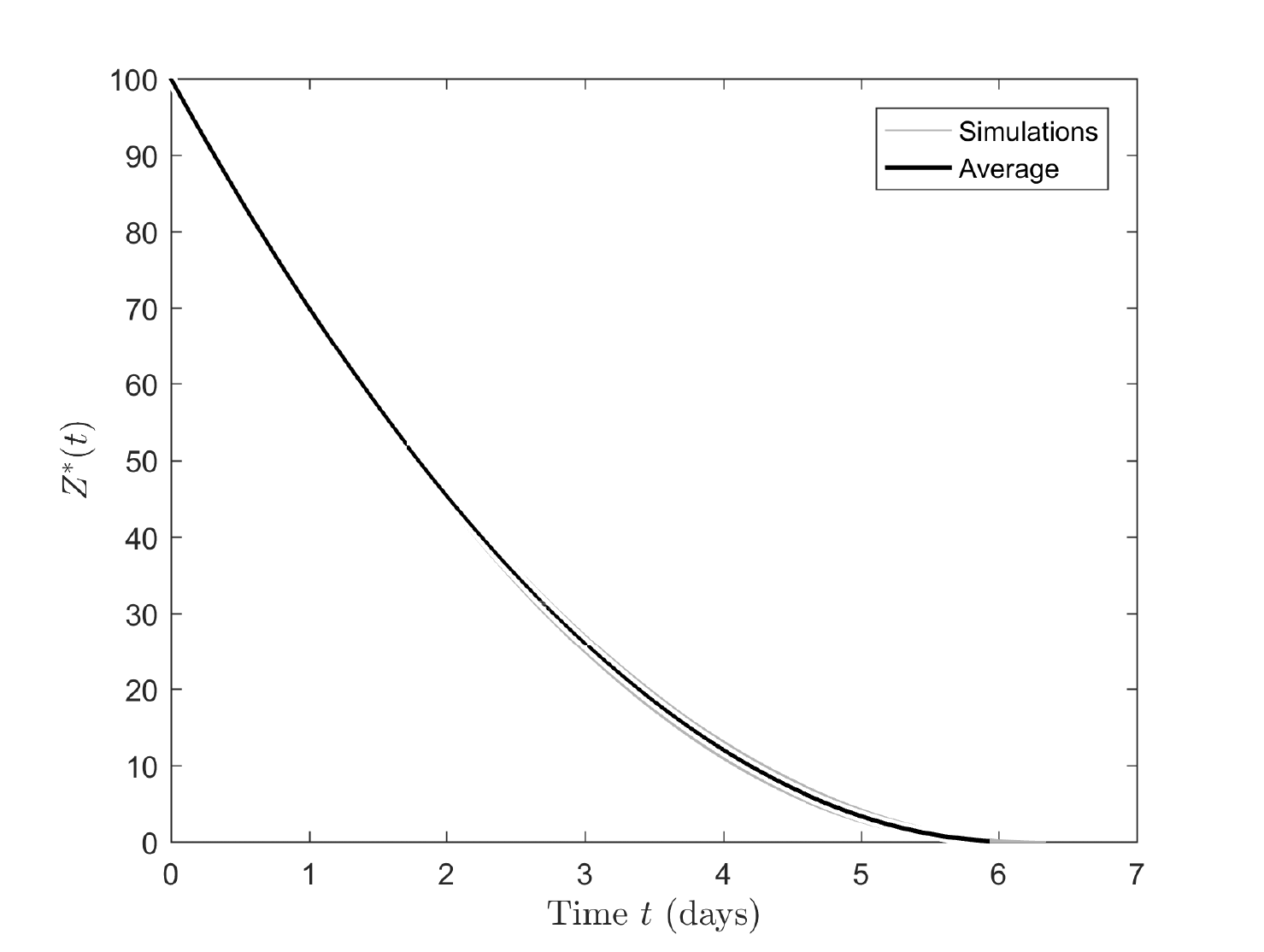}
                \caption{}
        \end{subfigure}%
\begin{subfigure}[b]{0.32\textwidth}
                \centering
                \includegraphics[width=\textwidth]{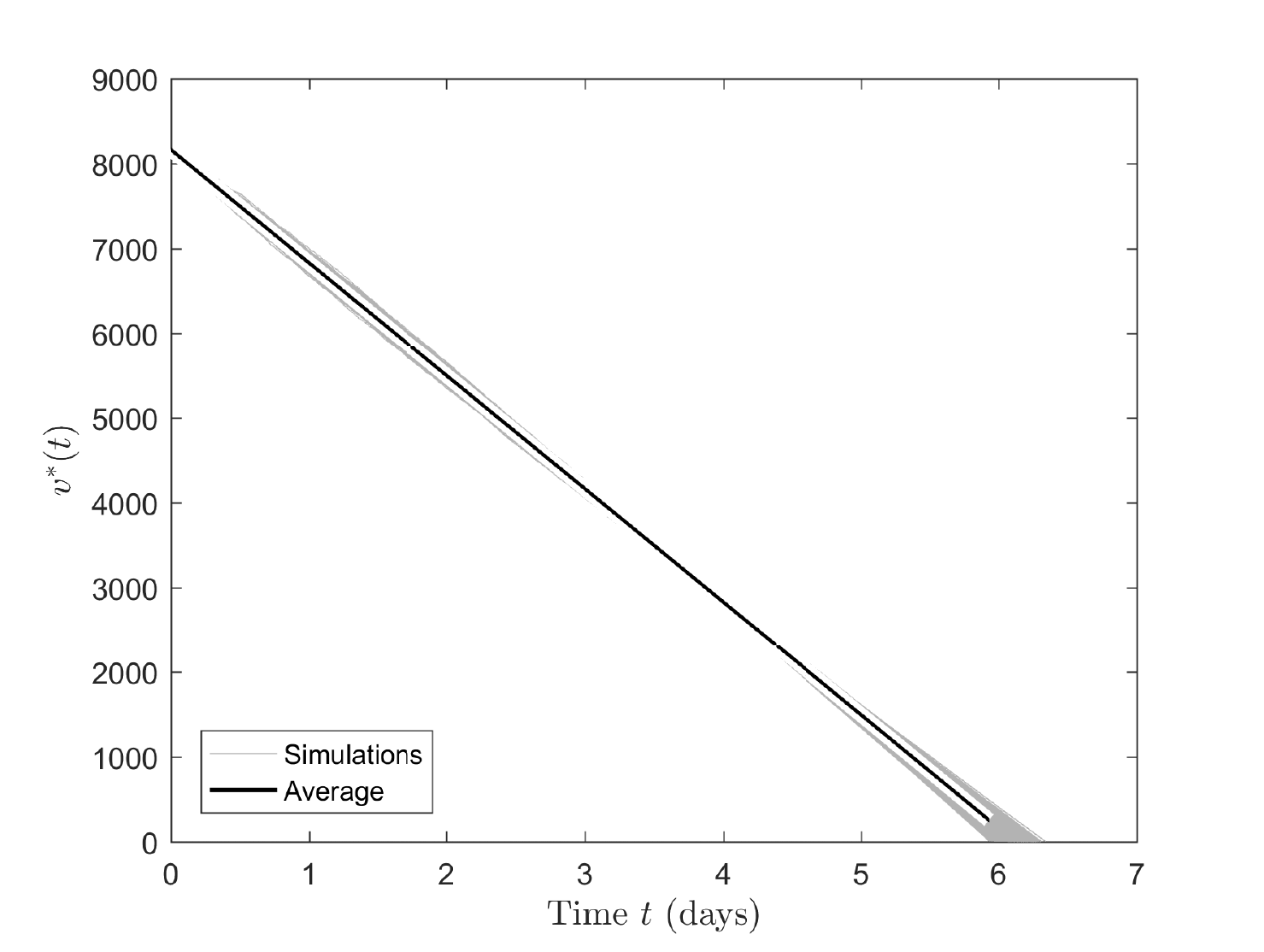}
                \caption{}
        \end{subfigure} 
\begin{subfigure}[b]{0.32\textwidth}
        		\centering
                \includegraphics[width=\textwidth]{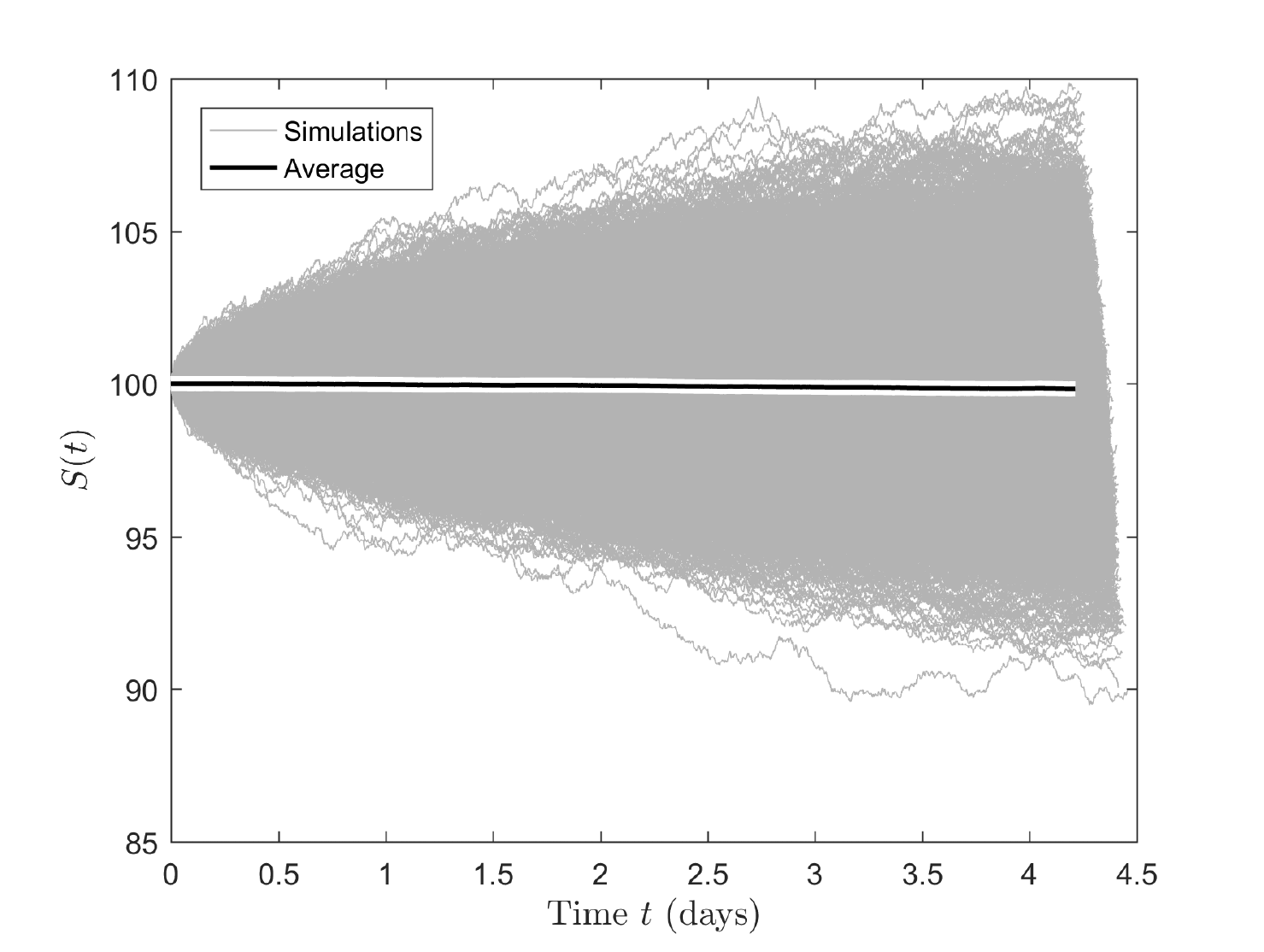}
                \caption{}
        \end{subfigure}%
\begin{subfigure}[b]{0.32\textwidth}
                \centering
                \includegraphics[width=\textwidth]{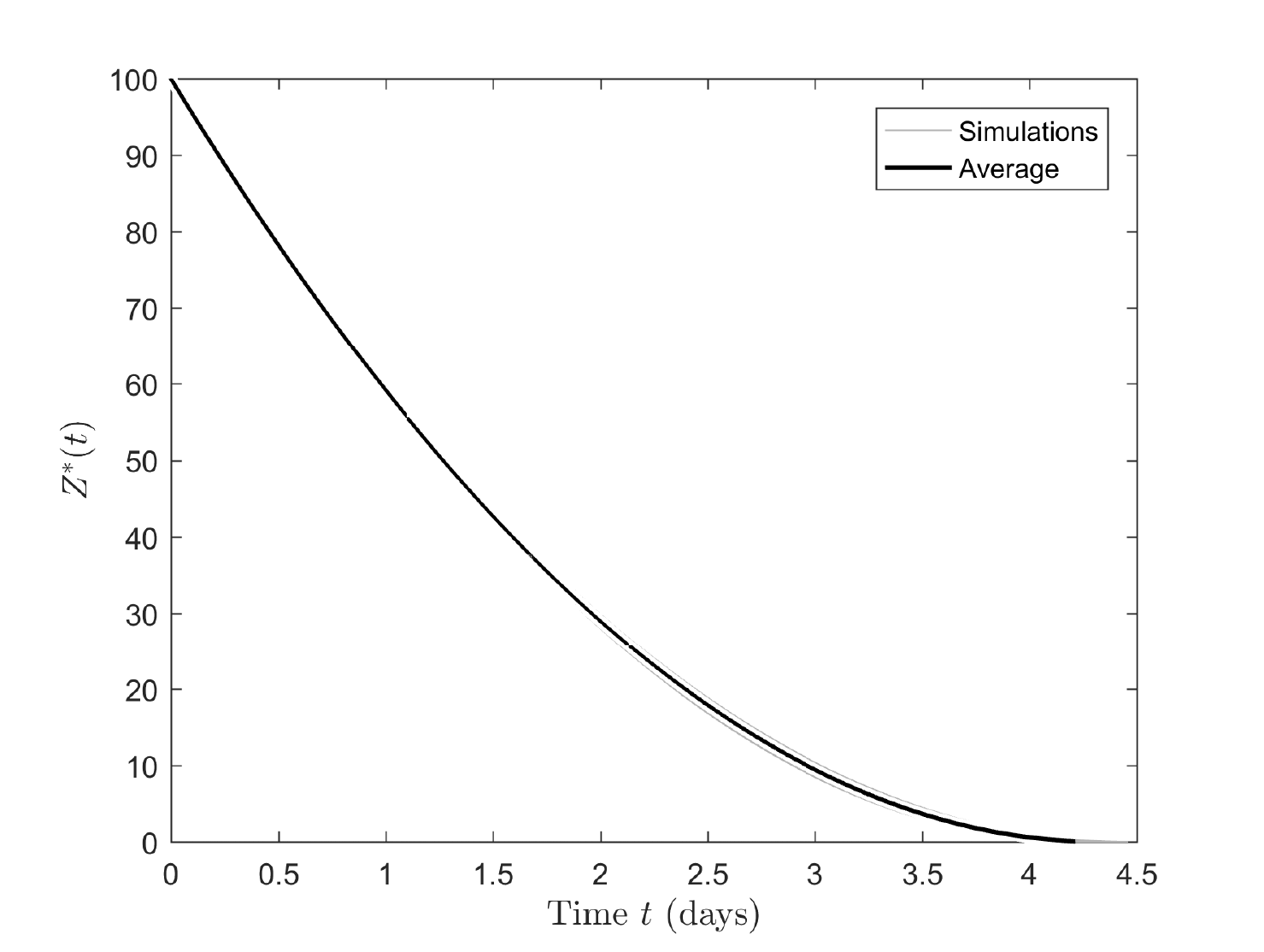}
                \caption{}
        \end{subfigure}%
\begin{subfigure}[b]{0.32\textwidth}
                \centering
                \includegraphics[width=\textwidth]{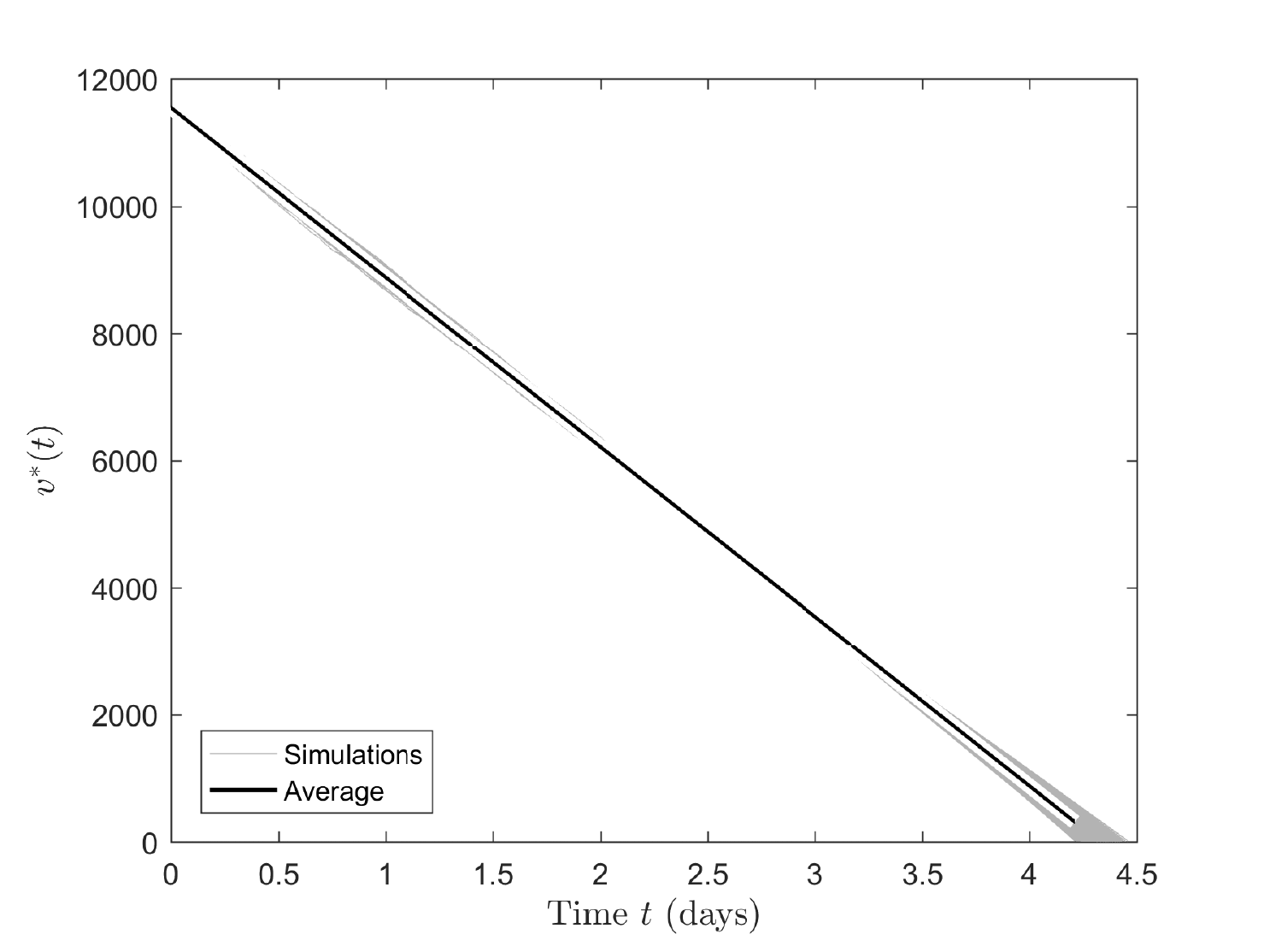}
                \caption{}
        \end{subfigure}
\begin{subfigure}[b]{0.32\textwidth}
        		\centering
                \includegraphics[width=\textwidth]{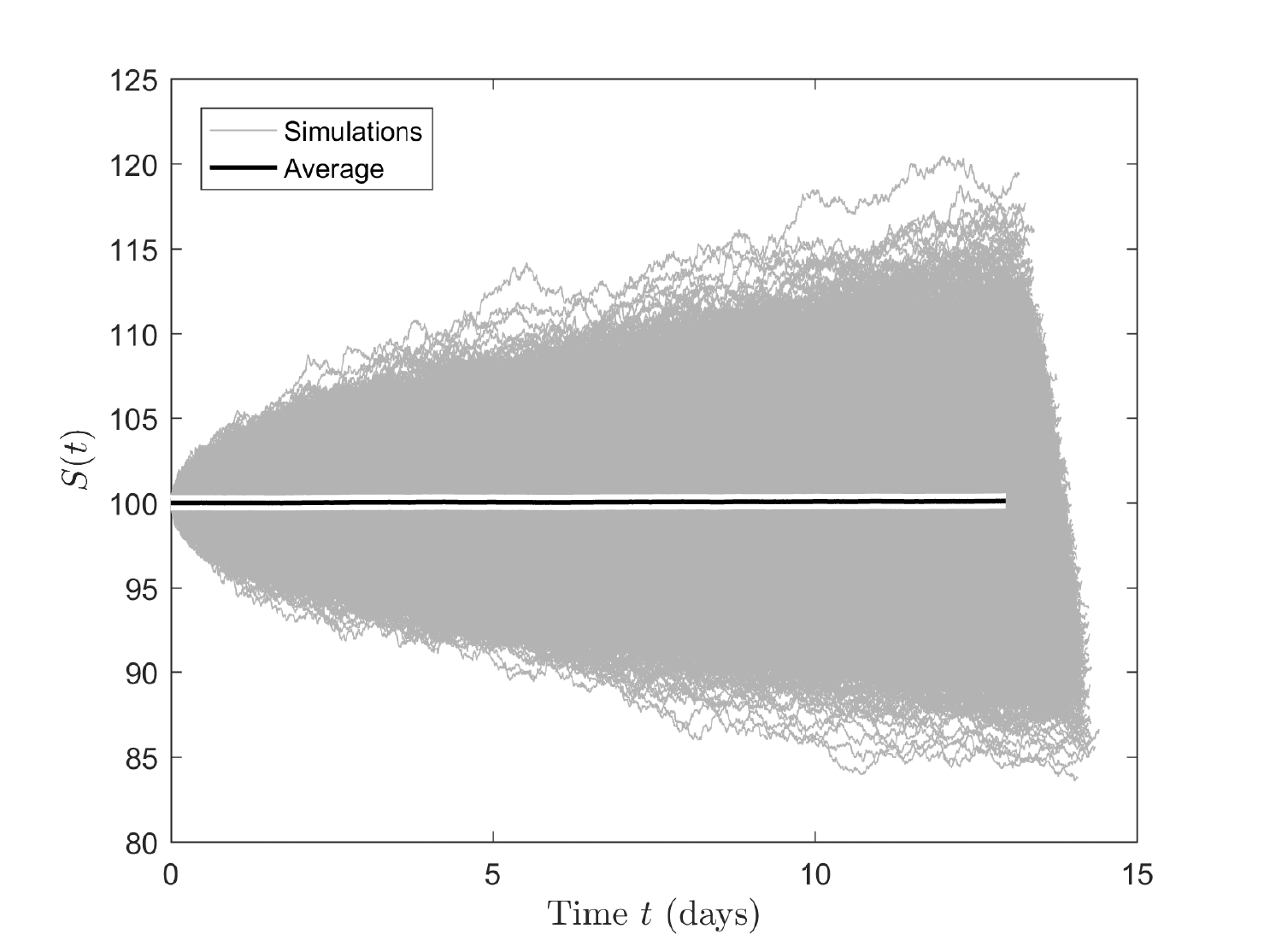}
                \caption{}
        \end{subfigure}%
\begin{subfigure}[b]{0.32\textwidth}
                \centering
                \includegraphics[width=\textwidth]{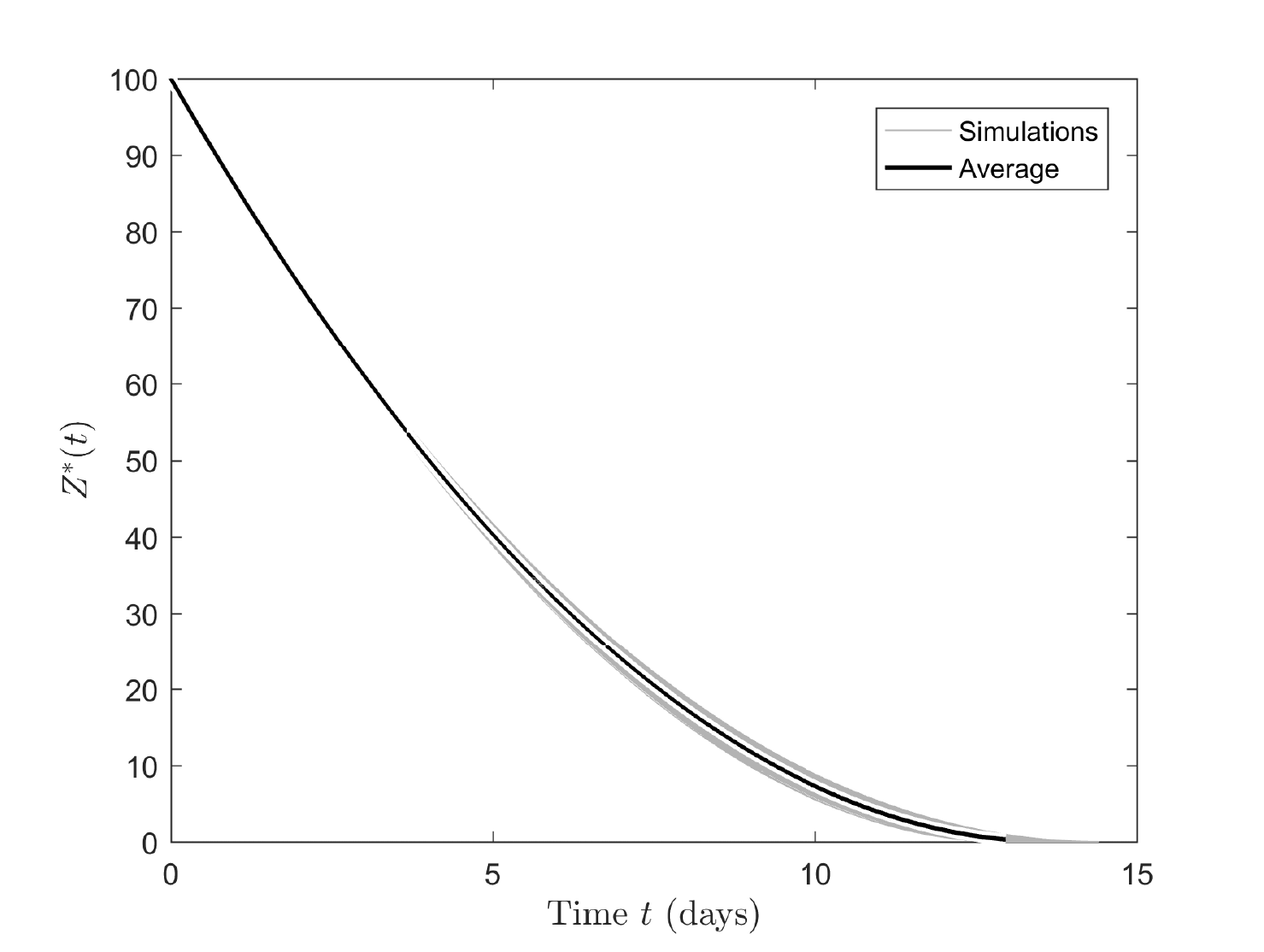}
                \caption{}
        \end{subfigure}%
\begin{subfigure}[b]{0.32\textwidth}
                \centering
                \includegraphics[width=\textwidth]{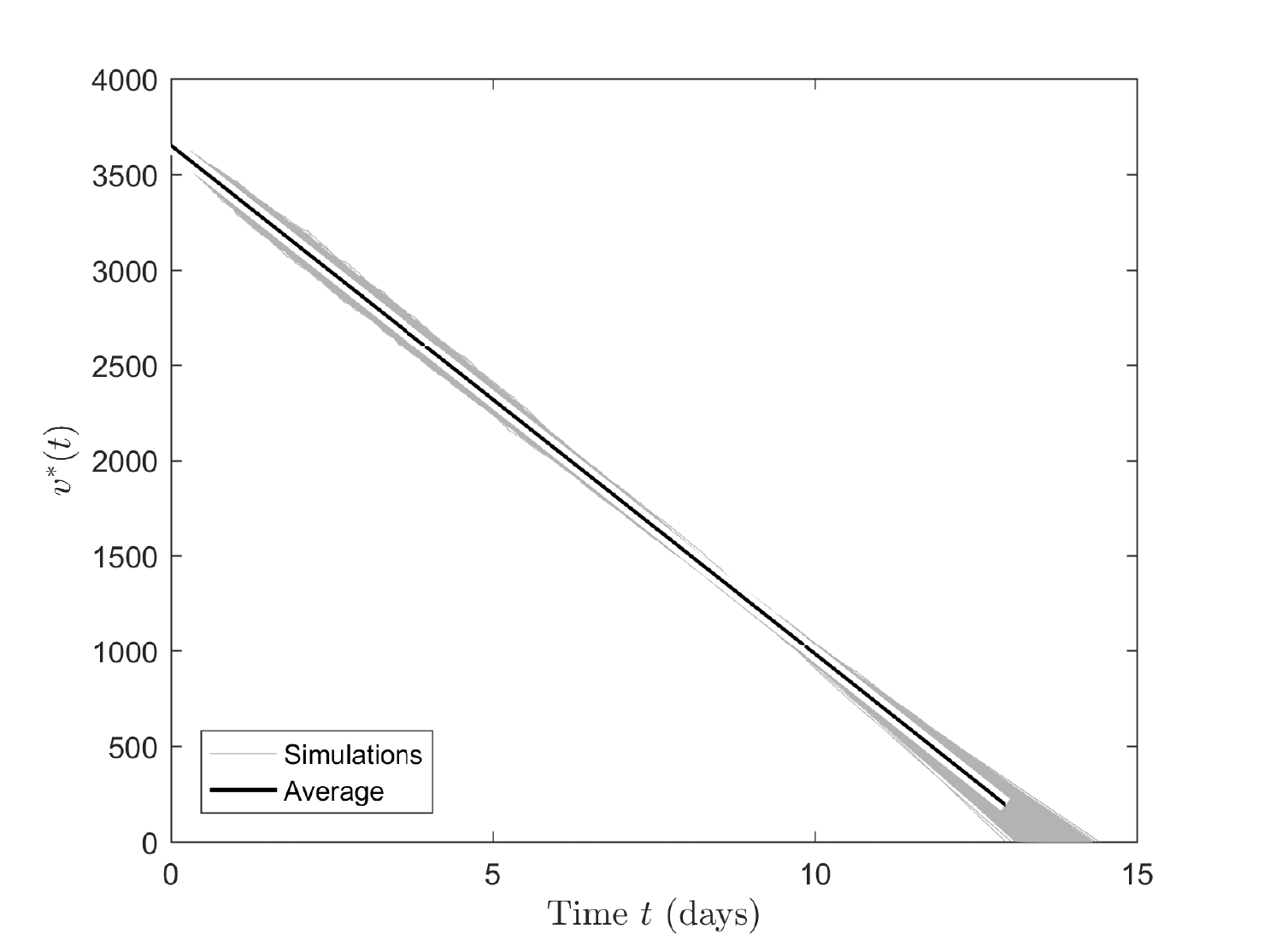}
                \caption{}
        \end{subfigure}
\caption{Each row shows 10,000 simulations of the unaffected price $S(t)$
(first column), inventory $Z^{\ast }(t)$ (second column) and the optimal
strategy $v^{\ast }(t)$ (third column), for one of the three
parametrizations in Table \protect\ref{tab:parameters}.}
\label{fig: 4}
\end{figure}

\section{Conclusions}

We have analyzed optimal liquidation of an asset whose unaffected price
drifts downwards, while assuming that short sales of the asset are ruled out
and the liquidation causes a linear temporary adverse price impact. In this
setting the liquidation time horizon becomes stochastic and is determined
endogenously as part of the optimal liquidation strategy. We have recovered
classical result from the martingale case whereby optimal liquidation always
leads to implementation shortfall, in contrast to previous studies using a
fixed time horizon. While the `raw' impact is linear the optimized impact is
asymptotically proportional to the square root of the total volume of the
order. This conclusion is well supported by empirical evidence.

The HJB\ equation of the new optimization gives rise to a boundary value
problem whose degree of singularity is not covered in the existing
literature. We have proposed a numerical scheme that overcomes the
singularity and we have provided detailed theoretical analysis of the mixed
boundary singular PDE our numerical scheme is based on.

For simplicity our work leaves out permanent impact and considers only
linear utility. We have shown in Section 2 that in the martingale case with
linear utility function the temporary and permanent impacts do not interact.
In a drifting market there will be some degree of interaction, but for
reasons given in Section 2 we suspect it to be rather weak. The precise
nature of this interaction remains an intriguing area for future research.

\textbf{Acknowlegements:} We would like to thank M. Fila, P. Pol\'{a}\v{c}%
ik, P. Quittner and M. Winkler for advice concerning Sections \ref{sect: opt}
and \ref{sect: comp}. We are grateful to two anonymous referees for detailed
comments and to participants of the London Mathematical Finance Seminar for
their feedback. P. Brunovsk\'{y} thankfully acknowledges support of VEGA
grant Nr. 1/0319/15. Thanks also go to V\'{U}B Foundation for its support of
A. \v{C}ern\'{y}'s semester visit of Comenius University in Bratislava
during which this research was initiated.

\section*{References}


\begin{thebibliography}{43}
\expandafter\ifx\csname natexlab\endcsname\relax\def\natexlab#1{#1}\fi
\expandafter\ifx\csname url\endcsname\relax
\def\url#1{\texttt{#1}}\fi
\expandafter\ifx\csname urlprefix\endcsname\relax\def\urlprefix{URL }\fi

\bibitem[{Almgren and Chriss(2000)}]{almgren.chriss.00}
Almgren, R., Chriss, N., 2000. Optimal execution of portfolio transactions.
Journal of Risk 3, 5--39.
\newline\urlprefix\url{http://dx.doi.org/10.21314/JOR.2001.041}

\bibitem[{Almgren et~al.(2005)Almgren, Thum, Hauptmann, and Li}]{almgren.al.05}
Almgren, R., Thum, C., Hauptmann, E., Li, H., 2005. Optimal execution of
portfolio transactions. Risk 18~(7), 57--62.

\bibitem[{Anderson and Moore(1989)}]{anderson.moore.89}
Anderson, B. D.~O., Moore, J.~B., 1989. Optimal Control: {L}inear Quadratic
Methods. Prentice-Hall International.

\bibitem[{Ankirchner et~al.(2016)Ankirchner, Blanchet-Scalliet, and
Eyraud-Loisel}]{ankirchner.al.16}
Ankirchner, S., Blanchet-Scalliet, C., Eyraud-Loisel, A., 2016. Optimal
portfolio liquidation with additional information. Mathematics and Financial
Economics 10~(1), 1--14.
\newline\urlprefix\url{http://dx.doi.org/10.1007/s11579-015-0147-3}

\bibitem[{Ankirchner and Kruse(2013)}]{ankirchner.kruse.13}
Ankirchner, S., Kruse, T., 2013. Optimal trade execution under price-sensitive
risk preferences. Quantitative Finance 13~(9), 1395--1409.
\newline\urlprefix\url{http://dx.doi.org/10.1080/14697688.2012.762613}

\bibitem[{Auzinger et~al.(1999)Auzinger, Koch, Kofler, and
Weinm\"{u}ller}]{auzinger.al.99}
Auzinger, W., Koch, O., Kofler, P., Weinm\"{u}ller, E., \url{
http://citeseerx.ist.psu.edu/viewdoc/download?doi=10.1.1.9.2179&rep=rep1&type=pdf } 1999. The application of shooting to singular boundary value problems.
Technical Report 126/99, Vienna University of Technology, accessed:
2014-01-31.

\bibitem[{Bernstein(1904)}]{bernstein.04}
Bernstein, S., 1904. Sur certaines \'equations diff\'erentielles ordinaires du
second ordre. Comptes Rendus de l'Acad\'emie des Sciences 138, 950--951.

\bibitem[{Bershova and Rakhlin(2013)}]{bershova.rakhlin.13}
Bershova, N., Rakhlin, D., 2013. The non-linear market impact of large trades:
evidence from buy-side order flow. Quantitative Finance 13~(11), 1759--1778.
\newline\urlprefix\url{http://dx.doi.org/10.1080/14697688.2013.861076}

\bibitem[{Bertsimas and Lo(1998)}]{bertsimas.lo.98}
Bertsimas, D., Lo, A.~W., 1998. Optimal control of execution costs. Journal of
Financial Markets 1~(1), 1--50.
\newline\urlprefix\url{http://doi.org/10.1016/S1386-4181(97)00012-8}

\bibitem[{Breen et~al.(2002)Breen, Hodrick, and Korajczyk}]{breen.al.02}
Breen, W.~J., Hodrick, L.~S., Korajczyk, R.~A., 2002. Predicting equity
liquidity. Management Science 48~(4), 470--483.
\newline\urlprefix\url{http://www.jstor.org/stable/822546}

\bibitem[{Brown et~al.(2010)Brown, Carlin, and Lobo}]{brown.al.10}
Brown, D.~B., Carlin, B.~I., Lobo, M.~S., 2010. Optimal portfolio liquidation
with distress risk. Management Science 56~(11), 1997--2014.
\newline\urlprefix\url{http://dx.doi.org/10.1287/mnsc.1100.1235}

\bibitem[{Brunovsk{\'y} et~al.(2013)Brunovsk{\'y}, {\v{C}}ern{\'y}, and
Winkler}]{brunovsky.al.13}
Brunovsk{\'y}, P., {\v{C}}ern{\'y}, A., Winkler, M., 2013. A singular
differential equation stemming from an optimal control problem in financial
economics. Applied Mathematics and Optimization 68~(2), 255--274.
\newline\urlprefix\url{http://dx.doi.org/10.1007/s00245-013-9205-5}

\bibitem[{{\v{C}}ern{\'y}(1999)}]{cerny.99}
{\v{C}}ern{\'y}, A., 1999. Currency crises: {I}ntroduction of spot speculators.
International Journal of Finance and Economics 4~(1), 75--89.
\newline\urlprefix\url{http://dx.doi.org/10.1002/(SICI)1099-1158(199901)4:1<75::AID-IJFE86>3.0.CO;2-J}

\bibitem[{Chen et~al.(2015)Chen, Feng, and Peng}]{chen.al.15}
Chen, J., Feng, L., Peng, J., 2015. Optimal deleveraging with nonlinear
temporary price impact. European Journal of Operational Research 244~(1),
240--247.
\newline\urlprefix\url{http://dx.doi.org/10.1016/j.ejor.2014.12.034}

\bibitem[{Chen et~al.(2014)Chen, Feng, Peng, and Ye}]{chen.al.14}
Chen, J., Feng, L., Peng, J., Ye, Y., 2014. Analytical results and efficient
algorithm for optimal portfolio deleveraging with market impact. Operations
Research 62~(1), 195--206.
\newline\urlprefix\url{http://dx.doi.org/10.1287/opre.2013.1222}

\bibitem[{Cheridito and Sepin(2014)}]{cheridito.sepin.14}
Cheridito, P., Sepin, T., 2014. Optimal trade execution under stochastic
volatility and liquidity. Applied Mathematical Finance 21~(4), 342--362.
\newline\urlprefix\url{http://dx.doi.org/10.1080/1350486X.2014.881005}

\bibitem[{De~Coster and Habets(2006)}]{decoster.habets.06}
De~Coster, C., Habets, P., 2006. Two-point boundary value problems: lower and
upper solutions. Vol. 205 of Mathematics in Science and Engineering. Elsevier
B. V., Amsterdam.

\bibitem[{Donier et~al.(2015)Donier, Bonart, Mastromatteo, and
Bouchaud}]{donier.al.15}
Donier, J., Bonart, J., Mastromatteo, I., Bouchaud, J.-P., 2015. A fully
consistent, minimal model for non-linear market impact. Quantitative Finance
15~(7), 1109--1121.
\newline\urlprefix\url{http://dx.doi.org/10.1080/14697688.2015.1040056}

\bibitem[{Farmer et~al.(2013)Farmer, Gerig, Lillo, and
Waelbroeck}]{farmer.al.13}
Farmer, J.~D., Gerig, A., Lillo, F., Waelbroeck, H., 2013. How efficiency
shapes market impact. Quantitative Finance 13~(11), 1743--1758.
\newline\urlprefix\url{http://dx.doi.org/10.1080/14697688.2013.848464}

\bibitem[{Fleming and Soner(2006)}]{fleming.soner.06}
Fleming, W.~H., Soner, H.~M., 2006. Controlled {M}arkov processes and viscosity
solutions, 2nd Edition. Vol.~25 of Stochastic Modelling and Applied
Probability. Springer, New York.

\bibitem[{Forsyth(2011)}]{forsyth.11}
Forsyth, P.~A., 2011. A {H}amilton-{J}acobi-{B}ellman approach to optimal trade
execution. Applied Numerical Mathematics 61~(2), 241--265.
\newline\urlprefix\url{http://dx.doi.org/10.1016/j.apnum.2010.10.004}

\bibitem[{Forsyth et~al.(2012)Forsyth, Kennedy, Tse, and
Windcliff}]{forsyth.al.12}
Forsyth, P.~A., Kennedy, J.~S., Tse, S.~T., Windcliff, H., 2012. Optimal trade
execution: a mean quadratic variation approach. Journal of Economic Dynamics
\& Control 36~(12), 1971--1991.
\newline\urlprefix\url{http://dx.doi.org/10.1016/j.jedc.2012.05.007}

\bibitem[{Gatheral(2010)}]{gatheral.10}
Gatheral, J., 2010. No-dynamic-arbitrage and market impact. Quantitative
Finance 10~(7), 749--759.
\newline\urlprefix\url{http://dx.doi.org/10.1080/14697680903373692}

\bibitem[{Gatheral and Schied(2011)}]{gatheral.schied.11}
Gatheral, J., Schied, A., 2011. Optimal trade execution under geometric
{B}rownian motion in the {A}lmgren and {C}hriss framework. International
Journal of Theoretical and Applied Finance 14~(3), 353--368.
\newline\urlprefix\url{http://dx.doi.org/10.1142/S0219024911006577}

\bibitem[{Hasbrouck(1991)}]{hasbrouck.91}
Hasbrouck, J., 1991. Measuring the information content of stock trades. Journal
of Finance 46~(1), 179--207.
\newline\urlprefix\url{http://www.jstor.org/stable/2328693}

\bibitem[{Henderson and Hobson(2013)}]{henderson.hobson.13}
Henderson, V., Hobson, D., 2013. Risk aversion, indivisible timing options, and
gambling. Operations Research 61~(1), 126--137.
\newline\urlprefix\url{http://dx.doi.org/10.1287/opre.1120.1131}

\bibitem[{Henry(1981)}]{henry.81}
Henry, D., 1981. Geometric theory of semilinear parabolic equations. Vol. 840
of Lecture Notes in Mathematics. Springer-Verlag, Berlin-New York.

\bibitem[{Jamet(1969)}]{jamet.70}
Jamet, P., 1969. On the convergence of finite-difference approximations to
one-dimensional singular boundary-value problems. Numerische Mathematik
14~(4), 355--378.
\newline\urlprefix\url{http://dx.doi.org/10.1007/BF02165591}

\bibitem[{Kyle(1985)}]{kyle.85}
Kyle, A.~S., 1985. Continuous auctions and insider trading. Econometrica
53~(6), 1315--1335.
\newline\urlprefix\url{http://www.jstor.org/stable/1913210}

\bibitem[{Ladyzhenskaya et~al.(1968)Ladyzhenskaya, Solonnikov, and
Uraltseva}]{ladyzhenskaya.al.68}
Ladyzhenskaya, O.~A., Solonnikov, V.~A., Uraltseva, N.~N., 1968. Linear and
quasilinear equations of parabolic type. Translated from the Russian by S.
Smith. Translations of Mathematical Monographs, Vol. 23. American
Mathematical Society, Providence, R.I.

\bibitem[{Liang(2009)}]{liang.09}
Liang, J., 2009. A singular initial value problem and self-similar solutions of
a nonlinear dissipative wave equation. Journal of Differential Equations
246~(2), 819--844.
\newline\urlprefix\url{http://dx.doi.org/10.1016/j.jde.2008.07.022}

\bibitem[{Lieberman(1996)}]{lieberman.96}
Lieberman, G.~M., 1996. Second order parabolic differential equations. World
Scientific Publishing Co., Inc., River Edge, NJ.
\newline\urlprefix\url{http://dx.doi.org/10.1142/3302}

\bibitem[{Lorenz and Almgren(2011)}]{lorenz.almgren.11}
Lorenz, J., Almgren, R., 2011. Mean-variance optimal adaptive execution.
Applied Mathematical Finance 18~(5), 395--422.
\newline\urlprefix\url{http://dx.doi.org/10.1080/1350486X.2011.560707}

\bibitem[{Mamer(1986)}]{mamer.86}
Mamer, J.~W., 1986. Successive approximations for finite horizon, semi-{M}arkov
decision processes with application to asset liquidation. Operations Research
34~(4), 638--644.
\newline\urlprefix\url{http://dx.doi.org/10.1287/opre.34.4.638}

\bibitem[{Perold(1988)}]{perold.88}
Perold, A., 1988. The implementation shortfall: {P}aper vs. reality. Journal of
Portfolio Management 14~(3), 4--9.

\bibitem[{Quittner(2015)}]{quittner.15}
Quittner, P., 2015. Higher order asymptotics of solutions of a singular {ODE}.
Asymptotic Analysis 94~(3-4), 293--308.
\newline\urlprefix\url{http://dx.doi.org/10.3233/ASY-151314}

\bibitem[{Schied(2013)}]{schied.13}
Schied, A., 2013. Robust strategies for optimal order execution in the
{A}lmgren-{C}hriss framework. Applied Mathematical Finance 20~(3), 264--286.
\newline\urlprefix\url{http://dx.doi.org/10.1080/1350486X.2012.683963}

\bibitem[{Schied and Sch{\"o}neborn(2009)}]{schied.schoneborn.09}
Schied, A., Sch{\"o}neborn, T., 2009. Risk aversion and the dynamics of optimal
liquidation strategies in illiquid markets. Finance \& Stochastics 13~(2),
181--204.
\newline\urlprefix\url{http://dx.doi.org/10.1007/s00780-008-0082-8}

\bibitem[{Subramanian and Jarrow(2001)}]{subramanian.jarrow.01}
Subramanian, A., Jarrow, R.~A., 2001. The liquidity discount. Mathematical
Finance 11~(4), 447--474.
\newline\urlprefix\url{http://dx.doi.org/10.1111/1467-9965.00124}

\bibitem[{Ting et~al.(2007)Ting, Warachka, and Zhao}]{ting.al.07}
Ting, C., Warachka, M., Zhao, Y., 2007. Optimal liquidation strategies and
their implications. Journal of Economic Dynamics \& Control 31~(4),
1431--1450.
\newline\urlprefix\url{http://dx.doi.org/10.1016/j.jedc.2006.07.003}

\bibitem[{T\'{o}th et~al.(2016)T\'{o}th, Eisler, and Bouchaud}]{toth.al.16}
T\'{o}th, B., Eisler, Z., Bouchaud, J.-P., 2016. The square-root impace law
also holds for option markets. Wilmott 2016~(85), 70--73.
\newline\urlprefix\url{http://dx.doi.org/10.1002/wilm.10537}

\bibitem[{Weinm{\"u}ller(1984)}]{weinmuller.84}
Weinm{\"u}ller, E., 1984. A difference method for a singular boundary value
problem of second order. Mathematics of Computation 42~(166), 441--464.
\newline\urlprefix\url{http://dx.doi.org/10.2307/2007595}

\bibitem[{Weinm{\"u}ller(1986)}]{weinmuller.86}
Weinm{\"u}ller, E., 1986. On the numerical solution of singular boundary value
problems of second order by a difference method. Mathematics of Computation
46~(173), 93--117.
\newline\urlprefix\url{http://dx.doi.org/10.2307/2008217}

\end{thebibliography}
\end{document}